\newtheorem{definition}{Definition}
\newtheorem{lemma}{Lemma}
\newtheorem{theorem}{Theorem}
\newtheorem{remark}{Remark}
\newtheorem{prop}{Proposition}
\newtheorem{example}{Example}
\global\long\def\RR{\mathbb{R}}
\global\long\def\EE{\mathbb{E}}
\global\long\def\PP{\mathbb{P}}
\global\long\def\FF{\mathbb{F}}
\global\long\def\11{\mathbbm{1}}
\def \calA {\mathcal{A}}
\def \calB {\mathcal{B}}
\def \calC {\mathcal{C}}
\def \calD {\mathcal{D}}
\def \calE {\mathcal{E}}
\def \calH {\mathcal{H}}
\def \calI {\mathcal{I}}
\def \calK {\mathcal{K}}
\def \calL {\mathcal{L}}
\def \calM {\mathcal{M}}
\def \calN {\mathcal{N}}
\def \calP {\mathcal{P}}
\def \calR {\mathcal{R}}
\def \calS {\mathcal{S}}
\def \calU {\mathcal{U}}
\def \calW {\mathcal{W}}
\def \calX {\mathsf{X}}
\def \RR {\mathbb{R}}
\def \I {I}
\newcommand{\curlys}[1]{\{{#1}\}}
\newcommand{\curly}[1]{\left\{{#1}\right\}}
\newcommand{\sq}[1]{\left[{#1}\right]}
\newcommand{\round}[1]{\left({#1}\right)}
\newcommand{\absolute}[1]{\left|{#1}\right|}
\def \bV {W}
\def \bA {{A_R}}
\def \bB {{B_R}}
\def \bE {\bar{E}}
\def \bAn {A_R^n}
\def \bBn {B_R^n}
\def \px {\bar{P}_X}
\def \pxhat {{P}_{\hat{X}}}
\def \prevTC {\bar{P}_{X|\hat{X}}}
\def \Tx {T_\delta({X})}
\def \Txhat {T_\delta(\hat{X})}
\def \Txxhat {T_\delta(X,\hat{X})}
\def \Txcond{T_\delta(X|\hat{x}^n)}
\def \codebook{\calC^{(n)}}
\def \I {\mathbbm{1}}
\def \codeDistribution {\lambda_{\hat{X}^n}(\hat{x}^n)}
\def\tensor{\otimes}
\def\rhohata{\hat{\rho}_a^{\bB}}
\def\rhohatan{\hat{\rho}_{a^n}^{\bB}}
\def\rhotildea{\tilde{\rho}_{a}^{\bB}}
\def\rhotildean{\tilde{\rho}_{a^n}^{\bB}}
\def\lambdaaA{\lambda_a^{A}}
\def\lambdaanA{\lambda_{a^n}^{A}}
\def\lambdaaB{\lambda_a^{\bB}}
\def\lambdaaE{\lambda_a^{E}}
\def\TDelta{\mathcal{T}_{\delta}^{(n)}}
\def\zetahat{|{\hat\zeta}\rangle}
\def \eqand {\mbox{ and }}
\def \phirho {\ket{\psi_\rho}^{B\bB}}
\def \epsFourRoot {\sqrt{\epsilon}}
\def \Imset{\calI_\calE^{(m)}}
\def \Wchannel {\calN_{W}}
\def\IndSP{\mathbbm{1}_{\{\mbox{sP}\}}}
\def\normMK{\frac{1}{\sqrt{(1-\epsFourRoot)|\calM||\calK|}}}
\def\normsqMK{\frac{1}{{(1-\epsFourRoot)|\calM||\calK|}}}
\def\deq{\mathrel{\ensurestackMath{\stackon[1pt]{=}{\scriptstyle\Delta}}}}
\def \calA {\mathcal{A}}
\def \calB {\mathcal{B}}
\def \calC {\mathcal{C}}
\def \calD {\mathcal{D}}
\def \calE {\mathcal{E}}
\def \calH {\mathcal{H}}
\def \calI {\mathcal{I}}
\def \calK {\mathcal{K}}
\def \calL {\mathcal{L}}
\def \calM {\mathcal{M}}
\def \calN {\mathcal{N}}
\def \calP {\mathcal{P}}
\def \calR {\mathcal{R}}
\def \calS {\mathcal{S}}
\def \calU {\mathcal{U}}
\def \calW {\mathcal{W}}
\def \calX {\mathsf{X}}
\def \RR {\mathbb{R}}
\def \PP {\mathbb{P}}
\def \Xhat {\hat{X}}
\def \xhat {\hat{x}}
\def \px {{P}_X}
\def \pxhat {{P}_{\hat{X}}}
\def \prevTC {{W}_{X|\hat{X}}}
\def \Tx {\mathcal{T}_{\hat{\delta}}^{(n)}({X})}
\def \Txhat {\mathcal{T}_\delta^{(n)}(\hat{X})}
\def \Txxhat {\mathcal{T}_{\hat{\delta}}^{(n)}(X,\hat{X})}
\def \Txcond{\mathcal{T}_\delta^{(n)}(X|\hat{x}^n)}
\def \codebook{\calC}
\def \encodern {\mathcal{E}^{(n)}}
\def \decodern {\mathcal{D}^{(n)}}
\def \Ipmf {\I_{\curly{\mbox{\normalfont sPMF}}}}
\def \EE {\mathbb{E}}
\def \I {\mathbbm{1}} 
\def \codeDistribution {\mathbb{P}}
\def \Txqc {\mathcal{T}_{{\delta}}^{(n)}({X})}
\def \xn{x^n}
\def \Xn{X^n}
\def \sourcedo{{\rho^{B}}}
\def \refstate{B_{{R}}}
\def\tensor{\otimes}
\def\rhotilde{\tilde{\rho}}
\def \targetpx{P_X}
\def\TDelta{\mathcal{T}_{\delta}}
\def\zetahat{|{\hat\zeta}\rangle}
\def \eqand {\mbox{ and }}
\def\deq{\mathrel{\ensurestackMath{\stackon[1pt]{=}{\scriptstyle\Delta}}}}
\title{Lossy Quantum Source Coding  with  Global Error Criterion using  Posterior Reference Map 
}
\title{\huge 
Lossy Quantum Source Coding with a Global Error Criterion based on  a Posterior Reference Map
}
\author{\IEEEauthorblockN{  Touheed Anwar Atif, Mohammad Aamir Sohail, and S. Sandeep
    Pradhan}\\
\IEEEauthorblockA{Department of Electrical Engineering and Computer Science,\\
University of Michigan, Ann Arbor, MI 48109, USA.\\
Email: \tt touheed@umich.edu, mdaamir@umich.edu, pradhanv@umich.edu}
}
\begin{document}

\maketitle
\vspace{-0.4in}
\begin{abstract}
     We consider the lossy quantum source coding problem where the task is to compress a given quantum source below its von Neumann entropy. Inspired by the duality connections between the rate-distortion and channel coding problems in the classical setting, we propose a new formulation for the lossy quantum source coding problem. 
     This formulation differs from the existing quantum rate-distortion theory in two aspects. 
     Firstly, we require that the reconstruction of the compressed quantum source fulfill a global error constraint as opposed to the sample-wise local error criterion used in the standard rate-distortion setting.
     Secondly, instead of a distortion observable, we employ the notion of a backward quantum  channel, which we refer to as a ``posterior reference map'', to measure the reconstruction error.
     Using these, we characterize the asymptotic performance limit of the lossy quantum source coding problem in terms of single-letter coherent information of the given posterior reference map. 
     We demonstrate a protocol to encode (at the specified rate) and decode, with the reconstruction satisfying the provided global error criterion, and therefore achieving the asymptotic performance limit. 
     The protocol is constructed by decomposing coherent information as a difference of two Holevo information quantities, inspired from prior works in quantum communication problems.
    To further support the findings, we develop analogous formulations for the quantum-classical and classical variants and express the asymptotic performance limit in terms of single-letter mutual information quantities with respect to appropriately defined channels analogous to posterior reference maps. 
    We also provide various examples for the three formulations, and shed light on their connection to the standard rate-distortion formulation wherever possible.
\end{abstract}

\section{Introduction}
\label{QLSC:sec:intro}


A fundamental problem from an information theoretic perspective is the asymptotic characterization of the rate required to compress a source that can be recovered to a certain measurable degree. Such a problem in quantum information theory is referred to as quantum source coding or a quantum data compression problem.
In the lossless regime, Schumacher \cite{schumacher1995quantum,jozsa1994new} proved that a quantum source could be compressed at a rate given by von Neumann entropy while incurring a very small error between the reconstruction and the source state. The error in this model is defined for the entire block, also called as block error or \emph{global error}.
Considering the block error, a strong converse was also proved in the lossless regime \cite{winter1999coding,baghali2022strong}, which states that it is impossible to achieve any rate below von Neumann entropy even when the asymptotic probability of block error is relaxed from being (almost) zero.

As for the lossy regime, where the objective is to further reduce the rate at the expense of increased but bounded error, Barnum  \cite{barnum2000quantum} conjectured  minimal coherent information as a candidate 
in characterizing the asymptotic performance limit. Generalizing the formulation from the classical rate-distortion theory \cite{shannon1959coding}, Barnum in  \cite{barnum2000quantum} introduced a \emph{local} distortion criterion as averaged symbol-wise entanglement fidelity based on  marginal operations (partial trace) between the reconstruction and the reference of the original source. 
In \cite{datta2012quantum}, Datta \emph{et. al} obtained a regularized expression for the quantum rate-distortion
distortion function in terms of the entanglement of purification. Further, the authors also formulated the entanglement-assisted quantum rate-distortion problem and characterized its asymptotic performance limit using a single-letter expression. Wilde \emph{et. al} further refined the characterization of the quantum rate-distortion function in terms of regularized entanglement of formation, and also generalized the problem setup to various scenarios, including side information in \cite{wilde2013auxiliary}.  Works toward the asymptotic simulation of a memoryless quantum
channel in \cite{berta2011quantum,bennett2014quantum}  have shown to be useful in achieving the above results, in particular, the entanglement-assisted formulations. Authors in \cite{datta2013quantum} formulated a quantum-to-classical rate-distortion problem and provided a single-letter formula. 
A rate-distortion version of the quantum state redistribution task \cite{devetak2008exact,luo2009channel} was considered in \cite{khanian2021rate}. Investigations on a rate-distortion framework of generic mixed quantum sources have been the focus of \cite{khanian2022general,baghali2022rate}.  Other works that addressed related problems include \cite{koashi2001compressibility,devetak2002quantum,winter2002compression,datta2013one,hsieh2016channel,salek2018quantum,anshu2019convex}.

 In this work, we  consider a new formulation of the problem  
 of lossy quantum source coding, and  characterize a rate function, no larger than von Neumann entropy,  while allowing for bounded error in the reconstruction. We use a global error criterion as opposed to the  approach of local symbol-wise error studied in the literature.
The problem we consider is without any shared entanglement resources between the encoder and the decoder. We motivate this formulation with the following observations.

The local error criterion in the quantum rate-distortion framework is inspired by the corresponding additive local single-letter distortion criterion in the classical source coding formulation of Shannon \cite{shannon1959coding}, where a  single-letter characterization is available.
The motivation for considering a local criterion is the strong converse of the lossless source coding theorem which states that the entropy bound cannot be breached even when the asymptotic probability of block error is relaxed to any number in $(0,1)$ \cite[Theorem 1.1]{csiszar2011information}.

In \cite{shannon1959coding,csiszar2011information}, a duality connection between the source coding problem and the channel coding problem was observed. These problems were 
interpreted in terms of a covering versus packing perspective. 
In both problems, the same information measure, namely the mutual information, captures the asymptotic performance limits. 
A similar duality connection exists between the classical-quantum communication problem \cite{holevo1998capacity,schumacher1997sending} and the quantum-classical source coding problem \cite{winter, datta2013quantum}, with the performance limits of both problems characterized in terms of single-letter Holevo information quantities \cite{holevo2019quantum}. This has been further explored in \cite{cheng2019duality}.
In the fully quantum setting, 
from this standpoint, its well known that the quantum  channel coding problem 
has 
 an asymptotic performance limit characterized using regularized coherent information \cite{lloyd1997capacity,shor2002quantum,devetak2005private,hayden2008decoupling}. 
Among others, Devetak developed a proof of  this result by employing a coherent approach to covering and packing, and  combined them cohesively, inspired by his work on the private channel capacity problem \cite{devetak2005private}. 
Coherent information can be interpreted in terms of packing of subspaces as elucidated in \cite{lloyd1997capacity}. 
Quantum error-correcting codes have been extensively studied along these lines in the coding theory literature, e.g., quantum Hamming bound \cite{nielsen2002quantum}.
This leads us to the question: why is such a limit based on coherent information absent for the lossy quantum source compression problem?

Toward answering this question, we take a closer look at the classical discrete memoryless setting.  We find that in addition to Shannon's pioneering work of characterizing the rate-distortion problem \cite{shannon1959coding,berger1975rate}, there have been several works discussing the lossy source compression problem. A concept that has received particular attention is the notion of a backward channel \cite[Problem 8.3]{csiszar2011information}, which characterizes the posterior distribution of the source given the reconstruction. The structure of this channel has been studied in \cite{gallager1968information,berger1971rate,gerrish1963estimation}. Although the forward channel, relating the reconstruction to the source, achieving the rate-distortion function need not be unique, the resulting backward channel is indeed unique. 
Moreover, the rate-distortion achievability result in \cite[Theorem 2.3]{csiszar2011information} is shown by constructing a channel code for a backward channel with a large probability of error and by using the encoder of the latter as a decoder of the former and vice versa.
Highlighting this duality further, inspired by results on the output statistics of good channel codes \cite{shamai1997empirical},
the following was shown in \cite{pradhan2004approximation}.
The $n$-letter actual posterior conditional distribution of the source vector given the reconstruction vector of any rate-distortion achieving code converges in normalized divergence to the $n$-product of the unique minimum-mutual-information backward channel conditional distribution. In other words, although the encoder and decoder are block operations, the induced  posterior $n$-letter channel becomes discrete memoryless in the asymptotic limit for a rate-distortion achieving code.
For further developments on this concept see \cite{weissman2005empirical,cuff2010coordination,schieler2013connection,kostina2015output}. This channel also plays a fundamental role in Bayesian estimation and detection theory \cite{poor1998introduction}, e.g., maximum a posteriori (MAP) estimation.
Therefore, we ask the question, can we use such a channel to formulate a lossy source coding problem?




\noindent \textbf{Contributions of this work:}
In light of this, in this work, we explore a new formulation of the source compression problem in the memoryless setting. This formulation is based
on the notion of a posterior channel that produces the reference of the source from that of the reconstruction. 
Instead of a single-letter distortion function, now, we are given a single-letter posterior channel that characterizes the nature  of the loss incurred in the encoding and decoding operations. 
More precisely, we want to construct an encoder and a decoder 
such that the joint effect of encoding and decoding -- to produce a reconstruction sequence 
from the source sequence -- 
is close to the effect of the $n$-product posterior channel acting on the non-product reconstruction sequence.
The closeness is measured using the trace distance in the quantum case and the total variation in the classical case, manifesting as a global error constraint.
A related concept is the 
Petz recovery map
which has found significant relevance in  information-theoretic problems \cite{petz1986sufficient,barnum2002reversing,hayden2004structure}. However, we take a different approach and consider a quantum channel, i.e., a CPTP map, acting on the reference of the reconstruction to produce the reference of the source, whose existence is guaranteed using Uhlmann's theorem.  
We refer to this as a posterior reference map.

As one of the main contributions of our work, we provide a single-letter characterization of the asymptotic performance limit of this source coding problem using the minimal coherent information of the posterior reference map, where the minimization is over all reconstructions (see Theorem \ref{thm:mainResult}).
Furthermore, our work establishes a duality connection between quantum lossy compression and the quantum channel coding problem. Our proof is based on the coherent application of two fundamental tools of quantum information theory, namely, packing and covering, implying a duality relationship with Devetak's proof for the channel coding problem \cite{devetak2005private} (also see \cite{shor2002quantum,hayden2008decoupling}).

We also provide a correspondingly new formulation for the quantum-classical (QC) and classical lossy source coding problems. In the quantum-classical setup, we provide a single-letter characterization of the asymptotic performance limit using 
the minimal Holevo information (or the corresponding quantum mutual information) of the posterior classical-quantum (CQ) channel, where the minimization is over all reconstruction distributions (see Theorem \ref{thm:qclossysourcecoding}). In the classical setup, the minimal mutual information of the posterior channel determines the single letter characterization of the asymptotic performance limit of classical source coding problem (see Theorem \ref{thm:clsrate_distortion}). The posterior CQ channel and the posterior channel are defined analogous to the posterior reference map for the QC and classical settings, respectively.


 At one end of the spectrum, when the posterior reference map is specified as the identity transformation, our rate expression in the quantum case reduces to the von Neumann entropy of the given quantum source, demonstrating the connection with the Schumacher's lossless compression \cite{schumacher1995quantum}. In fact, the two formulations can be shown to be equivalent to one  another. 
 The same follows in the classical and quantum-classical formulations where the rate equals Shannon's entropy and von Neumann's entropy of the sources, respectively. On the other end, when the specified posterior reference map is such that coherent information is negative for some reference of the reconstruction, we characterize the asymptotic performance limit of the lossy quantum source coding problem to be zero. 

 The techniques employed to prove our results can be summarized as follows. For the achievability of the Theorem \ref{thm:mainResult}, we first construct a posterior reference isometry $V$ (as in Definition \ref{def:VBar}) and decompose it as a coherent measurement. We then make use of Winter's measurement compression protocol \cite{winter}, and apply it in a coherent fashion to compress the output of the above isometry. This involves using the Uhlmann's Theorem \cite{uhlmann1976transition} (or \cite[Theorem 9.2.1]{wilde_arxivBook}) followed by incorporating additional phases to achieve a coherent faithful simulation of the posterior reference map. To further decrease the compression rate, we exploit the fact that a noiseless quantum channel can preserve arbitrary superpositions. Therefore, we perform additional encoding to embed the information at the output of $V$ as superpositions within itself. This requires availing the HSW classical communication result \cite{holevo1998capacity,schumacher1997sending} to construct information decoding POVMs, and Naimark's extension theorem to construct a unitary from POVM elements. The method used for expurgation is another interesting feature of the proof. The protocol as it stands only permits operations that are unitary or isometric, followed by partial tracing. It can be challenging to guarantee this when there are repeated codewords in a code. A similar phenomenon was observed in the Devetak's proof \cite{devetak2005private}. 
 
 As for the achievability of Theorem \ref{thm:qclossysourcecoding}, we make use of Winter's measurement compression protocol \cite{winter} to construct the encoding POVM. For Theorem   \ref{thm:clsrate_distortion}, we use the likelihood encoder as discussed in \cite{cuff2013distributed,atif2022source} to prove the achievability of lossy classical source coding.

 For the converse of Theorem \ref{thm:mainResult}, we use the quantum data processing inequality for coherent information,
 the Fannes-Audenart inequality, and monotonicity results.  In the case of the quantum-classical setup,  proof of the converse of Theorem \ref{thm:qclossysourcecoding} uses inequalities 
 such as the quantum data processing inequality, 
 the concavity of conditional quantum entropy, and the continuity of quantum mutual information (AFW inequality). In the classical setup, similar tools are used to prove a converse to Theorem \ref{thm:clsrate_distortion}.

The paper is organized as follows. We provide some necessary definitions and useful lemmas in Section \ref{QLSC:sec:prelim}. In Section \ref{QLSC:sec:mainResults}, we formulate the problems  and provide the main results  pertaining to quantum lossy compression (Theorem \ref{thm:mainResult}), QC lossy compression (Theorem \ref{thm:qclossysourcecoding}), and classical lossy compression (Theorem \ref{thm:clsrate_distortion}). We provide examples corresponding to these three results in section \ref{QLSC:sec:examples}. 
In Sections \ref{sec:q_proof}, \ref{sec:qcproof}, and \ref{sec:clsproof}, we provide proofs of the main results. Within each of these sections, we provide the achievability proof followed by proof of the converse. Finally, Section \ref{sec:conclusion} concludes the paper.


\section{Preliminaries and Notations}
\label{QLSC:sec:prelim}
We supplement the notations in \cite{wilde_arxivBook} with the following. Let $I_A$ denote the identity operator acting on a Hilbert space $\calH_A$. The set of density operators on $\calH_A$ are denoted by $\calD(\calH_A)$, and linear operators by $\calL(\calH_A)$.
We denote $\calH_{\bA}$ as the Hilbert space associated with the reference space of $\calH_A$, with $\dim{\calH_{\bA}} = \dim{\calH_{A}}$.
In this work, we focus exclusively on references obtained from canonical purifications of quantum states \cite[Lemma 14 (Pretty Good Purifications)]{winter}, and define canonical purification $\ket{\psi_\rho}^{\bA A}$ of $\rho^A$ as $\ket{\psi_\rho}^{\bA A} \deq (I_{\bA}\tensor \sqrt{\rho^A})\Gamma_{\bA A}$, where $\Gamma_{\bA A}$ is defined as the unnormalized maximally entangled state. We use $\Psi_\rho^{\bA A}$ to denote the density operator corresponding to $\ket{\psi_\rho}^{\bA A}$. As {is the convention}, for two states acting on the same Hilbert space, we use the same $\Gamma$ when defining their canonical purifications.
We denote the finite alphabet of a source as $\mathsf{X}$, and  
the set of probability distributions on the finite alphabet $\calX$ as $\calP(\calX)$. Let $[\Theta] \deq \curly{1,2,\cdots,\Theta}$.
For a CPTP map $\calN: \calH_A \rightarrow \calH_B$, and an input density operator $\rho^A \in \calD(\calH_A)$, we use $I_c(\calN,\rho^A)$ to denote the coherent information of $\calN$ with respect to $\rho^A$.

\begin{figure}
    \centering
    \includegraphics[trim={0 0 0 0},clip,scale=0.8]{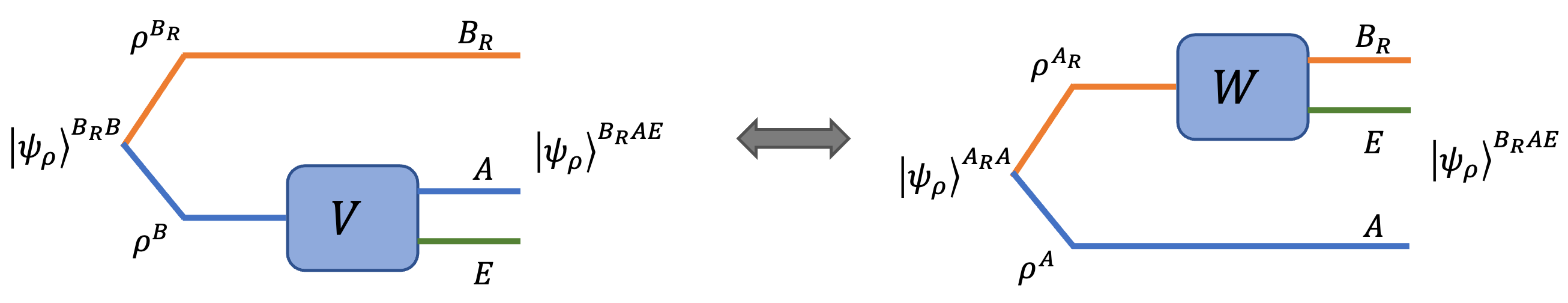}
    \caption{Figure demonstrating the construction of the posterior reference map $W$ from the isometry $V$ (the Stinespring's dilation of $\calN_V$) and the source state $\rho^B$.}
    \label{fig:reverseChannel}
    \vspace{-0.2in}
\end{figure}

\begin{definition}[Posterior Reference Map]\label{def:VBar}
    Given a source $\rho^B \in \calD(\calH_B)$ and a channel $\calN_V: \calH_{B} \rightarrow \calH_A$, let $\rho^A \deq \calN_V(\rho^B)$.  Let $V: \calH_{B} \rightarrow \calH_{A}\otimes \calH_E$ be a Stinespring's isometry corresponding to the CPTP map $\calN_V$ with $\dim(\calH_E) \geq \dim(\calH_A)$, such that $\calN_V(\cdot) = \Tr_{E}\{V(\cdot)V^\dagger\}$. As shown in Figure \ref{fig:reverseChannel}, define the {``posterior reference map''} 
    of $V$ with respect to  $\rho^A$ as the  CPTP map $\calN_{W}:\calH_{\bA} \rightarrow \calH_{\bB}$ corresponding to the isometry $W:\calH_{\bA} \rightarrow \calH_{\bB} \otimes \calH_E$ satisfying $(W \otimes I_{A})\ket{\psi_\rho}^{\bA A} = (I_{\bB}\tensor V)\ket{\psi_\rho}^{\bB B}$ where $\ket{\psi_\rho}^{\bA A}$ and $\ket{\psi_\rho}^{\bB B}$ are the canonical purifications of $\rho^A$ and $\rho^B$, respectively.  
\end{definition}

\begin{remark}[Existence of a Posterior Reference Map]
    Using the equivalence of purifications, one can guarantee the existence of such a posterior reference isometry $W:\calH_{\bA} \rightarrow \calH_{\bB} \otimes \calH_E$. Since $V$ is an isometry
     with $\dim(\calH_E) \geq \dim(\calH_A)$, and since $\ket{\psi_\rho}^{\bA A}$ and $\ket{\psi_\rho}^{\bB AE}\deq (I_{\bB} \tensor V)\ket{\psi_\rho}^{\bB B}$ are purifications of $\rho^A$ (as $\Tr_{E}(V\rho^B V^\dagger) = \rho^A$), from \cite[Theorem 5.1.1]{wilde_arxivBook}, there exists an isometry $W:\calH_{\bA} \rightarrow \calH_{\bB} \otimes \calH_E$ such that $(W\tensor I_A)\ket{\psi_\rho}^{\bA A}  = \ket{\psi_\rho}^{\bB A E}$.
\end{remark}

\subsection{Useful Lemmas}
\label{sec:usefulLemmas}

\begin{lemma}[\cite{fuchs1999cryptographic}, Theorem 9.3.1 \cite{wilde_arxivBook}]
\label{lem:relationshipTraceFidelity}
    Given two states $\rho,\sigma \in \calD(\calH)$, we have
    \[1-\sqrt{F(\rho,\sigma)} \leq \frac{1}{2} \norm{\rho-\sigma}_1 \leq \sqrt{1-F(\rho,\sigma)}.\]
\end{lemma}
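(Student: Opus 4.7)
The plan is to prove the two Fuchs--van de Graaf inequalities separately, since each half uses a different reduction.

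For the upper bound $\frac{1}{2}\|\rho-\sigma\|_1 \leq \sqrt{1-F(\rho,\sigma)}$, I would first verify the statement for pure states as an equality, and then lift it to general mixed states by purification. More specifically, for any two unit vectors $\ket{\psi}, \ket{\phi}$, a direct eigenvalue computation on the rank-two operator $\ket{\psi}\bra{\psi}-\ket{\phi}\bra{\phi}$ yields the two nonzero eigenvalues $\pm\sqrt{1-|\langle \psi|\phi\rangle|^2}$, so that $\frac{1}{2}\big\|\ket{\psi}\bra{\psi}-\ket{\phi}\bra{\phi}\big\|_1 = \sqrt{1-F(\ket{\psi}\bra{\psi},\ket{\phi}\bra{\phi})}$. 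Next I would invoke Uhlmann's theorem to select purifications $\ket{\psi_\rho}, \ket{\psi_\sigma}$ of $\rho,\sigma$ in a common enlarged Hilbert space satisfying $|\langle\psi_\rho|\psi_\sigma\rangle|^2 = F(\rho,\sigma)$, and apply monotonicity of the trace distance under partial trace to conclude
\[
\tfrac{1}{2}\|\rho-\sigma\|_1 \;\leq\; \tfrac{1}{2}\big\|\ket{\psi_\rho}\bra{\psi_\rho}-\ket{\psi_\sigma}\bra{\psi_\sigma}\big\|_1 \;=\; \sqrt{1-F(\rho,\sigma)}.
\]

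For the lower bound $1-\sqrt{F(\rho,\sigma)} \leq \frac{1}{2}\|\rho-\sigma\|_1$, I would reduce to the classical case via a well-chosen measurement. For any POVM $\{E_m\}$, set $p_m = \Tr(E_m\rho)$ and $q_m = \Tr(E_m\sigma)$. Using the variational formula $\tfrac{1}{2}\sum_m|p_m-q_m| = 1 - \sum_m \min(p_m,q_m)$ together with the elementary estimate $\min(a,b) \leq \sqrt{ab}$ for $a,b\geq 0$, I obtain the classical inequality
\[
\tfrac{1}{2}\sum_m |p_m-q_m| \;\geq\; 1 - \sum_m \sqrt{p_m q_m} \;=\; 1-\sqrt{F_c(p,q)},
\]
where $F_c(p,q)=\big(\sum_m \sqrt{p_m q_m}\big)^2$ is the classical (Bhattacharyya) fidelity. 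Combining this with data processing, namely $\sum_m|p_m-q_m|\leq \|\rho-\sigma\|_1$, gives $\tfrac{1}{2}\|\rho-\sigma\|_1 \geq 1-\sqrt{F_c(p,q)}$ for every measurement. To finish, I would invoke the Fuchs--Caves theorem, which guarantees a POVM attaining $F_c(p,q)=F(\rho,\sigma)$ (for instance, the measurement in the eigenbasis of $\rho^{-1/2}\sigma\rho^{-1/2}$ when $\rho$ is full-rank, with the general case handled by a standard limiting argument). Substituting that measurement yields the claim.

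The main obstacle is the lower bound, since it genuinely requires the existence of a measurement that saturates the quantum--classical fidelity gap; without Fuchs--Caves, the chain of inequalities would only yield the weaker bound with some $F_c \geq F$ on the right-hand side. The upper bound, by contrast, is essentially a one-line consequence of Uhlmann plus the pure-state identity, and so should pose no difficulty. A final sanity check I would perform is that both bounds collapse correctly in the extreme cases $\rho=\sigma$ (both sides zero) and $\rho\perp\sigma$ (both inequalities reducing to $1 \leq 1$), which confirms the scaling on both ends of the spectrum.
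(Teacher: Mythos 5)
The paper does not actually prove this lemma; it is stated as a citation to Fuchs~\cite{fuchs1999cryptographic} and to Theorem~9.3.1 of Wilde~\cite{wilde_arxivBook}, so there is no internal proof against which to compare. Your proof is correct and is essentially the standard textbook argument used in those references: the upper Fuchs--van de Graaf bound from the pure-state identity, Uhlmann's theorem, and monotonicity of trace distance under partial trace; the lower bound from the classical estimate $\min(a,b)\le\sqrt{ab}$, data processing for trace distance, and the Fuchs--Caves result that some POVM attains $F_c = F$. You also correctly flag where the nontrivial input sits (without an optimal measurement one only gets $F_c \ge F$ on the right, which points the inequality the wrong way).

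One small slip worth fixing: the Fuchs--Caves optimal measurement is the projective measurement in the eigenbasis of the operator $M \deq \rho^{-1/2}\sqrt{\sqrt{\rho}\,\sigma\sqrt{\rho}}\,\rho^{-1/2}$ (equivalently, the Hermitian $M$ with $M\rho M = \sigma$), not of $\rho^{-1/2}\sigma\rho^{-1/2}$. With $M=\sum_i m_i\ketbra{f_i}$ one gets $q_i = \bra{f_i}M\rho M\ket{f_i}=m_i^2 p_i$, hence $\sum_i\sqrt{p_i q_i}=\Tr(M\rho)=\Tr\sqrt{\sqrt{\rho}\,\sigma\sqrt{\rho}}=\sqrt{F(\rho,\sigma)}$; this identity fails for your choice of matrix. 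Since your argument only needs the existence statement from Fuchs--Caves rather than its explicit realization, this is a cosmetic correction, not a gap.
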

\begin{lemma} 
\label{lem:closenessofPurification}
For $\rho^B, \sigma^B \in \calD(\calH_B)$, the following inequality holds:
\begin{equation}F(\ket{\psi_{\rho}}, \ket{\psi_{\sigma}}) \geq \left(1-\frac{1}{2}\norm{{\rho^B}-{\sigma^B}}_1\right)^2,
\label{eq:fidelityPurification1}
\end{equation}
    where $\ket{\psi_{\rho}}$ and $\ket{\psi_{\sigma}}$ are the canonical purifications of $\rho^B$ and $\sigma^B$, respectively. 
\end{lemma}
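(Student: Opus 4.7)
The plan is to compute the inner product $\langle\psi_\rho|\psi_\sigma\rangle$ between the two canonical purifications in closed form, and then bound it below using the Powers-Stormer inequality. Since the fidelity between two pure states is the squared modulus of their overlap, a lower bound on this inner product will immediately yield the desired lower bound on $F(\ket{\psi_\rho},\ket{\psi_\sigma})$.

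First I would unpack the canonical purifications via the definition $\ket{\psi_\rho}^{\bB B}=(I_{\bB}\tensor \sqrt{\rho^B})\Gamma_{\bB B}$ with $\Gamma_{\bB B}=\sum_i|i\rangle_{\bB}\tensor|i\rangle_B$, and similarly for $\ket{\psi_\sigma}^{\bB B}$. A short calculation that uses the fact (emphasized in Definition \ref{def:VBar}) that both canonical purifications share the same $\Gamma_{\bB B}$ collapses the resulting double sum to the single-system trace $\langle\psi_\rho|\psi_\sigma\rangle = \Tr(\sqrt{\rho^B}\sqrt{\sigma^B})$. Since $\sqrt{\rho^B}$ and $\sqrt{\sigma^B}$ are both positive semidefinite, this quantity is real and non-negative (diagonalizing either factor exhibits it as a sum of non-negative terms), so $F(\ket{\psi_\rho},\ket{\psi_\sigma})=(\Tr(\sqrt{\rho^B}\sqrt{\sigma^B}))^2$.

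The next step is to convert this trace into a Hilbert-Schmidt distance. Expanding $\norm{\sqrt{\rho^B}-\sqrt{\sigma^B}}_2^2 = \Tr\{(\sqrt{\rho^B}-\sqrt{\sigma^B})^2\} = 2-2\Tr(\sqrt{\rho^B}\sqrt{\sigma^B})$ yields the identity $\Tr(\sqrt{\rho^B}\sqrt{\sigma^B}) = 1-\frac{1}{2}\norm{\sqrt{\rho^B}-\sqrt{\sigma^B}}_2^2$. I would then invoke the Powers-Stormer inequality $\norm{\sqrt{\rho^B}-\sqrt{\sigma^B}}_2^2 \leq \norm{\rho^B-\sigma^B}_1$, which holds for any two positive semidefinite operators, to conclude $\Tr(\sqrt{\rho^B}\sqrt{\sigma^B}) \geq 1-\frac{1}{2}\norm{\rho^B-\sigma^B}_1$. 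Because $\norm{\rho^B-\sigma^B}_1 \leq 2$, the right-hand side is non-negative, so squaring preserves the inequality and delivers the claimed bound.

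The main subtlety is that one cannot derive this lemma by simply chaining Lemma \ref{lem:relationshipTraceFidelity} with Uhlmann's theorem: Uhlmann only yields $F(\ket{\psi_\rho},\ket{\psi_\sigma}) \leq F(\rho^B,\sigma^B)$ for an arbitrary pair of purifications, which is the wrong direction of the inequality. What is really needed is an explicit lower bound on the overlap $\Tr(\sqrt{\rho^B}\sqrt{\sigma^B})$ of the specific canonical purifications in terms of $\norm{\rho^B-\sigma^B}_1$, and Powers-Stormer is the natural operator-theoretic tool for that.
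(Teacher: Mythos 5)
Your proof is correct and coincides with the paper's argument: both compute $\langle\psi_\rho|\psi_\sigma\rangle=\Tr(\sqrt{\rho^B}\sqrt{\sigma^B})$ from the shared $\Gamma$ in the canonical purifications and then invoke the Powers--St\o rmer inequality to lower-bound this trace by $1-\tfrac12\norm{\rho^B-\sigma^B}_1$. The only cosmetic difference is that you route through the Hilbert--Schmidt identity $\norm{\sqrt{\rho^B}-\sqrt{\sigma^B}}_2^2=2-2\Tr(\sqrt{\rho^B}\sqrt{\sigma^B})$ before applying Powers--St\o rmer in the form $\norm{\sqrt{\rho^B}-\sqrt{\sigma^B}}_2^2\le\norm{\rho^B-\sigma^B}_1$, whereas the paper applies the equivalent statement $\Tr(A)+\Tr(B)-\norm{A-B}_1\le 2\Tr(\sqrt{A}\sqrt{B})$ directly.
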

\begin{proof}
    We provide a proof in Appendix \ref{proof:lem:closenessofPurification}.
\end{proof}
\noindent The above lemma is a slight tightening of the Lemma 14 (“Pretty good purifications”) of \cite{winter}.
\begin{lemma}[Naimark's  extension theorem \cite{neumark1940spectral}, {\cite[Theorem 2.1]{wilde2013sequential}}]
   \label{lem:naimark}
Given a POVM $\{\Gamma_x\}_{x\in \calX}$ acting on the system {$\calH_A$}, there exists a unitary $U_{AA'}$ acting on the system $\calH_A$ and auxiliary system $\calH_{A'}$ and an orthonormal basis $\{\ket{x}^{A'}\}_{x\in \calX}$ such that 
\[\Tr\left\{\overline{\Gamma}_x(\rho^A \otimes \ketbra{0}_{A'})\right\} = \Tr(\Gamma_x \rho^A),\]
where $\{\overline{\Gamma}_x \deq U_{AA'}^{\dagger}(\I_A \otimes \ketbra{x}^{A'})U_{AA'}\}$ are orthogonal projectors acting on system $\calH_A\tensor \calH_{A'}$. Also, $\ket{0}^{A'}$ is some fixed state in $\calH_{A'}$, and independent of $\Gamma_x$ and $\rho^A$.

\end{lemma}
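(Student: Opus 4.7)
The plan is to prove Naimark's theorem by coherently encoding the POVM outcomes into the auxiliary system via an isometry, and then extending that isometry to a unitary on the joint Hilbert space. Concretely, I would first fix an orthonormal basis $\{\ket{x}^{A'}\}_{x\in\calX}$ of $\calH_{A'}$ (choosing $\dim\calH_{A'} \geq |\calX|$) and a distinguished state $\ket{0}^{A'}$, and then define the linear map $V:\calH_A \to \calH_A \otimes \calH_{A'}$ by
\[
V\ket{\psi}^A \;=\; \sum_{x\in\calX} \sqrt{\Gamma_x}\,\ket{\psi}^A \otimes \ket{x}^{A'}.
\]

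Next, I would check that $V$ is an isometry using the POVM completeness relation $\sum_x \Gamma_x = I_A$, which gives $V^\dagger V = \sum_x \Gamma_x = I_A$. Identifying $V$ with the partial map sending $\ket{\psi}^A \otimes \ket{0}^{A'} \mapsto V\ket{\psi}^A$, this defines a partial isometry on $\calH_A \otimes \calH_{A'}$ from the subspace $\calH_A \otimes \mathrm{span}\{\ket{0}^{A'}\}$ to its image. Since any partial isometry between subspaces of equal codimension can be completed to a unitary on the whole space, I would extend this map to a unitary $U_{AA'}$ on $\calH_A \otimes \calH_{A'}$ satisfying $U_{AA'}(\ket{\psi}^A \otimes \ket{0}^{A'}) = V\ket{\psi}^A$ for all $\ket{\psi}^A \in \calH_A$.

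With $\overline{\Gamma}_x \deq U_{AA'}^{\dagger}(I_A \otimes \ketbra{x}^{A'})U_{AA'}$, the orthogonal-projector property is immediate from unitary conjugation: each $I_A \otimes \ketbra{x}^{A'}$ is a projector and the family is mutually orthogonal since $\{\ket{x}^{A'}\}_x$ is orthonormal. To verify the trace identity, I would compute
\[
\Tr\bigl\{\overline{\Gamma}_x(\rho^A \otimes \ketbra{0}^{A'})\bigr\} \;=\; \Tr\bigl\{(I_A \otimes \ketbra{x}^{A'})\,U_{AA'}(\rho^A \otimes \ketbra{0}^{A'})U_{AA'}^\dagger\bigr\},
\]
then expand $U_{AA'}(\rho^A \otimes \ketbra{0}^{A'})U_{AA'}^\dagger = \sum_{y,y'} \sqrt{\Gamma_y}\,\rho^A\sqrt{\Gamma_{y'}} \otimes \ket{y}\bra{y'}^{A'}$ using the action of $V$, and finally collapse the sum against $\ketbra{x}^{A'}$ on the $A'$ system and use cyclicity of the trace on the $A$ system to obtain $\Tr(\sqrt{\Gamma_x}\rho^A\sqrt{\Gamma_x}) = \Tr(\Gamma_x \rho^A)$.

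The only nontrivial step is the extension of $V$ (viewed as a partial isometry on $\calH_A \otimes \calH_{A'}$) to a full unitary, which requires ensuring that $\dim \calH_{A'}$ is large enough so that the orthogonal complement of the initial subspace $\calH_A \otimes \mathrm{span}\{\ket{0}^{A'}\}$ has dimension at least that of the image of $V$; this is guaranteed by taking $\dim\calH_{A'} \geq |\calX|$. Everything else is routine bookkeeping with tensor products and orthonormal bases.
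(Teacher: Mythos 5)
The paper does not prove this lemma itself; it cites Neumark (1940) and Wilde (2013) and uses the statement as a black box. Your proposal is correct and is exactly the standard dilation argument found in those references: define the isometry $V\ket{\psi}=\sum_x\sqrt{\Gamma_x}\ket{\psi}\otimes\ket{x}^{A'}$, verify $V^\dagger V=I_A$ from POVM completeness, extend the resulting partial isometry $\ket{\psi}\otimes\ket{0}^{A'}\mapsto V\ket{\psi}$ to a unitary $U_{AA'}$ (possible because the domain and range subspaces both have dimension $\dim\calH_A$, hence complements of equal dimension), and then conjugate the projectors $I_A\otimes\ketbra{x}^{A'}$ and trace. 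One small nitpick on wording: the condition $\dim\calH_{A'}\geq|\calX|$ is needed to accommodate the orthonormal basis $\{\ket{x}^{A'}\}_{x\in\calX}$; the unitary extension itself is automatic once $V$ is an isometry and requires no extra dimension count beyond that, since equal subspace dimensions already force equal complement dimensions in a finite-dimensional space.
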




\section{Main Results}
\label{QLSC:sec:mainResults}
\subsection{Lossy Quantum Source Coding}
We first formulate a quantum source coding problem as follows. For any memoryless quantum information source, characterized by $\rho^{B} \in \mathcal{D}(\mathcal{H}_B)$, denote its canonical purification by $\phirho$. Let $\rho^{B_R} \deq \Tr_B [\Psi_\rho^{B_R B}]$.

\begin{definition}[{Quantum Source Coding Setup}] \label{def:qc source coding setup}
A quantum source coding setup is characterized by a triple $(\rho^B,\mathcal{H}_{A},\Wchannel)$, where $\rho^B \in \mathcal{D}(\mathcal{H}_B)$ is a density operator,  $\mathcal{H}_{{A}}$ is a reconstruction Hilbert space, and $\Wchannel$ is a single-letter CPTP map from $\calH_{\bA}$ to $\calH_{\bB}$, where $\calH_{\bA}$ and $\calH_{\bB}$ are reference spaces corresponding to $\calH_A$ and $\calH_B$, respectively.

\end{definition}



\begin{definition}[Lossy Quantum Compression Protocol] \label{def:protocolcompression}
For a given input and reconstruction Hilbert spaces $(\calH_B,\calH_A)$, 
an $(n,\Theta)$ lossy quantum compression protocol consists of a encoding CPTP map $\calN^{(n)}_\calE:\calH_{B^n} \rightarrow \calH_{M}$  and a decoding CPTP map $\calN^{(n)}_\calD:\calH_M \rightarrow \calH_{A^n}$, such that $\dim(\calH_M) = \Theta$, as shown in Figure \ref{fig:q_protocol}. 
\end{definition}

\begin{figure}
    \centering
    \includegraphics[scale=0.9]{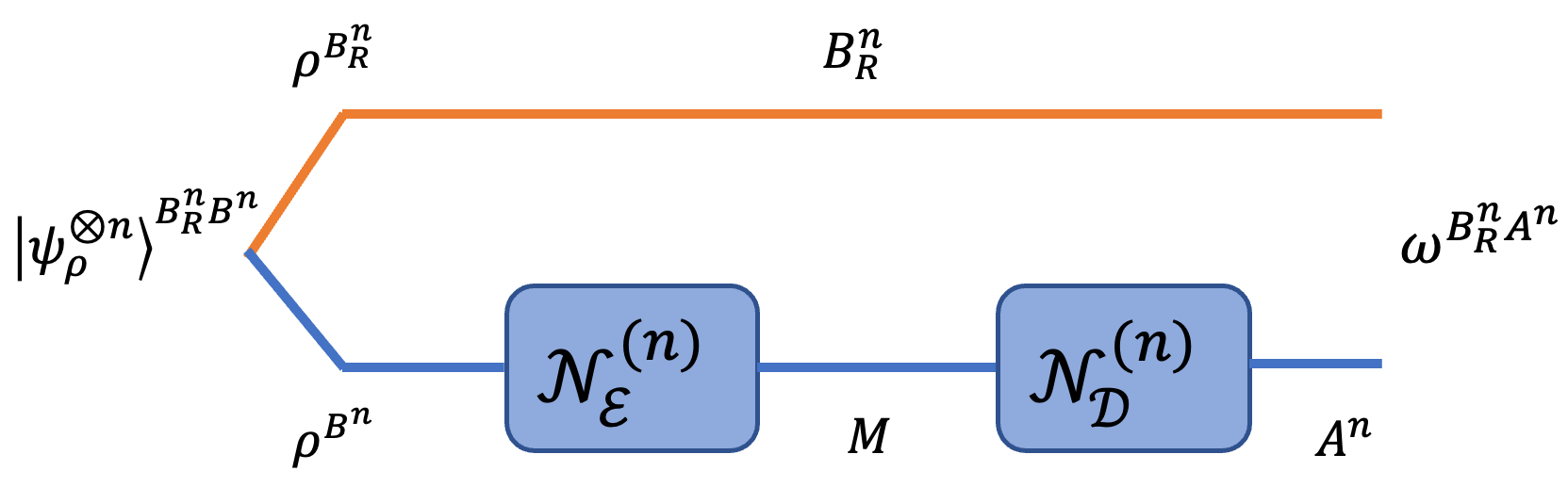}
    \caption{Illustration of Lossy Quantum Compression protocol}
    \label{fig:q_protocol}
    \vspace{-0.2in}
\end{figure}

\begin{definition}[Achievability]
\label{def:achievability}
For a quantum source coding setup $(\rho^B,\mathcal{H}_{A},\Wchannel)$, a rate $R$ is said to be achievable, if for all $ \epsilon > 0 $ and all sufficiently large $n$, there exists an $ (n, \Theta) $ lossy quantum compression protocol satisfying
\begin{align} \label{def:protocolError}
\left\|{\omega^{\bBn A^n} - (  \Wchannel^{\otimes n} \tensor I_{A^n} ) \Psi_{\omega}^{\bAn A^n }}\right\|_1 \leq \epsilon,   
\end{align}
and $ \frac{1}{n}\log{\Theta} \leq R + \epsilon$,
where $\omega^{\bBn A^n} \deq (I \otimes \calN_\calD^{(n)})(I\otimes \calN_\calE^{(n)}) (\Psi_{\rho}^{\bBn B^n })$, and $\Psi_{\rho}^{\bBn B^n}$ and $\Psi_{\omega}^{\bAn A^n } $ are the canonical purifications of ${\rho^B}^{\otimes n}$ and $\omega^{A^n}$, respectively.
    
\end{definition}
In other words, the protocol ensures that the joint state of the reconstruction on $\calH_A^{\otimes n}$ and the original reference  $ \calH_{\bB}^{\otimes n}$ 
is close to the effect of the $n$-product posterior channel acting on the reference of the non-product reconstruction sequence.
Our objective is to characterize the set of all achievable rates using single-letter quantum information quantities. 

\begin{theorem}[Lossy Quantum Compression Theorem]
    \label{thm:mainResult}
    For a $(\rho^B,\mathcal{H}_{A},\Wchannel)$ quantum source coding setup, a rate $R$ is achievable  if and only if $S(\rho^B,\Wchannel)$ is non empty, and
    \[R \geq \min_{\rho^{\bA} \in \calS(\rho^B, \Wchannel)}I^+_c(\Wchannel,\rho^{\bA}),
    \]
    where for any real $x$, $x^+ \deq \max(x,0)$ and
    \[\mathcal{S}(\rho^B,\Wchannel) \deq \{\rho^{\bA} \in \mathcal{D}(\calH_{\bA}): \Wchannel(\rho^{\bA})=\rho^{\bB}\}.\]
\end{theorem}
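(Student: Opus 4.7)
I would prove achievability and converse separately: achievability via a coherent combination of packing (HSW) and covering (Winter measurement compression) in the spirit of Devetak's quantum channel coding proof, and the converse via quantum data processing together with continuity (AFW) and single-letterization.

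For achievability, first fix $\rho^{\bA}_* \in \mathcal{S}(\rho^B, \Wchannel)$ essentially attaining $\min I_c^+$, and let $\rho^A_*$ be the reconstruction determined via the canonical purification. By equivalence of purifications (dual to Definition~\ref{def:VBar}), construct a Stinespring isometry $V:\calH_B \to \calH_A \otimes \calH_E$ whose posterior reference map is $\Wchannel$, so that $I_c(\Wchannel,\rho^{\bA}_*) = S(\rho^A_*) - S(\rho^E_*) = I_c(\calN_V, \rho^B)$. Spectrally decompose $\rho^B = \sum_x q(x)|\psi_x\rangle\langle\psi_x|$; the induced pure-state ensemble $\{q(x), V|\psi_x\rangle\}$ yields a Holevo-type decomposition $I_c = \chi(X;A) - \chi(X;E)$. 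Generate an i.i.d.\ random codebook $\{|\psi_{x^n(m,k)}\rangle\}$ indexed by $(m,k) \in [|\calM|] \times [|\calK|]$ with $|\calM||\calK|\approx 2^{n\chi(X;A)}$ and $|\calK|\approx 2^{n\chi(X;E)}$, so that only the $m$-register is communicated, at rate $\log|\calM|/n \approx I_c$. The encoder applies the HSW POVM resolving $(m,k)$ from $\rho^{B^n}$, realized projectively via Naimark's extension (Lemma~\ref{lem:naimark}), then coherently discards the $k$-register as in Winter's measurement compression. The decoder coherently re-prepares the reconstruction state, with phases chosen via Uhlmann's theorem to achieve coherent faithfulness of the joint state on $\bBn A^n$; here one exploits the ability of a noiseless quantum channel to preserve the $k$-superpositions embedded within the $m$-register. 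Lemmas~\ref{lem:relationshipTraceFidelity} and~\ref{lem:closenessofPurification}, together with expurgation of repeated codewords, convert the resulting entanglement fidelity estimates into the trace-distance criterion~\eqref{def:protocolError}. The case $I_c<0$ (so $I_c^+=0$) is handled by the trivial zero-rate protocol.

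For the converse, let $\calT := \calN^{(n)}_{\calD} \circ \calN^{(n)}_{\calE}$ be the induced channel, which factors through a $\Theta$-dimensional system $M$. Data processing for coherent information yields $\log\Theta \geq S(M)_\omega \geq I_c(\calT,\rho^{B^n}) = S(\omega^{A^n}) - S(\omega^{\bBn A^n})$. The $\epsilon$-closeness of $\omega^{\bBn A^n}$ to $\sigma := (\Wchannel^{\otimes n}\otimes I_{A^n})\Psi_\omega^{\bAn A^n}$ together with the Alicki--Fannes--Winter inequality gives $S(\omega^{\bBn A^n}) = S(\sigma^{\bBn A^n}) + o(n)$, so $\log\Theta$ is lower-bounded up to $o(n)$ by the corresponding coherent-information quantity of $\sigma$. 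A strong-subadditivity chain-rule expansion of $-S(A^n\mid\bBn)_\sigma$ single-letterizes this to a sum over marginals of $\omega^{\bAn}$; concavity of $I_c$ in the input, combined with the exact identity $\omega^{\bBn}=\rho^{\bBn}$ (which forces the averaged marginal to be approximately feasible), yields $\log\Theta \geq n\min_{\rho^{\bA}\in\mathcal{S}(\rho^B,\Wchannel)} I_c(\Wchannel,\rho^{\bA}) - o(n)$. Combining with the trivial bound $\log\Theta \geq 0$ produces the $I_c^+$ form.

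The principal obstacle on the achievability side is coherently stitching HSW packing to Winter measurement compression into a single isometric (hence CPTP) encoder-decoder pair: the Uhlmann phases must be chosen uniformly across typical codewords so the joint state on $\bBn A^n$ --- not merely its marginals --- is close to $\sigma$, and repeated codewords must be expurgated without breaking isometric structure, a delicate issue mirroring one in~\cite{devetak2005private}. On the converse side, the chief difficulty is single-letterizing onto the feasibility set $\mathcal{S}(\rho^B,\Wchannel)$: since $\omega^{\bAn}$ is neither a product state nor exactly feasible, the AFW-based continuity, strong subadditivity, and the exact preservation $\omega^{\bBn}=\rho^{\bBn}$ must be combined carefully to land on the intended single-letter minimum.
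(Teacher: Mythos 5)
Your high-level template---coherent covering (Winter measurement compression) plus coherent packing (HSW), stitched with Uhlmann phases, Naimark extension and expurgation, in the spirit of Devetak---is indeed the paper's strategy, but the Holevo decomposition you build it on gives the \emph{wrong} coherent information. You spectrally decompose the source $\rho^B=\sum_x q(x)\ketbra{\psi_x}$, push the eigenstates through $V$, and assert $I_c(\Wchannel,\rho^{\bA}_*)=S(\rho^A_*)-S(\rho^E_*)=I_c(\calN_V,\rho^B)=\chi(X;A)-\chi(X;E)$. The last two equalities are fine, but the first is false in general: on the pure global state $(I_{\bB}\tensor V)\Psi_\rho^{\bB B}(I_{\bB}\tensor V)^\dagger=(W\tensor I_A)\Psi_\rho^{\bA A}(W\tensor I_A)^\dagger$ one has $I_c(\Wchannel,\rho^{\bA})=S(\bB)-S(E)=S(\rho^B)-S(\rho^E)$, whereas $I_c(\calN_V,\rho^B)=S(A)-S(E)=S(\rho^{\bA})-S(\rho^E)$, and these differ by $S(\rho^{\bA})-S(\rho^{B})$, which is nonzero whenever $\Wchannel$ does not preserve entropy on $\rho^{\bA}$. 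Concretely, take $\rho^B=0.3\ketbra{0}+0.7\ketbra{1}$ and $\Wchannel$ the bit-flip channel with $p=0.1$: the feasibility constraint forces $\rho^{\bA}=0.25\ketbra{0}+0.75\ketbra{1}$, so the theorem's rate is $h_b(0.3)-h_b(0.1)\approx 0.412$, while your codebook size is governed by $\chi(X;A)-\chi(X;E)=h_b(0.25)-h_b(0.1)\approx 0.342$---strictly below the converse, so no protocol at that rate can meet~\eqref{def:protocolError}. The dual insight you are missing is that, unlike in channel coding where one decomposes the \emph{input}, here one must decompose the \emph{reconstruction reference} $\rho^{\bA}=\sum_a\lambda_a^A\ketbra{a}^{\bA}$, view $V$ as a coherent POVM $\{M_a^\dagger M_a\}$ with outcome register in the eigenbasis of $\rho^A$, cover at rate $\chi(\{\lambda_a^A,\hat\rho_a^{\bB}\})=S(\rho^{\bB})-\sum_a\lambda_a^A S(\hat\rho_a^{\bB})$ and pack at rate $\chi(\{\lambda_a^E,\tau_a^E\})$; since $\hat\rho_a^{\bB}$ and $\tau_a^E$ are the two marginals of the same pure post-measurement state on $\bB E$, the conditional terms cancel and the difference lands exactly on $S(\rho^{\bB})-S(\rho^E)=I_c(\Wchannel,\rho^{\bA})$.

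The same mismatch appears in your converse. Forward data processing gives you $\log\Theta\geq I_c(\calT,\rho^{B^n})=S(\omega^{A^n})-S(\omega^{\bBn A^n})$, which you then single-letterize by expanding ``$-S(A^n\mid\bBn)_\sigma$''---but $-S(A^n\mid\bBn)_\sigma=S(\bBn)_\sigma-S(\bBn A^n)_\sigma$, not $S(A^n)_\sigma-S(\bBn A^n)_\sigma$; you have silently replaced $S(A^n)$ by $S(\bBn)$, and in the bit-flip example above this replacement changes the bound by $\approx 0.07n$ in the wrong direction. The paper avoids this by applying the coherent-information data processing inequality to the \emph{posterior reference isometries} $W_\calE^{(n)},W_\calD^{(n)}$ of the encoder and decoder (built from Definition~\ref{def:VBar} applied to $V_\calE^{(n)}$ and $V_\calD^{(n)}$ themselves), which yields $S(M)\geq S(\bBn)_\omega-S(\bBn A^n)_\omega$ directly. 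Since $\omega^{\bBn}=\rho^{\bBn}$ exactly (the encoder-decoder acts only on $B^n$), this single-letterizes via subadditivity of $S(E^n)$ and per-index approximate feasibility of $\omega^{\bA_i}$, with no appeal to concavity of $I_c$ in the input---which, incidentally, does not hold for general $\Wchannel$.
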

\begin{proof}
    A proof of the achievability is provided in Sections \ref{sec:proof_overview} and \ref{sec:q_Achievability}, and a  proof of converse is provided in Section \ref{QLSC:sec:proofOuterBound}.
\end{proof}

\begin{remark}[Covering of Subspaces]
The asymptotic rate obtained in the statement of Theorem \ref{thm:mainResult} can be interpreted using a subspace covering argument. Let us assume we are given a source $\rho^B$ and a  CPTP map $\calN_W$ whose coherent information is positive for all $\rho^{\bA} \in \calS(\rho^B,\calN_W).$ Let $W: \calH_{\bA} \rightarrow \calH_{\bB}\tensor \calH_E$ be a Stinespring's dilation of $\calN_W$. This implies $I_c(\calN_W,\rho^{\bA}) = S(\bB)_{\sigma} - S(E)_{\sigma}$, where $\sigma^{\bB E} \deq W \rho^{\bA} W^\dagger,$ for $\rho^{\bA} \in \calS(\rho^B,\calN_W)$. We know that the $n$-product source state ${\rho^{B}}^{\tensor n}$ can be compressed using Schumacher compression to a subspace of normalized logarithmic dimension $S(B_R)_\sigma$ with high probability. In order to further reduce the rate, we use the posterior reference map of $W$ with respect to $\rho^{\bB}$ such that its action on the source produces the state $\rho^A$. 
Each basis vector in the reconstruction space can be thought of as covering a subspace of normalized logarithmic dimension of $S(E)_\sigma$ in the reference space. Therefore, one needs a rate of coherent information (which is the difference of the two entropies)  to cover the entire source space with high probability.
A similar observation was made for the quantum channel coding problem in \cite{lloyd1997capacity}.

\end{remark}

\begin{remark}[Comparison with  Schumacher's lossless compression]
    Schumuacher's compression \cite{schumacher1995quantum} requires $\lim_{n\rightarrow \infty}\|\omega^{\bBn A^n} - \Psi_\rho^{\bBn B^n}\| =0$. In the current formulation, if one chooses the identity map as the posterior reference map, i.e., $\calN_W = I_{\bA \rightarrow \bB}$, we require the condition $ \lim_{n\rightarrow \infty}\|\omega^{\bBn A^n} - \Psi_\omega^{\bAn A^n}\| =0 $. Using Lemma \ref{lem:closenessofPurification}, monotonicity of the trace norm, and the triangle inequality, one can show that the two conditions are equivalent. Subsequently, both formulations yield the same asymptotic performance limit of von Neumann entropy. 
    {Observe that} the standard source coding formulation using the average single-letter distortion criterion at zero distortion level is not equivalent to Schumacher's compression.
\end{remark}

\begin{remark}[Comparison with average single-letter rate distortion]
Given any sequence of $(n,\Theta)$ lossy quantum compression protocol for a quantum source coding setup $(\rho^B,\mathcal{H}_{A},\Wchannel)$ that achieves the optimality in Theorem \ref{thm:mainResult}, we observe that the following is true.
Let  $\omega^{\bBn A^n} \deq (I \otimes \calN_\calD^{(n)})(I\otimes \calN_\calE^{(n)}) (\Psi_{\rho}^{\bBn B^n })$ be the induced state of the $n$-letter reference and the reconstruction by the protocol. Since the protocol satisfies \eqref{def:protocolError}, by monotonicity of trace distance, we obtain 
\[
\lim_{n\rightarrow \infty}\|\omega^{\bB_i A_i} -(\calN_W\tensor I_{A})(\Psi_{\omega}^{\bA_i A_i})\|_1=0, \ \  \forall \; 1 \leq i \leq n,
\]
where $\Psi_{\omega}^{\bA_i A_i}\deq \Tr_{A^{n\backslash i} A^{n\backslash i}_R} [\Psi_{\omega}^{\bAn A^n} ].$ It is worth noting that $\Psi_{\omega}^{\bA_i A_i}$ is not necessarily a pure state.
Moreover, this  does not necessarily provide any guarantee on the average single-letter distortion between the reference and the reconstruction as considered in the  standard formulation of the problem \cite[Lemma 1]{datta2012quantum}, where a single-letter purification of the source is taken into account.
From this perspective, the current formulation is more ``optimistic'' in terms of measuring the quality of the reconstruction. 
\end{remark}

\begin{remark}[Comparison with Entanglement Assistance]
    We note that $$I_c(\calN_W,\rho^{\bA}) = \frac{1}{2}\sq{I(\bB;A)_{\sigma} - I(A;E)_{\sigma}} \leq \frac{1}{2}I(\bB;A)_{\sigma},$$
    where $\sigma^{\bB A E} \deq (I\tensor V) \Psi_{\rho}^{\bB B} (I\tensor V)^\dagger$, and $V:\calH_B \rightarrow \calH_A \tensor \calH_E$ is a posterior reference map of $W$ with respect of $\rho^{\bB}$. It was shown in \cite{datta2012quantum} that $\frac{1}{2}I(\bB;A)_{\sigma}$ characterizes the asymptotic performance limit for the rate-distortion problem (with a local single-letter distortion function) with unlimited entanglement assistance.  Hence, this also provides a lower bound on the asymptotic performance limit for the corresponding problem in the unassisted case. 
    Fortunately, this does not lead to any contradiction, as
    the current formulation  differs from the former by being  more optimistic.

\end{remark}


\subsection{Lossy Quantum-Classical Source Coding}
This section provides the main results
 regarding the quantum-to-classical (QC) setup. 
A memoryless quantum information source is characterized by $\sourcedo \in \calD(\calH_B)$.

\begin{definition}[QC Source Coding Setup] A QC source coding setup is characterized by a triple $(\sourcedo,\calX,\calW)$ where $\sourcedo$ is the source density operator acting on $\calH_B$, $\calX$ is the reconstruction alphabet,
and $\calW: \calX \rightarrow \calD(\calH_B)$ is a single-letter posterior classical-quantum (CQ) channel. 
\end{definition}

\begin{definition}[Lossy QC Compression Protocol] For a given source density operator $\sourcedo$ and the reconstruction alphabet $\calX$,
an $(n,\Theta)$ lossy QC compression protocol is characterized by 
$(i)$ a POVM $\Gamma^{(n)}\deq \{A_m\}_{m=1}^{\Theta}$ and  $(ii)$ a decoding map $f:\curly{1,2,\cdots,\Theta} \rightarrow \calX^n$, as shown in Figure \ref{fig:qc_lossyprotocol}. 

\end{definition}
\vspace{-0.2in}
\begin{figure}[!htb]
    \centering
    \includegraphics[scale=0.8]{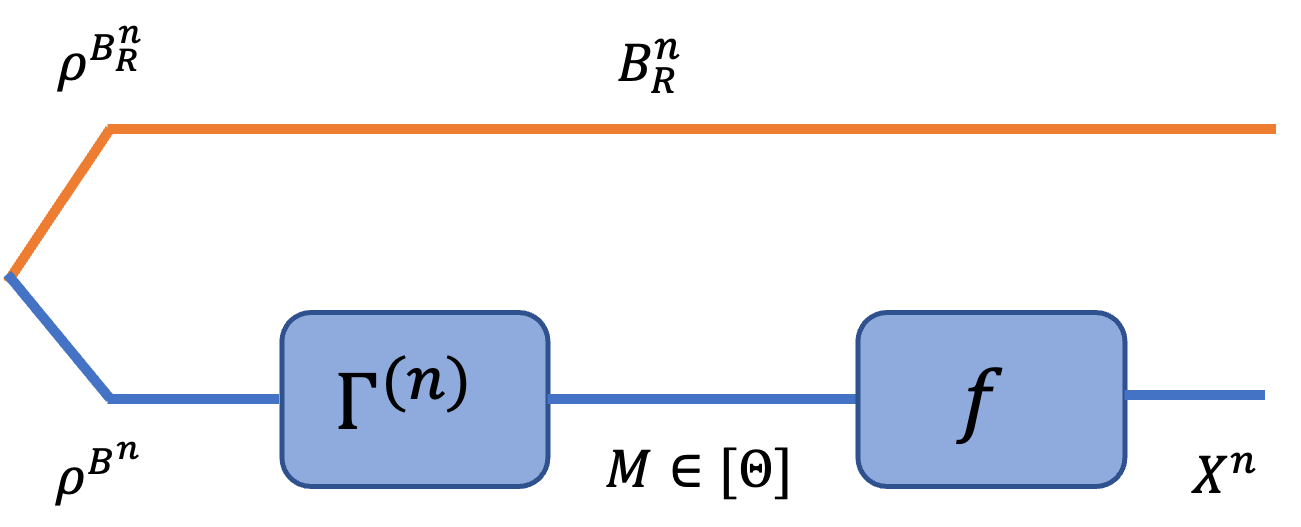}
    \caption{Illustration of Quantum-Classical Lossy Source Compression Protocol.}
    \label{fig:qc_lossyprotocol}
    \vspace{-0.2in}
\end{figure}

\begin{definition}[Achievability]\label{def:qc_achievability}
For a given QC source coding setup $(\sourcedo,\calX,\calW)$, a rate $R$ is said to be achievable if for all $\epsilon > 0$ and all sufficiently large $n$, there exists 
an $(n,\Theta)$ QC lossy compression protocol such that 
$\frac{1}{n}\log \Theta \leq R + \epsilon$, and $\Xi(\Gamma^{(n)},f) \leq \epsilon$, where
\begin{equation}
    \Xi(\Gamma^{(n)},f) \deq \sum_{x^n} \norm{ \sqrt{\sourcedo^{\tensor n}} A_{f^{-1}(x^n)}
\sqrt{\sourcedo^{\tensor n}} -  \Tr(A_{f^{-1}(x^n)} \sourcedo^{\tensor n} ) \bigotimes_{i=1}^{n} \calW_{x_i}}_1.
\end{equation}
\end{definition}

In other words, the post-measurement reference state should look like $n$-tensored posterior CQ channel $\calW^{\tensor n}$. Our objective is to characterize the set of all achievable rates using single-letter quantum information quantities. 

\begin{theorem}[Lossy QC Source Compression Theorem]
\label{thm:qclossysourcecoding}
For a $(\sourcedo,\calX,\calW)$ QC source coding setup, 
a rate $R$ is achievable if and only if $\calA(\sourcedo,\calW)$ is non-empty, and
\[R \geq \min_{P_X \in \calA(\sourcedo,\calW)} I(X;\refstate)_{\sigma},\]
where the quantum mutual information is computed with respect to the classical-quantum state, 
    \[\sigma^{X\refstate} \deq \sum_x P_X(x) \ketbra{x}^{X}\tensor \calW_x 
    ,\]
    $\calA$ is the set of reconstruction distributions defined as
    \[
    \calA(\sourcedo,\calW) \deq \{P_X \in \calP(\calX): \sum_{x}P_X(x) \calW_x =\sourcedo\},
    \]
    and $\{\ket{x}\}_{\curly{x\in \calX}}$ is an orthonormal basis 
    for the Hilbert space $\calH_X$ with $\dim{(\calH_X)}=|\calX|$.
    
\end{theorem}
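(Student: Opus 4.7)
The plan is to prove achievability and converse separately. For achievability, fix a distribution $P_X \in \calA(\sourcedo,\calW)$ attaining the minimum and define a POVM $\Lambda = \{\Lambda_x\}_{x \in \calX}$ on $\calH_B$ by $\Lambda_x \deq P_X(x)\,\sourcedo^{-1/2}\,\calW_x\,\sourcedo^{-1/2}$. The condition $P_X \in \calA(\sourcedo,\calW)$ ensures $\sum_x \Lambda_x = I_B$, so $\Lambda$ is a valid POVM; moreover $\sqrt{\sourcedo}\,\Lambda_x\sqrt{\sourcedo} = P_X(x)\,\calW_x$, so measuring $\sourcedo$ with $\Lambda$ produces outcome $x$ with probability $P_X(x)$ and leaves the reference in state $\calW_x$, giving mutual information exactly $I(X;\refstate)_{\sigma}$. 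Applying Winter's measurement compression protocol to the iid measurement $\Lambda^{\tensor n}$ on $\sourcedo^{\tensor n}$ then yields a coarser POVM $\{A_m\}_{m=1}^{\Theta}$ with $\tfrac{1}{n}\log\Theta \leq I(X;\refstate)_{\sigma} + \epsilon$ together with a decoder $f:[\Theta] \to \calX^n$ constructed from random codewords drawn iid from $P_X^{\tensor n}$. A typical-subspace analysis combined with the gentle operator lemma and operator Chernoff-type bounds then yields $\Xi(\Gamma^{(n)}, f) \leq \epsilon$.

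For the converse, I start from the chain
\[
\log\Theta \;\geq\; H(M) \;\geq\; I(M;\refstate^n)_\omega \;\geq\; I(X^n;\refstate^n)_\omega,
\]
where $\omega$ is the joint state of the coarsened outcome $X^n \deq f(M)$ and the reference $\refstate^n$, with the last inequality by quantum data processing through the deterministic map $f$. Completeness of the POVM gives $\omega^{\refstate^n} = \sourcedo^{\tensor n}$ exactly, hence $H(\refstate^n)_\omega = n\,S(\sourcedo)$. Invoking $\Xi(\Gamma^{(n)},f) \leq \epsilon$ together with the AFW continuity inequality, one passes from $\omega$ to the idealized factored CQ state
\[
\tau^{X^n \refstate^n} \deq \sum_{x^n} p(x^n)\,\ketbra{x^n}^{X^n} \tensor \bigotimes_{i=1}^n \calW_{x_i},
\]
with $p(x^n) \deq \Tr(A_{f^{-1}(x^n)}\sourcedo^{\tensor n})$, at the cost of an $o(n)$ term. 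The tensor structure of $\tau$ given $X^n$ yields $H(\refstate^n \mid X^n)_\tau = \sum_{i=1}^n H(\refstate \mid X_i)_{\tau_i}$, where $\tau_i \deq \sum_x p_i(x)\,\ketbra{x}^{X} \tensor \calW_x$ and $p_i$ is the $i$-th marginal of $p$. Combining, $I(X^n;\refstate^n)_\omega \geq \sum_{i=1}^n I(X_i;\refstate)_{\tau_i} - n\delta(\epsilon)$. Finally, $\omega^{\refstate_i} = \sourcedo$ and $\omega \approx \tau$ force $\sum_x p_i(x)\calW_x \approx \sourcedo$, so each marginal $p_i$ lies in an $\epsilon$-neighborhood of $\calA(\sourcedo,\calW)$; continuity of $I(X;\refstate)_{\sigma}$ in $P_X$ together with continuity of the minimum over the compact set $\calA(\sourcedo,\calW)$ yields $I(X_i;\refstate)_{\tau_i} \geq \min_{P_X \in \calA(\sourcedo,\calW)} I(X;\refstate)_{\sigma} - \delta'(\epsilon)$, concluding $R \geq \min_{P_X \in \calA(\sourcedo,\calW)} I(X;\refstate)_{\sigma}$.

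The main obstacle I anticipate is in achievability: the criterion $\Xi(\Gamma^{(n)},f)$ compares the coarsened post-measurement state against $\bigotimes_i \calW_{x_i}$ weighted by the \emph{actual} coarsened probability $\Tr(A_{f^{-1}(x^n)}\sourcedo^{\tensor n})$ rather than against $P_X^{\tensor n}(x^n)$. This is strictly weaker than Winter's original faithful-simulation fidelity criterion, which requires common randomness at rate $H(X\mid\refstate)$; leveraging the coarsening by $f$ to avoid this extra randomness and attain the pure classical rate $I(X;\refstate)_{\sigma}$ is the technical crux of the argument. The converse, while more routine, also requires careful bookkeeping of the AFW continuity estimates (scaling as $\epsilon \log \dim \calH_B$) to ensure that both $\tau \approx \omega$ and each approximate marginal $p_i$ combine cleanly to yield the single-letter bound over the constraint set $\calA(\sourcedo,\calW)$.
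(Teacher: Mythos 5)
Your proposal follows essentially the same route as the paper: for achievability you identify the canonical single-letter POVM $\Lambda_x = P_X(x)\,\sourcedo^{-1/2}\calW_x\sourcedo^{-1/2}$, build the $n$-letter sub-POVM from random codewords drawn from $P_X$ via Winter's construction, and bound the global posterior error $\Xi$ by typicality and the gentle-measurement lemma (the paper's Propositions 4--6), correctly noting that the criterion only pins down the conditional post-measurement state given $x^n$ --- not the output distribution itself --- which is exactly why rate $I(X;\refstate)_\sigma$ is attainable without shared randomness. The converse is nearly identical; the only cosmetic difference is that you single-letterize marginal-by-marginal and invoke continuity for each $\tau_i$, whereas the paper averages into a single state $\omega_Q$ via concavity of conditional quantum entropy before applying AFW and the continuity-at-$\epsilon=0$ argument once.
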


\begin{proof}
    A proof of the achievability is provided in Section \ref{sec:qc_Achievability}, and a converse proof is provided in Section \ref{sec:qc_converse}.
\end{proof}

\subsection{Lossy Classical Source Coding}


Consider a stationary discrete memoryless source (DMS) $X$ characterized by a source distribution $\px$ over a finite alphabet $\calX$.

\begin{definition}[Source Coding Setup] A source coding setup is characterized by a triple {$(\px,\hat{\calX},\prevTC)$} where $\px$ is the source distribution  over a finite alphabet $\calX$, $\hat{\calX}$ is the reconstruction alphabet, and $\prevTC:\hat{\calX} \rightarrow \calX$ is the posterior (backward) channel, i.e., the single-letter conditional distribution of source given the reconstruction. 
\end{definition}
\noindent We use 
$\px^n$ and $\prevTC^n$ to denote IID distributions, i.e., 
\[\px^n(x^n) = \prod_{i=1}^n\px(x_i) \eqand \prevTC^n(x^n|\xhat^n) = \prod_{i=1}^n\prevTC(x_i|\xhat_i).\]

\begin{definition}[Lossy Source Compression Protocol]
For a given source distribution $\px$ and reconstruction alphabet $\hat{\calX}$,
an $(n,\Theta)$ lossy source compression protocol consists of  $(i)$ a randomized encoding map $\encodern:\calX^n \longrightarrow [\Theta]$ and 
$(ii)$ a randomized decoding map $\decodern:[\Theta] \longrightarrow \hat{\calX}^n$, as shown in Figure \ref{fig:clslossyprotocol}. 
\end{definition}
\begin{figure}[!htb]
    \centering
    \includegraphics[scale=0.9]{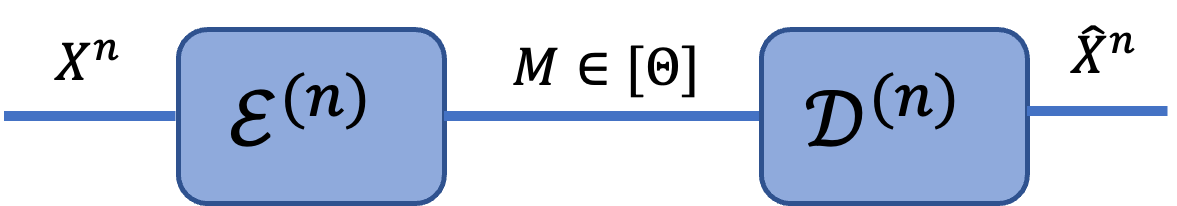}
    \caption{Illustration of Lossy Classical  Source Compression Protocol.}
    \label{fig:clslossyprotocol}
    \vspace{-0.5in}
\end{figure}

\begin{definition} [Achievability] Given a source coding setup $(\px,\hat{\calX},\prevTC)$,
a rate $R$ is said to be achievable if for all $\epsilon >0$ and all sufficiently large $n$, there exists an 
$(n,\Theta)$ lossy source compression protocol such that $\frac{1}{n}\log \Theta \leq R + \epsilon$, and $\Xi(\encodern,\decodern) \leq \epsilon$, where 
\begin{equation}\label{def:error_constraint}
  \Xi(\encodern,\decodern)\deq \frac{1}{2}\sum_{\xn \xhat^n}\absolute{P_{X^n\hat{X}^n}(x^n,\hat{x}^n) -
    P_{\hat{X}^n}(\hat{x}^n)
    \prod_{i=1}^n\prevTC(x_i|\hat{x}_i)},
\end{equation}
and
\[ 
P_{X^n\hat{X}^n}(x^n,\hat{x}^n) =  {\px^n(x^n)} \sum_{m \in [\Theta]} \encodern(m|x^n) \decodern(\hat{x}^n|m), 
\text{  for all  } (x^n, \xhat^n)\in \calX^n \times \hat{\calX}^n,\] is the system-induced distribution,
 and $P_{\hat{X}^n}\prevTC^n$ is the approximating distribution.
\end{definition}
In other words, the posterior distribution of the source given the reconstruction should look like $n$-product posterior channel $\prevTC^n$. Our objective is to characterize the set of all achievable rates using single-letter information quantities. 
\begin{theorem}[Lossy Source Compression Theorem] \label{thm:clsrate_distortion} For a $(\px,\hat{\calX},\prevTC)$ source coding setup, a rate $R$ is said to be achievable if and only if $\calA(\px,\prevTC)$ is non-empty, and \begin{equation}\label{eqn:clsratedistortion}R \geq \min_{\pxhat \in \calA(\px,\prevTC)} I(X;\Xhat), \end{equation}
where $\calA$ is the set of reconstruction distributions defined as \[\calA(\px,\prevTC) \deq \{\pxhat \in \calP(\hat{\calX}):\sum_{\hat{x}} \pxhat(\hat{x}) \prevTC(x|\hat{x}) = \px(x) , \text{ for all } x \in \calX\}.\]
\end{theorem}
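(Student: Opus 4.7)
The plan is to prove Theorem \ref{thm:clsrate_distortion} as a direct classical analogue of Theorems \ref{thm:mainResult} and \ref{thm:qclossysourcecoding}: the achievability uses the likelihood encoder combined with a soft-covering (channel-resolvability) argument, and the converse uses a data-processing chain together with the AFW continuity of mutual information.

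For achievability, I would fix a minimizer $\pxhat^\star \in \calA(\px,\prevTC)$, set $R^\star \deq I(X;\Xhat)_{\pxhat^\star\prevTC}$, and pick any rate $R > R^\star$. Sample a random codebook $\codebook = \{\Xhat^n(m)\}_{m=1}^{\Theta}$ with $\Theta = \lceil 2^{nR}\rceil$ and each $\Xhat^n(m)$ drawn i.i.d.\ from $\pxhat^\star$; let the decoder be deterministic, $\decodern(\xhat^n|m) = \I\{\xhat^n = \Xhat^n(m)\}$, and let the encoder be the likelihood encoder $\encodern(m|\xn) \propto \prevTC^n(\xn|\Xhat^n(m))$ of \cite{cuff2013distributed,atif2022source}. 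The key device is the proxy joint
\[
Q_{M\Xn}(m,\xn) \deq \tfrac{1}{\Theta}\,\prevTC^n\bigl(\xn\,\big|\,\Xhat^n(m)\bigr),
\]
under which $M$ is uniform. Since the likelihood encoder reproduces exactly the posterior $Q(m|\xn)$, one has $\|P_{M\Xn}-Q_{M\Xn}\|_1 = \|\px^n - Q_{\Xn}\|_1$, where $Q_{\Xn}(\xn) = \tfrac{1}{\Theta}\sum_m\prevTC^n(\xn|\Xhat^n(m))$. The condition $\pxhat^\star\in\calA$ ensures $\mathbb{E}[Q_{\Xn}] = \px^n$, so the soft-covering lemma drives $\mathbb{E}_{\codebook}\|Q_{\Xn}-\px^n\|_1\to 0$ whenever $R > R^\star$. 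Propagating by data processing through the decoder, followed by a triangle inequality to swap the codebook's empirical distribution $Q_{\Xhat^n}$ for the induced marginal $P_{\Xhat^n}$ in the target, yields $\mathbb{E}_{\codebook}[\Xi(\encodern,\decodern)] \to 0$, and a derandomization selects a deterministic codebook.

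For the converse, starting from any $(n,\Theta)$ protocol with $\Xi \le \epsilon$ and $\tfrac{1}{n}\log\Theta \le R + \epsilon$, the standard chain
\[
\log\Theta \;\geq\; H(M) \;\geq\; I(M;\Xn) \;\geq\; I(\Xhat^n;\Xn) \;\geq\; \sum_{i=1}^n I(\Xhat_i;X_i)
\]
follows from data processing combined with the memoryless-source identity $I(\Xhat^n;X_i|X^{i-1}) = I(\Xhat^n,X^{i-1};X_i)$. The remaining task is to lower-bound each $I(\Xhat_i;X_i)$ by $R^\star$. Monotonicity of total variation puts $P_{X_i\Xhat_i}$ within $2\epsilon$ of $P_{\Xhat_i}\prevTC$, and since the $X_i$-marginal is exactly $\px$, the $X$-marginal of $P_{\Xhat_i}\prevTC$ is $2\epsilon$-close to $\px$; projecting $P_{\Xhat_i}$ onto the nonempty closed convex polytope $\calA(\px,\prevTC)$ produces some $\pxhat^{(i)}\in\calA$ nearby, and two applications of the AFW inequality give $I(\Xhat_i;X_i)_P \geq I(X;\Xhat)_{\pxhat^{(i)}\prevTC} - \delta(\epsilon) \geq R^\star - \delta(\epsilon)$ with $\delta(\epsilon)\to 0$. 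Averaging over $i$ and letting $\epsilon\to 0$ gives $R \geq R^\star$.

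The hard part, in both directions, is the mismatch between the codebook/target marginals and the induced marginal $P_{\Xhat^n}$ that appears in \eqref{def:error_constraint}: on the achievability side this forces the two-step soft-covering argument sketched above, and on the converse side it forces the projection of each $P_{\Xhat_i}$ onto $\calA(\px,\prevTC)$ to be paired with a quantitative AFW estimate, uniform enough to absorb the projection cost without inflating the bound.
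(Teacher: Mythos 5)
Your proposal follows the same broad achievability-and-converse template as the paper and would yield a correct proof, but both halves deviate from the paper's technical devices in ways worth flagging. On the achievability side you invoke the (unpruned) Cuff likelihood encoder together with soft covering as a black box, whereas the paper's Section~\ref{subsec:clsachievability} instantiates a typicality-pruned likelihood encoder \eqref{eqn:clssubpmfenc} and proves the covering statements from scratch by a Chernoff bound (Propositions~\ref{prop:clssubPMF}--\ref{prop:clserrorcovering}): your route is shorter given the soft-covering lemma as a primitive, the paper's is self-contained, and the ``triangle-inequality swap'' of $Q_{\hat{X}^n}$ for $P_{\hat{X}^n}$ you describe is correct and matches what the paper is implicitly doing. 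On the converse side your chain of inequalities reproduces the paper's up to $\sum_i I(X_i;\hat X_i)$, but then you diverge: the paper first combines the per-letter terms into $nI(X_Q;\hat X_Q)$ by convexity of mutual information in the channel (steps $(b)$--$(d)$), applies the AFW estimate once to land in a relaxed set $\calA_\epsilon$, and then closes via the compactness argument of Lemma~\ref{lem:intersectionSets}; you instead lower-bound each $I(X_i;\hat X_i)$ separately by projecting $P_{\hat X_i}$ onto the polytope $\calA(P_X,W_{X|\hat X})$ and invoking AFW twice. That route is valid, but the step ``projecting onto the polytope produces some $\pxhat^{(i)}\in\calA$ nearby'' is not a consequence of the TV bound alone -- it is a linear-programming stability statement (a Hoffman bound for the polytope defined by $\sum_{\hat x}\pxhat(\hat x)W(x|\hat x)=P_X(x)$, $\pxhat\geq 0$, $\sum\pxhat=1$) and needs to be stated and used explicitly, with a constant that depends on $W_{X|\hat X}$. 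You acknowledge the delicacy, but as written this is the one substantive gap; the paper sidesteps it entirely by working with the relaxed feasibility set $\calA_\epsilon$ and a qualitative limiting argument rather than a quantitative projection.
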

\begin{proof}
A proof of the achievability is provided in Section \ref{subsec:clsachievability}, and a converse proof is provided in \ref{subsec:clsconverse}.
\end{proof}
\begin{remark}[Comparison with Shannon's noiseless source compression]
    Noiseless source compression requires $\lim_{n\rightarrow \infty}P(\Xn\neq \Xhat^n) =0$. In the current formulation,  if one chooses the identity posterior channel, i.e., $\prevTC(x|\xhat) = \11_{\{x = \xhat \}}$, for all $x \in \calX$, $\xhat \in \hat{\calX}$, we require $ \lim_{n\rightarrow \infty}\norm{P_{\Xn\Xhat^n} - P_{\Xhat^n}\prevTC^n}_{\normalfont \text{TV}}=0$. One can easily see that the two conditions are equivalent, and both formulations yield the same asymptotic performance limit of Shannon's entropy. 
    However, the standard source coding formulation using the average single-letter distortion criterion at zero distortion level is not equivalent to noiseless source compression.
\end{remark}

\section{Illustrative Examples}
\label{QLSC:sec:examples}

\begin{example}[Quantum Source Coding using Bit-Flip Channel]\label{example2}
    In this example, we analyze the performance of a lossy quantum compression protocol corresponding to a quantum source coding setup $(\rho^B, \calH_A, \calN_W),$ where $\rho^B$ is chosen as the maximally mixed state ($\rho^B = I_B/2$), and $\calN_W{:\calH_{\bA}\rightarrow\calH_{\bB}}$ is specified as a bit-flip channel.
    An isometry $W{:\calH_{\bA}\rightarrow\calH_{\bB}\tensor \calH_E}$ for $\calN_W$ can be specified as $$W = \sqrt{1-p}I\tensor \ket{0}^{E}+\sqrt{p} X \tensor \ket{1}^E,$$ where $\calN_W(\rho^{\bA}) = \Tr_E(W\rho^{\bA} W^\dagger)$ for all $p \in (0,1/2)$. Note that the canonical purification $\ket{\psi_\rho}^{\bB B}$  of $\rho^B$ is given by
    \begin{align}
        \ket{\psi_\rho}^{\bB B} = \frac{1}{\sqrt{2}}\round{\ket{0}^{\bB}\ket{0}^{B} + \ket{1}^{\bB}\ket{1}^{B}},
    \end{align}
    where $\ket{0}^{\bB} \deq (I\tensor \bra{0}^B)\ket{\Gamma}^{\bB B}$. This implies, $\rho^{\bB} = I_{\bB}/2$.
    To compute the asymptotic performance of the protocol for this source coding setup, as characterized by Theorem \ref{thm:mainResult}, we first need to identify a $\rho^{\bA}$ such that $\calN_W(\rho^{\bA}) = \rho^{\bB}$. 
    A simple computation reveals 
    $\calS(\rho^B,\calN_W) = \{{I_{\bA}/2}\}$. 
     This gives
    \[\min_{\rho^{\bA} \in \calS(\rho^B, \Wchannel)}{I^+_c(\calN_W,\rho^{\bA})} = I_c(\calN_W,{I_{\bA}}/{2} )  = S(\bB)_\sigma - S(E)_\sigma,\]
    where
    $\sigma^{\bB E} = W\rho^{\bA}W^\dagger$. Note that $\sigma^{\bB} = I_{\bB}/2$ and $\sigma^{E} = (1-p)\ketbra{0}^{E} + p\ketbra{1}^E,$  which gives $I_c(\calN_W,{I_{\bA}}/{2} ) = 1 -h_b(p)$, where  $h_b(p)\deq -p\log(p) - (1-p)\log(1-p).$ Therefore, a maximally mixed source can be compressed at a rate $1-h_b(p)$ while satisfying the error criterion as defined in \eqref{def:protocolError}. 
\end{example} 

\begin{example}[Quantum Source Coding using Depolarizing Channel]
    In this example, we study the performance of another candidate channel, namely a depolarising channel. We again proceed with the objective of compressing a maximally mixed state $\rho^{\bB} = \frac{I_{\bB}}{2}$, with $\calN_W$ defined as
    \begin{align*}
        \calN_W(\rho^{\bA}) = \left(1-\frac{3p}{4}\right)\rho^{\bA} + \frac{p}{4} (X\rho^{\bA}X^\dagger + Y\rho^{\bA}Y^\dagger + Z\rho^{\bA}Z^\dagger).
    \end{align*}
    for some $p \in [0,1].$
    A simple calculation to satisfy $ \calN_W(\rho^{\bA}) = \rho^{\bB} = \frac{I_{\bB}}{2}$ reveals  $\calS(\rho^B,\calN_W) = \{{I_{\bA}/2}\}$, for all $p \in (0,1)$. Analogous to the above example, finding an isometric extension of $\calN_W$ gives
    \begin{align*}
       \min_{\rho^{\bA} \in \calS(\rho^B, \Wchannel)}I^+_c(\calN_W,\rho^{\bA}) = I^+_c(\calN_W,{I_{\bA}}/{2} )  = \max\big\{0,1- h_b({3p}/{4}) - \frac{3p}{4}\log(3)\big\}.
    \end{align*}
\end{example}


\begin{example}[Hamming codes for quantum source compression] \label{example1}
In this example, we look at how  Hamming codes perform when evaluated using the standard single-letter (local) entanglement fidelity criterion. 
Hamming codes are perfect codes, and achieve the Delsarte upper bound on the covering radius \cite{mattson2012upper}. 
Again, let $\rho^{B} = \frac{I_B}{2}$. 
Let a maximally entangled bipartite state $\ket{\psi_m}^{\bB B}$,  defined as
\begin{align}
    \ket{\psi_m}^{\bB B} = \frac{1}{\sqrt{2}}\left(\ket{00}^{\bB B} +\ket{11}^{\bB B} \right),
\end{align}
be the purification of $\rho^{B}$.
Let $\FF_2$ denote a binary finite field, and let $G \in \FF_2^{k\times n}$ be the generator matrix of a Hamming code. To encode $\rho^B$, we appeal to the duality perspective, and use the decoder of a Hamming code. Then the encoding is defined as ${\calE(x^n)} \deq \mbox{argmin}_{u^k} \{w_{H}({u^k \; G \oplus x^n})\}, $ for all $x^n \in \FF_2^n, $ where $w_H$ denotes the Hamming weight. Similarly, the decoder can be described as mapping  $\calD\circ \calE((x^n))= \calE(x^n) G$.
 We describe this encoding as an isometric action $V_{H}:\calH_{B}^{\tensor n}\rightarrow \calH_{A}^{\tensor n}\tensor \calH_{E}^{\tensor n}$ taking the basis $\ket{x^n}^{B^n}$ to a vector $\ket{\calE(x^n)}^{A^n}\tensor\ket{x^n\oplus \calE(x^n)}^{E^n} \in \calH_{A}^{\tensor n}\tensor \calH_{E}^{\tensor n}$, where the subsystem $\calH_{A}^{\tensor n}$ stores the reconstruction and $\calH_{E}^{\tensor n} $ is eventually traced out, and $\calH_A$ is assumed to be an isomorphic copy of $\calH_B$. This implies that the encoded state can be characterized as
\begin{align}
    \rho^{\bBn A^n} &= \Tr_{E^n} \curly{V_H \ketbra{\psi_m^{\tensor n}}^{\bBn B^n}V_H^\dagger}. \nonumber 
\end{align}
Using
\begin{align*}
    V_H \ket{\psi_m^{\tensor n}}^{\bBn B^n} &= \frac{1}{\sqrt{2^n}}\sum_{x^n}\ket{x^n}_{\bBn} \ket{\calE(x^n)}_{A^n}\ket{x^n\oplus \calE(x^n)}_{E^n} \nonumber \\
    & = \frac{1}{\sqrt{2^n}} \sum_{c^n\in \calC}\;\sum_{e^n \in \FF_2^n: w_H(e^n) \leq 1 }\ket{c^n\oplus e^n}_{\bBn}\ket{c^n}_{A^n}\ket{e^n}_{E^n},
\end{align*} we can simplify $\rho^{\bBn A^n}$ as 
\begin{align}
    \rho^{\bBn A^n} = \frac{1}{2^n}\sum_{c^n, c'^n, e^n}\ket{c^n\oplus e^n}\langle{c'^n\oplus e^n}|\tensor \ket{c^n}\langle{c'^n}|,
\end{align}
where $\calC$ denotes the set of codewords of the Hamming code.
To compute the single-letter entanglement fidelity, we compute
\begin{align}
    \rho^{B_{R_i}A_i} = \Tr_{B_R^{n\backslash i} A^{n\backslash i}}\curly{\rho^{\bBn A^n}}
    =\frac{1}{2^n}\sum_{c^n, e^n}\ket{c_i\oplus e_i}\bra{c_i\oplus e_i}\tensor \ket{c_i}\bra{c_i} ,
\end{align}
where tracing is performed on all the subsystems except corresponding to ${B_R^{n\backslash i} A^{n\backslash i}}$, and the second equality follows from using the fact that 
minimum Hamming distance of any Hamming code is three.
This gives,
\begin{align}
    \langle{\psi}_m^{\bB B}|\rho^{B_{R_i}A_i}|{\psi}_m^{\bB B}\rangle = 
    \frac{1}{2}\frac{1}{2^n}\sum_{c^n e^n} \sq{\11_{\curly{c_i\oplus e_i = 0, c_i = 0}}+ \11_{\curly{c_i\oplus e_i = 1, c_i = 1}}} = \frac{1}{2^{n+1}}\sum_{c^n e^n}\11_{\curly{e_i = 0}}.
\end{align}
Therefore,
\begin{align}
    \frac{1}{n}\sum_{i=1}^n \langle{\psi_m}^{\bB B}|\rho^{B_{R_i}A_i}|{\psi_m}^{\bB B}\rangle &= \frac{1}{2^{n+1}n}\sum_{c^n e^n}\sum_{i=1}^n\11_{\curly{e_i = 0}}  = \frac{|\mathcal{C}|n^2}{2^{n+1}n} = \frac{n}{2\cdot 2^{n-k}}.
\end{align}
We know that for Hamming codes $k = 2^{r}-r-1$ and $n = 2^r - 1$, which simplifies as
\begin{align}
    \frac{1}{n}\sum_{i=1}^n \langle{\psi_m}^{\bB B}|\rho^{B_{R_i}A_i}|{\psi_m}^{\bB B}\rangle & = \frac{2^r-1}{2\cdot 2^r}, 
\end{align}
and goes to half as $r$ goes to infinity.
Note that $r \rightarrow \infty$ serves as both a demonstration of the code's asymptotic performance and the condition for the rate $k/n$ to reach unity. 
This results in a discontinuous asymptotic performance, since at rate exactly one, trivial identity encoding can be used to achieve the average single-letter fidelity of unity.
Further, note that $S(E^n)=\log (n+1)=r$. Hence the normalized amount of qubits that is dissipated, given by $\frac{S(E^n)}{n}$, approaches zero as $r \rightarrow \infty$, indicating that there is significant entanglement between the reconstruction and the reference. 

As was demonstrated in Example \ref{example2}, it is possible to compress a maximally mixed source in a continuous fashion, when the error is measured in accordance with the suggested definition in \eqref{def:protocolError},

\end{example}


\begin{example}(Lossy QC Source Coding for Binary Quantum Source with Binary Symmetric Posterior CQ Channel)
We develop an example similar to that studied in \cite{datta2013quantum}. Here we analyze the performance of the lossy QC source compression protocol corresponding to a lossy source coding setup $(\sourcedo,{\calX},\calW)$. The quantum source $\sourcedo$ generates the state $\ket{+}$ and $\ket{0}$ with probability $p$ and $(1-p)$, respectively, where $p \in [0,1/2]$, so the source density operator can be written as 
\[\sourcedo = p \ketbra{+} + (1-p) \ketbra{0},\]
 the reconstruction set ${\calX} = \curly{0,1}$, and the posterior CQ channel $\calW_x = (1-q)\ \omega_{x} + q \ \omega_{\Bar{x}}$, where  $$\omega_0 = \frac{1}{4}\ketbra{+} + \frac{3}{4}\ketbra{0} , \quad \omega_1 = \frac{3}{4}\ketbra{+} + \frac{1}{4}\ketbra{0},$$  $q\in [0,1/2]$, and $\Bar{x}\deq x \oplus 1$.
 Toward identifying the set $\calA$, we assume $\px(0) = r$, which characterizes the set $\calA$, and solve the following
 \begin{equation}\label{eqn:fig:qccond}\sourcedo = r\calW_0 + (1-r)\calW_1, \quad 0 \leq r \leq 1.\end{equation}
This gives, 
\[
\calA(\sourcedo,\calW) =
   \begin{cases}
        \curly{\frac{1}{2} + \frac{1-2p}{1-2q}} & \mbox{ if } 0 \leq q \leq 2\min\curly{\round{\frac{3}{4}-p},\round{p - \frac{1}{4}}}, q < \frac{1}{2}\\
        {[0,1]} & \mbox{ if } q=p=0.5 \\
        \phi & \mbox{ otherwise,}
   \end{cases} 
\]
where $\phi$ denotes the empty set.
We now compute the asymptotic performance described in Theorem \ref{thm:qclossysourcecoding}. For the above source coding setup, we have
\begin{align}\label{eqn:qcexp1}
    I(X;\refstate)_{\sigma} = S(\sourcedo) - r S(\calW_0) - (1-r) S(\calW_1).
\end{align}
where \[\sigma^{X\refstate} \deq r\ketbra{0} \tensor \calW_0 + (1-r)\ketbra{1} \tensor \calW_1.\] 
 Figure \ref{fig:QC_CLSeg} shows the QC lossy source compression rate curve for the range of values of the parameter $q$ and source $\sourcedo$ with $p = 0.4$ and $0.5$. Note that the curve decreases monotonically with $q$, as expected. 



\end{example}

\begin{figure}
    \centering
    \includegraphics[scale=0.22]{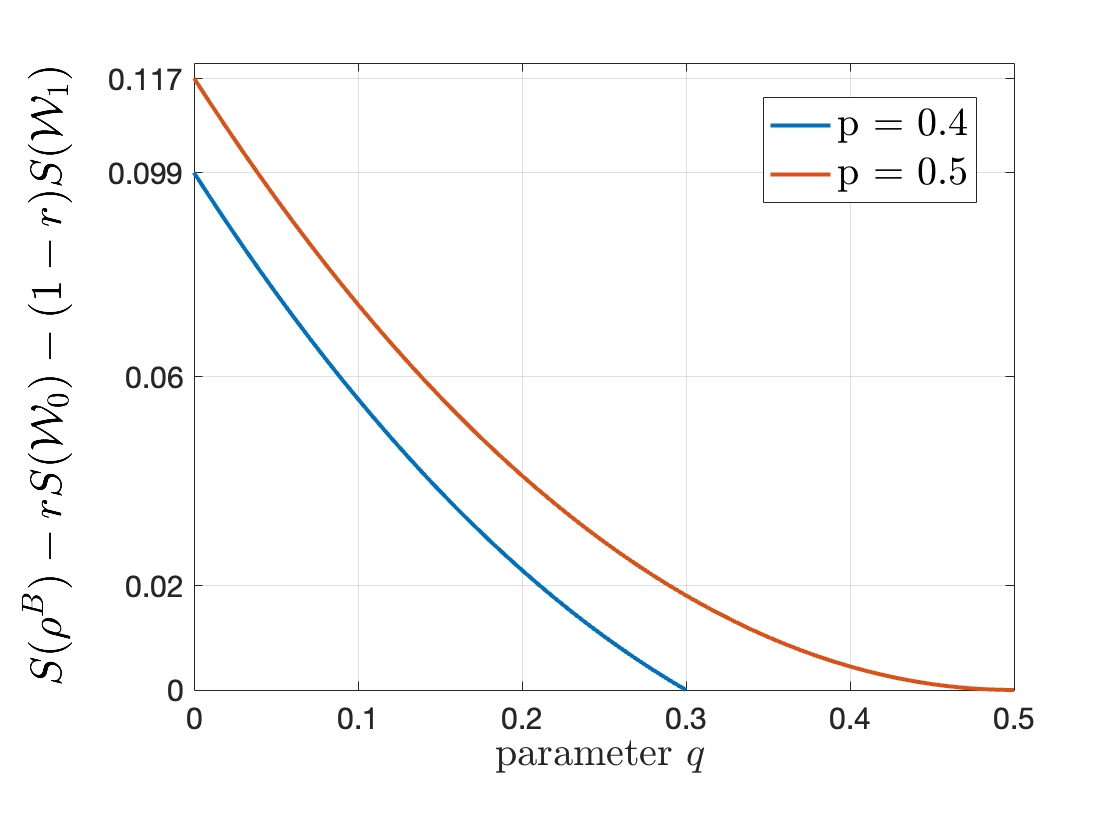}
    \vspace{-0.4in}
    \caption{Example for Lossy QC Source Coding for Binary Quantum Source with Binary Symmetric Posterior CQ Channel.}
    \vspace{-0.2in}
    \label{fig:QC_CLSeg}
\end{figure}
\begin{example}(Lossy Classical Source Coding for Binary Source with Binary Symmetric Channel (BSC) as Posterior Channel)
In this example, we analyze the performance of the lossy source compression protocol corresponding to a lossy source coding setup $(\px,\hat{\calX},\prevTC)$, where $\px \sim \text{\normalfont Bernoulli}(p)$, $\hat{\calX} = \curly{0,1}$, $\prevTC \sim \text{\normalfont BSC}(q)$, and  $p,q \in [0,1/2]$.
 Toward identifying the set $\calA$, we assume $\pxhat(0) = r$, which characterizes the set $\calA$, and solve the following system of linear equations:
\begin{equation}\label{eqn:fig:clsegcond}
p = r(1-q) + (1-r)q\quad \eqand \quad (1-p) = rq + (1-r)(1-q).\end{equation}
This gives, 
\[
\calA(\px,\prevTC) = \begin{cases}
    \curly{\frac{p-q}{1-2q}} &  \mbox{ if } 0 \leq q \leq \min\{p,(1-p)\}, q < \frac{1}{2}\\
   [0,1] & \mbox{ if } q = p = 0.5\\
    \phi & \mbox{ otherwise,}
\end{cases}
\]
where $\phi$ denotes the empty set. We now compute the asymptotic performance described in Theorem \ref{thm:clsrate_distortion}. For the above source coding setup, we have
\begin{align}\label{eqn:clsexp1}
    I(X;\Xhat) = H(X) - H(X|\Xhat) = h_b(p) - h_b(q).
\end{align}

\noindent Moreover, observe that the rate in \eqref{eqn:clsexp1} is identical to the rate-distortion function for a Bernoulli($p$) source
with Hamming distortion criterion  for $D < p$ \cite[Theorem 10.3.1]{cover2006elements}.  
\end{example}


\section{Proof of Theorem \ref{thm:mainResult}}
\label{sec:q_proof}


\subsection{Achievability Proof Overview}
\label{sec:proof_overview}
We provide a brief overview of the achievability proof before formally presenting one. The proof we present here is inspired by Devetak's work in \cite{devetak2005private} for the quantum channel communication problem (also detailed in \cite[Chapter 24]{wilde_arxivBook}). An integral component of that work is the decomposition of coherent information as the difference of two Holevo information quantities. 
We intend to perform a similar decomposition, but from the perspective of the given map $\calN_W$. Toward this, for the given source $\rho^B$, we first search for a $\rho^{\bA} \in \calD(\calH_{\bA})$, satisfying $\calN_W(\rho^{\bA}) = \rho^{\bB}.$ 
Once found, using the spectral decomposition, we expand $\rho^{\bA}$ as $\rho^{\bA} = \sum_{a\in \calA}\lambdaaA \ketbra{a}^{\bA}$, for some finite set $\calA$. Observe that since $\ketbra{a}^{\bA}$ is pure, $S(\calN_W(\ketbra{a}^{\bA})) = S(\calN^c_W(\ketbra{a}^{\bA}))$, where $\calN^c_W:\calH_{\bA} \rightarrow \calH_{E}$ is a complementary CPTP map of $\calN_W$, defined using the Stinespring's dilation $W:\calH_{\bA}\rightarrow\calH_{\bB}\tensor \calH_E$ corresponding to $\calN_W$. This also means that
\[\sum_{a\in\calA}\lambdaaA S(\calN_W(\ketbra{a}^{\bA})) = \sum_{a\in\calA}\lambdaaA S(\calN^c_W(\ketbra{a}^{\bA})).\]
Furthermore, from the linearity of CPTP maps, we see 
\[ \sum_{a\in\calA}\lambdaaA\calN_W(\ketbra{a}^{\bA}) = \calN_W(\rho^{\bA}) \qand \sum_{a\in\calA}\lambdaaA \calN^c_W(\ketbra{a}^{\bA}) = \calN^c_W(\rho^{\bA}).\]
This implies, we can rewrite $I_c(\calN_W,\rho^{\bA})$ as
\begin{align}
    I_c&(\calN,\rho^{\bA})  = S(\calN_W(\rho^{\bA})) - S(\calN^c_W(\rho^{\bA})) \nonumber \\
    & = \sq{S(\calN_W(\rho^{\bA})) - \sum_{a\in\calA}S(\lambdaaA\calN_W(\ketbra{a}^{\bA}))} - 
 \sq{S(\calN^c_W(\rho^{\bA})) - \sum_{a\in\calA}\lambdaaA S(\calN^c_W(\ketbra{a}^{\bA}))} \nonumber \\
 & = \chi\round{\curly{\lambdaaA,\calN_W(\ketbra{a}^{\bA})}} -  \chi\round{\curly{\lambdaaA,\calN^c_W(\ketbra{a}^{\bA})}}. \label{eq:diff_coherent}
\end{align}

Now our aim is to show the achievability of a rate equal to the above difference. After obtaining a similar decomposition,
Devetak achieved the performance limit by applying a coherent version of the CQ packing lemma \cite[Chapter 16]{wilde_arxivBook} followed by an application of the QC covering lemma \cite[Chapter 17]{wilde_arxivBook}. 
 Inspired by this, and the duality connections between the two problems, we achieve the difference obtained in \eqref{eq:diff_coherent}.
 In particular, we start with the objective of applying a coherent version of the QC covering lemma (or the measurement compression result \cite{winter}). Toward this, as shown in Figure \ref{fig:reverseChannel}, we first obtain a posterior reference map $V$ corresponding to the isometry $W$. Then we identify the action of $V$ on the state $\rho^B$ as a coherent quantum measurement. Now, using the approximating POVMs constructed in \cite{winter}, we  perform a coherent covering that allows us to compress the obtained measurement, and in turn the output of $V$, at rate given by the first Holevo information. The compression is performed while \emph{faithfully simulating} the action of $V$, giving a reconstruction satisfying the error criterion (as in \eqref{def:protocolError}). This procedure is delineated  in Step 1.1 where an encoder is constructed to perform coherent covering and in Step 2.1 where the effect of covering is analyzed, and a rate corresponding to the first Holevo information is achieved.

To get the needed coherent information, the rate corresponding to the second Holevo information must be further decreased. This entails diffusing more data or qubits into the environment (partial tracing). However, as will be demonstrated in the proof below,
such an action would destroy quantum correlations present in the source, possibly turning it into a classical mixture.
 Therefore, before such partial tracing operation, in Step 1.2 (Section \ref{sec:q_Achievability}) we construct a unitary operation that can condense the information into fewer qubits in the form of entanglement, and thus allowing for further decrease in the rate. This includes using the coherent post-measurement state of the subsystem $E$ as side information available at the encoder.
 The Step 2.2 of Section \ref{sec:q_Achievability} details this procedure and achieves the desired rate. Finally, an additional step (Step 2.3) is required to show the intended closeness as required in \eqref{def:protocolError}.

 Another intriguing aspect of the proof is the technique used for expurgation. As clear from the definition of the protocol, it only allows unitary or isometric operations, followed by partial tracing. When a code contains repeated codewords, it can be difficult to guarantee this.
An approach to removing all repetitions is to perform expurgations. This is achieved by finding a good code (satisfying all its constraints) while allowing a small fraction of repeats
and then expurgating just this fraction of the code. However, if there are exponentially many constraints, it becomes challenging to finding a good code.
The exponentially many covering constraints in Devetak's problem have a doubly exponential decreasing probability of error, which Devetak was able to take advantage of.
In the current problem we instead have exponentially many packing constraints which only have an exponential decay. In order to combat this,
we construct our proof to just require one packing constraint: the average of all exponentially many packing constraints.
This enables us to find a good code and successfully expurgate it. We now formally construct the arguments toward proving the statement of the theorem.




\subsection{Proof of Achievability}
\label{sec:q_Achievability}

The proof is mainly composed of four parts. In the first part, we develop the necessary single-letter ensembles required in the proof. In the next part, we provide the random coding setup and the distributions on the ensembles with which the codewords are generated. We also state here the constraints that a good code must satisfy and argue the existence of one code with non-zero probability. We further use an expurgation strategy to make all the codewords distinct. In the third part, we construct a protocol by developing all the actions of the encoder and the decoder and describing them as unitary  (or isometry) evolutions. Note that the only actions allowed by the protocol (Definition \ref{def:protocolcompression}) are quantum channels which can be described as unitary or isometric evolutions followed by partial trace operations. 
In parallel, we also provide the necessary lemmas needed for the next part. The last part deals with analyzing the action of encoding and decoding operations
on the source $\rho^B$, and then bounding the trace distance as in Definition \ref{def:protocolcompression}.

Toward this, fix two positive integers $M$ and $K$, and $\epsilon \in (0,1)$. Let 
$\calM$ and $\calK$ denote the sets $[0,M-1]$ and $[0,K-1]$, respectively.  Given a quantum source coding setup $(\rho^B, \calH_A, \calN_W)$, let  $\ket{\psi_\rho}^{\bB B}$ be the canonical purification of $\rho^B$ and $\rho^{\bB} \deq \Tr_{B}\{{\Psi_\rho}^{\bB B}\}$. Moreover, let $\calH_{\bA}$ be the reference space associated with $\calH_A$. Now choose 
  $\rho^{\bA} \in S(\rho^B,\calN_W)$. Let $\calH_E$ denote the Hilbert space such that $W:\calH_{\bA} \rightarrow \calH_{\bB} \tensor \calH_E$ forms an isometric extension (or Stinespring's dilation) of $\Wchannel$ according to \cite[Definition 5.2.1]{wilde_arxivBook} with $\dim(\calH_E) \geq \dim(\calH_{\bB})$. As shown in Figure \ref{fig:reverseChannel}, define a {posterior reference isometry} of $W$ with respect to $\rho^{\bB}$ (according to Definition \ref{def:VBar}) as the isometry  $V:\calH_{B} \rightarrow \calH_{A} \otimes \calH_E$ satisfying $(W \otimes I_{A})\ket{\psi_\rho}^{\bA A} = (I_{\bB} \tensor V )\ket{\psi_\rho}^{\bB B}$ where $\ket{\psi}^{\bA A}$ is the canonical purification of $\rho^{\bA}$. Let $\rho^A \deq \Tr_{\bB E}\{(I\tensor V){\Psi_\rho}^{\bB B}(I\tensor V)^\dagger\}$. 

{\subsubsection{Defining the ensembles}
In this section, we construct the single-letter ensembles corresponding to two Holevo information quantities used in the decomposition of coherent information discussed in Section \ref{sec:proof_overview}. }
We begin by using the definition of $\bV$ to obtain,
\begin{align} \label{eq:relationVtoVbar}
    (I_{\bB}\tensor V)\ket{\psi_\rho}^{\bB B} = (\bV \tensor I_{A}) \ket{\psi}^{\bA A} = \sum_{a\in\calA} \sqrt{\lambda_a^A}\bV\ket{a}^{\bA}\tensor \ket{a}^A,
\end{align}
where we use $\rho^A = \sum_{a\in \calA}\lambda_a^A\ketbra{a}^A$ as its spectral decomposition, and define $\ket{a}^{\bA} \deq (I_{\bA}\tensor \bra{a}^A)\ket\Gamma^{\bA A}$ for $a\in \calA$, for some finite set $\calA$. This also gives,
\begin{align}
    \bV\ket{a}^{\bA} = \frac{(\bra{a}^A\tensor I_{\bB E})(I_{\bB}\tensor V )\ket{\psi_\rho}^{\bB B}}{\sqrt{\lambda_a^A}} \label{eq:actionVbarOnA_0}.
\end{align}
Using the spectral decomposition of $\rho^B$ as $\rho^B = \sum_{b\in \calB}\lambda_b^{B}\ketbra{b}^B$, for $b\in\calB$ for some finite set $\calB$, we can rewrite the action of $V$ on $\rho^B$ as
\begin{align}
    (I_{\bB}\tensor V )\ket{\psi_\rho}^{\bB B} & = \sum_{b\in \calB}\sqrt{\lambda_b^B} \ket{b}^{\bB} \tensor V\ket{b}^B \nonumber \\
    & = \left(I_{\bB}\tensor I_{E} \tensor \sum_{a\in \calA}\ketbra{a}^A\right) \sum_{b\in \calB}\sqrt{\lambda_b^B} \ket{b}^{\bB} \tensor V\ket{b}^B
    \nonumber \\ 
    & = \sum_{a\in\calA}\sum_{b\in\calB} \sqrt{\lambda_b^B} \ket{b}^{\bB} M_a\ket{b}^B \tensor \ket{a}^A ,
\end{align}
where we define $\ket{b}^{\bB} \deq (I_{\bB}\tensor \bra{b}^B)\ket\Gamma^{\bB B}$, and $M_a: \calH_B \rightarrow \calH_E$ as 
\begin{align}\label{eq:def_Ma}
  M_a \deq  \big(I_E \tensor \bra{a}^A\big) V.
\end{align}
By defining a  POVM $\Lambda \deq \{M_a^\dagger M_a\}_{a\in\calA}$, we can identify a coherent measurement (isometry) $U_\Lambda$ corresponding to $\Lambda$ with $U_\Lambda \deq \sum_{a \in \calA}M_a\tensor \ket{a}^A$, and therefore express the action of $V$ as 
\begin{align}
    (I_{\bB}\tensor V )\ket{\psi_\rho}^{\bB B} =  (I_{\bB}\tensor U_\Lambda)\ket{\psi_\rho}^{\bB B}.
\end{align}
Now our objective is to \emph{faithfully simulate} the action of the isometry (or the coherent measurement) $U_\Lambda$ while using an exponentially smaller subspace in $\calH_{A^n}$. Equivalently, we intend to minimize the amount of qubits needed to represent the quantum state in the Hilbert space $\calH_{A^n}$. Employing Schumacher's compression \cite{schumacher1995quantum}, one can only achieve a rate of Von-Neumann entropy while faithfully simulating $U_\Lambda$. However, 
since $U_\Lambda$ is a coherent measurement, we employ a coherent version of the measurement compression protocol \cite{winter} and demonstrate a faithful simulation of the isometry while further decreasing the resource requirement. 
In particular, an approximating coherent measurement (henceforth referred to as the covering isometry) $U_\calM$ is constructed to faithfully simulate the action of $U_\Lambda$ while requiring the rate equal to Holevo quantity corresponding to the canonical ensemble  $\{\lambdaaA, \rhohata\}$, where 
\begin{align}
    \rhohata \deq \frac{\sqrt{\rho^{\bB}}(M_a^\dagger M_a)^{T}\sqrt{\rho^{\bB}}}{\lambdaaA} \qand (M_a^\dagger M_a)^{T} \deq \sum_{b,b'} \ket{b}\langle{b'}|^{\bB} \langle{b'}|(M_a^\dagger M_a)\ket{b}^{B}. 
\end{align}


\noindent Observe that using the definition of $M_a$ from  \eqref{eq:def_Ma}, it follows
\begin{align}
    \label{eq:Ma_orthogonal}
    \Tr{M^\dagger_{a'}M_a\rho^B} &= 
\Tr{(I_{\bB E} \tensor \bra{a})V\ketbra{\psi_\rho}^{\bB B}V^\dagger (|{a'}\rangle\tensor I_{\bB E})} \nonumber \\
& = \sum_{b}\sqrt{\lambda_a^A \lambda_{a'}^A}\Tr{\bra{b}\bV |{a'}\rangle\bra{a} \bV^\dagger \ket{b}} \nonumber \\
& = \sum_{b}\sqrt{\lambda_a^A \lambda_{a'}^A} \bra{a}\bV^\dagger \ketbra{b} \bV |{a'}\rangle = \lambdaaA\cdot \11_{\{a=a'\}}, 
\end{align}
for all $a,a' \in \calA$, where the first equality uses the definition of $M_a$, and the second follows from using the relation \eqref{eq:relationVtoVbar}.
{Using the simplification from \eqref{eq:actionVbarOnA_0}, it is useful to note 
\begin{align}
    \bV\ket{a}^{\bA} = \frac{(I_{\bB}\tensor M_a) \ket{\psi_\rho}^{\bB B}}{\sqrt{\lambda_a^A}} \label{eq:actionVbarOnA}.
\end{align}}
For the second Holevo information, we define the packing ensemble $\{\lambda_a^{E}, \tau_a^E\}$ as
\begin{align}
    \tau_a^E \deq \frac{\Tr_{\bB}{(I_{\bB}\tensor M_a)\Psi_\rho^{\bB B}}(I_{\bB}\tensor M_a)^{\dagger}}{\lambda_a^{E}} = \frac{M_a \rho^B M_a^\dagger}{\lambda_a^{E}}, \;\; \eqand \;\; 
    \lambdaaE \deq \lambdaaA. \label{eq:def_tau}
\end{align}
The discussion on how this ensemble is employed to reduce the rate follows in the sequel.

  {\subsubsection{Random Coding and Expurgation}\label{sec:findingCode}
  In this section, we construct the random coding argument, and simultaneously, define all the conditions that pertain to the construction of a good random code. Subsequently, we randomly generate one code that satisfies these constraints. We then expurgate this code to ensure no repetitions are present.}
  Toward constructing an approximating coherent measurement $U_{\calM}$, randomly and independently select $|\calM|\times|\calK|$ sequences $A^n(m,k)$ according to the following pruned distribution
 \begin{align}\label{def:distribution}
     &\PP\left(A^n(m,k) = a^n\right) = \left\{\begin{array}{cc}
          \dfrac{\lambdaanA}{(1-\varepsilon)} \quad & \mbox{for} \quad a^n \in \mathcal{T}_{\delta}^{(n)}(A)\\
           0 & \quad \mbox{otherwise},
     \end{array} \right. \!\!
 \end{align} 
 where $\varepsilon = \sum_{a^n \notin \mathcal{T}_{\delta}^{(n)}(A)}\lambdaanA$, $\mathcal{T}_{\delta}^{(n)}(A)$ is the $\delta$-typical set corresponding to the distribution $\lambdaaA$ on the set $\calA$, and $\lambdaanA \deq \Pi_{i=1}^n \lambda_{a_i}^A$.
 Let $\mathcal{C}^{(m)}$ denote the codebook $\{A^n(m,k)\}_{k\in\calK}$ for a given $m$, and $\mathcal{C}$ denote the collection of all codebooks $\{\mathcal{C}^{(m)}\}_{m\in\calM}$.
Further, for each $a^n\in \TDelta^{(n)}(A)$ define 
\begin{align}
    \rhotildean \deq \hat{\pi}\pi_{\rho^{\bB}} \pi_{a^n} \rhohata \pi_{a^n} \pi_{\rho^{\bB}} \hat{\pi},
\end{align}
and $\rhotildean  = 0,  $ for $a^n \notin \TDelta^{(n)}(A)$, where $\rhohata \deq \bigotimes_{i} \hat{\rho}^{\bB}_{a_i}$, 
 ${\pi}_{\rho^{\bB}}$ and $ \pi_{a^n}$ are the $\delta-$typical and conditionally typical projectors defined as in \cite[Def. 15.1.3]{wilde_arxivBook} and \cite[Def. 15.2.4]{wilde_arxivBook}, with respect to $\rho^{\bB} = \sum_{a\in\calA}\lambdaaA \rhohata$ and $\rhohata$, respectively, and $ \hat{\pi}$ is the cut-off projector as defined in \cite{winter}.
Using the Average Gentle Measurement Lemma \cite[Lemma 9.4.3]{wilde_arxivBook}, for any given $\epsilon \in (0,1)$, and all sufficiently large $n$ and all sufficiently small $\delta$, we have 
\begin{align} \label{eq:closeness_reference}
    \sum_{a^n\in \mathcal{A}^n}\lambdaanA\|\rhohatan - \rhotildean\|_1 \leq \epsilon.
\end{align}
A detailed proof of the above statement can be found in \cite[Eq. 35]{wilde_e}. 
Using these definitions, construct operators
\begin{align}\label{eq:A_uB_v}
A_{a^n}^{\bBn} &\deq  \gamma_{a^n} \bigg(\sqrt{{\rho^{\bB}}^{\otimes n}}^{-1}\rhotildea\sqrt{{\rho^{\bB}}^{\otimes n}}^{-1}\bigg),\;\;  \gamma_{a^n}\deq  \frac{1-\varepsilon}{1+\eta}\frac{1}{|\calM||\calK|}|\{(m,k):A^{n}(m,k)=a^n\}|,
\end{align}
and $\eta \in (0,1)$ is a parameter that determines the probability of not obtaining a sub-POVM. Note that in the definition of $A_{a^n}^{\bBn}$ the right hand side operates on $\calH_{\bBn}$, however, we define $A_{a^n}$ belonging to $\calL(\calH_{A^n})$. To obtain this, we transform  $A_{a^n}^{\bBn}$ as 
\[A_{a^n} = \sum_{b^n, \bar{b}^n}  \langle b^n|A_{a^n}^{\bBn}|\bar{b}^n\rangle_{\bB} 
 \ket{b^n}\langle\Bar{b}^n|_{B}.\]
 Then construct a sub-POVM $\Gamma^{(n)}$ as
\begin{align}
   \Gamma^{(n)}& \deq \{A_{a^n} \colon a^n \in  \mathcal{T}_{\delta}^{(n)}(A)\}.
\end{align} 
Let $\mathbbm{1}_{\{\mbox{sP}\}}$ denote the indicator random variable corresponding to the event that  $\Gamma^{(n)}$ forms a  sub-POVM. We have the following result.
\begin{prop}
\label{lem:Lemma for not sPOVM}
For any $\epsilon \in (0,1)$, any $\eta \in (0,1)$, any $\delta \in (0,1)$ sufficiently small, and any $n$ sufficiently large, we have 
$\mathbb{E}\left[\mathbbm{1}_{\normalfont \{\mbox{sP}\}}\right]>1-\epsilon,$
if $\frac{1}{n}\left(\log M + \log K \right)> \chi({\lambdaaB,\rhohata})$.
\end{prop}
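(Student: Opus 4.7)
The plan is to prove this proposition as a direct application of the Operator Chernoff Bound, following the standard strategy of Winter's measurement compression theorem, to the i.i.d.\ operators $\{\rhotildean(m,k)\}_{(m,k)\in\calM\times\calK}$ drawn under the pruned distribution \eqref{def:distribution}. First I would reformulate the event $\{\Gamma^{(n)} \text{ is a sub-POVM}\}$ in a convenient form. Using \eqref{eq:A_uB_v}, the condition $\sum_{a^n} A_{a^n} \leq I_{A^n}$ is equivalent (after transporting via the Hilbert space isomorphism $\calH_{B^n}\cong\calH_{\bBn}$ and restricting to the typical subspace where $\sqrt{\rho^{\bB n}}^{-1}$ is well defined) to the operator inequality
\begin{align*}
\frac{1}{|\calM||\calK|}\sum_{m,k} \rhotildean(m,k) \;\leq\; \frac{1+\eta}{1-\varepsilon}\,\rho^{\bB n}.
\end{align*}

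Next I would compute the expectation of each random term. Under the pruned distribution, by linearity and the definition of $\rhotildean$ through the typical and conditionally typical projectors, one has
\begin{align*}
\EE\bigl[\rhotildean(m,k)\bigr] \;=\; \frac{1}{1-\varepsilon}\sum_{a^n\in\TDelta^{(n)}(A)} \lambdaanA\, \rhotildean \;\leq\; \frac{1}{1-\varepsilon}\,\hat{\pi}\,\pi_{\rho^{\bB}}\,\rho^{\bB n}\,\pi_{\rho^{\bB}}\,\hat{\pi},
\end{align*}
by the operator inequality $\pi_{a^n}\rhohata \pi_{a^n} \leq \rhohata$ and the identity $\sum_{a^n}\lambdaanA \rhohata = (\rho^{\bB})^{\tensor n}$. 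Using properties of the typical projector $\pi_{\rho^{\bB}}$ and the cut-off $\hat{\pi}$, this gives the upper bound $\EE[\rhotildean] \leq \frac{1}{1-\varepsilon}\rho^{\bB n}$ on the typical subspace.

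The second ingredient is the operator norm bound. By construction, conditional typicality yields $\norm{\pi_{a^n}\rhohata \pi_{a^n}}_{\infty} \leq 2^{-n(\bar S - \delta')}$, where $\bar S \deq \sum_a \lambdaaA S(\rhohata)$, and the cut-off projector $\hat{\pi}$ (together with $\pi_{\rho^{\bB}}$) ensures $\rhotildean$ is supported in a subspace with bounded inverse density, so $\norm{\rhotildean}_\infty \leq 2^{-n(\bar S - \delta'')}$. I would then apply the Operator Chernoff Bound (e.g., \cite[Lemma 17.3.1]{wilde_arxivBook}) to the normalized i.i.d.\ operators $X_{m,k} = 2^{n(\bar S - \delta'')}\rhotildean(m,k)$, obtaining that
\begin{align*}
\PP\!\left(\frac{1}{|\calM||\calK|}\sum_{m,k}\rhotildean(m,k) \leq (1+\eta)\,\EE[\rhotildean]\right) \;\geq\; 1 - 2\,\dim(\calH_{\bB})^n \exp\!\bigl(-|\calM||\calK|\,\eta^2\, 2^{n(\bar S - \delta'')}\,c\bigr),
\end{align*}
for an absolute constant $c$. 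Combining the expectation bound with this concentration, and invoking the rate hypothesis $\frac{1}{n}(\log M + \log K) > \chi(\{\lambdaaB,\rhohata\}) = S(\rho^{\bB}) - \bar S$, the exponent in the Chernoff tail becomes doubly exponential in $n$, making the failure probability vanish. This yields the claimed inequality with probability at least $1-\epsilon$, establishing the sub-POVM property.

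The main technical obstacle is handling the interplay between the various projectors $\hat{\pi}$, $\pi_{\rho^{\bB}}$, $\pi_{a^n}$ carefully so that the operator inequalities above hold not just in expectation but uniformly on the relevant subspace, and ensuring that conjugation by $\sqrt{\rho^{\bB n}}^{-1}$ (which is only well-defined on the typical subspace) transforms the Chernoff bound into the desired bound $\sum_{a^n} A_{a^n} \leq I$. These subtleties are standard in the measurement compression literature, and the parameter $\eta$ together with the cut-off projector $\hat{\pi}$ precisely absorb the slack introduced by these approximations, as established in \cite{winter}.
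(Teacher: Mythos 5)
Your proposal follows the same route as the paper, which simply delegates this proposition to Winter's measurement compression theorem; what you have written out is essentially Winter's argument, and the reduction of the sub-POVM condition to operator concentration of $\frac{1}{|\calM||\calK|}\sum_{m,k}\tilde\rho^{\bB}_{m,k}$ around $\frac{1+\eta}{1-\varepsilon}\rho^{\bB\otimes n}$, together with the expectation and operator-norm bounds, are all correct. There is, however, a slip in the displayed Chernoff exponent that, as written, obscures exactly how the rate condition enters. After normalizing $X_{m,k}\deq 2^{n(\bar S-\delta'')}\tilde\rho^{\bB}_{m,k}$ with $\bar S\deq\sum_a\lambdaaA S(\rhohata)$, the Ahlswede--Winter bound gives a tail of the form $2\,D\exp(-c\,|\calM||\calK|\,\eta^{2}\,\lambda_{\min})$, where $\lambda_{\min}$ is the smallest eigenvalue of $\EE[X_{m,k}]$ on its support, \emph{not} $2^{n(\bar S-\delta'')}$ alone. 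The whole purpose of the cut-off projector $\hat\pi$ is to guarantee $\lambda_{\min}(\EE[\tilde\rho^{\bB}_{m,k}])\gtrsim \epsilon\, 2^{-n(S(\rho^{\bB})+\delta_{1})}$, which when combined with the normalization yields $\lambda_{\min}\gtrsim 2^{-n(S(\rho^{\bB})-\bar S+\tilde\delta)}=2^{-n(\chi(\{\lambdaaB,\rhohata\})+\tilde\delta)}$. It is only this $2^{-n\chi}$ factor, multiplying $|\calM||\calK|$, that makes the exponent doubly exponential precisely when $\tfrac1n(\log M+\log K)>\chi$; with the factor $2^{n\bar S}$ you wrote, the tail would already vanish doubly exponentially with no rate condition at all, which is evidently not what happens. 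Fixing this factor restores the correct dependence and your argument then matches the paper's cited source.
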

\begin{proof}
The result follows from \cite{winter}.
\end{proof}

Define the code dependent random variables $\mathit{E}_1$ and $\mathit{E}_2$ as 
\begin{align*}
    \mathit{E}_1 \deq \sum_{m\in \calM}\sum_{k\in\calK} (|\calM||\calK|)^{-{1}} \Tr{\Tilde{\rho}^{\bB}_{m,k}}, \qand \mathit{E}_2 \deq \sum_{m\in \calM}\sum_{k\in\calK} (|\calM||\calK|)^{-{1}} \left\| \Tilde{\rho}^{\bB}_{m,k} - \hat{\rho}^{\bB}_{m,k} \right\|_1,
\end{align*}
where $\hat{\rho}^{\bB}_{m,k}$, and $\Tilde{\rho}^{\bB}_{m,k}$ are used as shorthand notations to denote $\Tilde{\rho}^{\bB}_{a^n(m,k)}$ and $\Tilde{\rho}^{\bB}_{a^n(m,k)}$, respectively.
Further, using the results \cite[Eq. (28) and Eq. (35)]{wilde_e}, for all $\epsilon \in (0,1)$, we have 
$\EE[\mathit{E}_1] \geq 1-\epsilon, $ and $ \EE[\mathit{E}_2] \leq \epsilon, $
for all sufficiently large $n$ and all sufficiently small $\delta > 0$.

Now, considering the ensemble $\{\lambdaaE, \tau_a^E\}$, we construct the operators $\{\tau_{a^n(m,k)}^E\}$ using the codebook $\mathcal{C}$ and the distribution defined in \eqref{def:distribution}, where  $\tau_{a^n}^E \deq \bigotimes_{i} \tau_{a_i}^E$.
For this ensemble, we construct a collection of $n$-letter POVMs, one for each $m\in \calM$, capable of decoding the message $k \in \calK$. In particular, we employ the Holevo POVMs \cite{holevo2019quantum} defined as
\begin{align}
    \xi_k^{(m)} \deq \pi^\tau\pi^{(m)}_{k} \pi^\tau \quad \eqand \quad \Xi^{(m)}_k \deq \left(\sum_{k'\in\calK}\xi^{(m)}_{k'}\right)^{-1/2}\xi^{(m)}_k \left(\sum_{k'\in\calK}\xi^{(m)}_{k'}\right)^{-1/2}, \label{eq:povm_packing}
\end{align}
where $\pi^\tau$ is the $\delta-$typical projector (as in \cite[Def. 15.1.3]{wilde_arxivBook}) defined for the density operator $\tau \deq \sum_{a\in \calA}\lambdaaE \tau_a^E$, and $\pi^{(m)}_{k} $ denotes the strong conditional typical projectors (as in \cite[Def. 15.2.4]{wilde_arxivBook}) for the operators $\tau_{a^n(m,k)}$. For these POVMs, we know the average  probability of error can be made arbitrarily small.
More formally, we have the following.

\begin{prop}\label{lem:packingResult}
    Given the ensemble  $\{\lambda_a^{E}, \tau_a^E\}$ and the collection of POVMs $\{\Xi^{(m)}_k\}_{k}$, for any $\epsilon\in(0,1)$, 
    \begin{align}
        \EE\left[\frac{1}{|\calK|}\sum_{k\in \calK} \Tr{\Xi^{(m)}_k \tau_{k}^{(m)}}\right] \geq 1-\epsilon,
    \end{align}
    for sufficiently small $\delta > 0$ and for all sufficiently large $n$, and for all $m \in \calM$, if $\frac{1}{n}\log{K} < \chi(\{\lambda_a^{E}, \tau_a^E\})$, where $\tau_{k}^{(m)}$ is used as a shorthand for $\tau_{a^n(m,k)}$.
\end{prop}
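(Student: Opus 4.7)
The plan is to apply the standard HSW packing argument to the Holevo square-root POVM $\{\Xi_k^{(m)}\}$ built from the pinched operators $\xi_k^{(m)} = \pi^\tau \pi_k^{(m)} \pi^\tau$. The starting point is the Hayashi--Nagaoka operator inequality: for any $0 \leq S \leq I$ and any $T \geq 0$,
\begin{align*}
    I - (S+T)^{-1/2}\, S\, (S+T)^{-1/2} \;\leq\; 2(I - S) + 4 T.
\end{align*}
With $S = \xi_k^{(m)}$ and $T = \sum_{k' \neq k} \xi_{k'}^{(m)}$, pairing with $\tau_k^{(m)}$, averaging over $k \in \calK$, and taking expectation over the codebook reduces the task to bounding two expected traces: the ``true-codeword'' contribution $\EE\,\Tr\{(I - \xi_k^{(m)}) \tau_k^{(m)}\}$ and the ``interference'' contribution $\EE\sum_{k' \neq k}\Tr\{\xi_{k'}^{(m)} \tau_k^{(m)}\}$.

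For the true-codeword contribution, the pruned distribution in \eqref{def:distribution} forces each codeword into $\TDelta^{(n)}(A)$. Conditional typicality gives $\Tr\{\pi_k^{(m)} \tau_{A^n(m,k)}^E\} \geq 1 - \epsilon$ almost surely, and averaged typicality gives $\EE\,\Tr\{\pi^\tau \tau_{A^n(m,k)}^E\} \geq 1 - \epsilon - O(\varepsilon)$. Two applications of the gentle operator lemma (pinching first by $\pi_k^{(m)}$ and then by $\pi^\tau$) then yield $\EE\,\Tr\{\xi_k^{(m)} \tau_k^{(m)}\} \geq 1 - O(\sqrt{\epsilon})$, uniformly in $k$.

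For the interference contribution, independence of the codewords $A^n(m,k)$ and $A^n(m,k')$ for $k' \neq k$ gives
\begin{align*}
    \EE\,\Tr\{\xi_{k'}^{(m)} \tau_k^{(m)}\} \;=\; \Tr\bigl\{\pi^\tau\, \EE[\pi_{k'}^{(m)}]\, \pi^\tau\, \EE[\tau_k^{(m)}]\bigr\}.
\end{align*}
Combining the pruning bound $\EE[\tau_k^{(m)}] \leq (1-\varepsilon)^{-1} \tau^{\otimes n}$, the typical-subspace eigenvalue bound $\pi^\tau \tau^{\otimes n} \pi^\tau \leq 2^{-n(S(\tau) - \delta)} \pi^\tau$, and the conditional-typical rank bound $\Tr\{\pi_{k'}^{(m)}\} \leq 2^{n(\sum_a \lambdaaE S(\tau_a^E) + \delta)}$, the right-hand side is at most $(1-\varepsilon)^{-1}\, 2^{-n(\chi(\{\lambdaaE, \tau_a^E\}) - 2\delta)}$. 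Summing over the $|\calK| - 1$ interfering messages yields a total bound of order $|\calK|\cdot 2^{-n(\chi - 2\delta)}$, which vanishes as $n \to \infty$ provided $\frac{1}{n}\log|\calK| < \chi(\{\lambdaaE, \tau_a^E\})$ and $\delta$ is taken sufficiently small.

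The proof is not conceptually difficult --- it is essentially a textbook HSW argument --- so the main obstacle is just the bookkeeping of the various $\delta$- and $\varepsilon$-slacks from the pruning of \eqref{def:distribution}, the gentle operator lemma, and the typicality projectors, together with verifying that these slacks combine to leave a positive rate gap $\chi - O(\delta)$. Notably, no union bound over $m \in \calM$ is required here, since the proposition is stated for each fixed $m$ and averages only over the decoding index $k$, so no expurgation step enters this proof.
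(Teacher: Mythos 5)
Your proposal is correct and is essentially the paper's approach: the paper simply cites the classical--quantum packing lemma of \cite[Lemma 16.3.1]{wilde_arxivBook}, which is itself proved by the Hayashi--Nagaoka operator inequality together with exactly the two typicality trace bounds you spell out, and makes the identifications you implicitly use ($\pi^\tau$, $\pi_k^{(m)}$, $\tau_k^{(m)}$, $\Xi_k^{(m)}$ in the roles of the lemma's $\Pi$, $\Pi_x$, $\sigma_{C_m}$, $\Lambda_m$). Your closing remark that no union bound over $m$ and no expurgation are needed here is also accurate and consistent with the paper, which defers expurgation to the separate event $E_4$.
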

\begin{proof}
    The proof follows from the result of classical communication over quantum channels \cite{holevo2019quantum} or the packing lemma of \cite[Lemma 16.3.1]{wilde_arxivBook} while making the following identification. For each $m\in \calM$, identify $\calM$ with $\calK$, $\calX$ with $\TDelta^{(n)}(\calA)$, $\{\sigma_{C_m}\}_{m}$ with $\{\tau_{k}^{(m)}\}_k$, $\Pi$ with $\pi^{\tau}$, $\Pi_x$ with $\pi_k^{(m)}$, $d$ with $2^{n(S(E|A)_{\Bar{\tau}} + \Bar{\delta})}$,  $D$ with $2^{n(S(E)_{\Bar{\tau}} - \Bar{\delta})}$, and $\Lambda_m$ with $\Xi^{(m)}_k$, where $\Bar{\tau}^{AE} \deq \sum_a \lambda_a^E \ketbra{a}_A \tensor \tau_a^E$ and $\Bar{\delta}(\delta) \searrow 0$ as $\delta \searrow 0$. 
\end{proof}

\noindent The above result also implies a weaker average result which suffices here. This can be stated as $\EE[\mathit{E}_3] \geq 1-\epsilon$,
for sufficiently small $\delta > 0$ and for all sufficiently large $n$, if $\frac{1}{n}\log{K} < \chi(\{\lambda_a^{E}, \tau_a^E\})$, where
\begin{align}
    \mathit{E}_3 \deq \frac{1}{MK}\sum_{m\in\calM}\sum_{k\in \calK} \Tr{\Xi^{(m)}_k \tau_{k}^{(m)}} \label{eq:packingConstraint}.
\end{align}

Finally, toward finding a good code, we need one last property which is that all its codewords are distinct. In the dual, the quantum channel communication problem \cite{devetak2005private}, Devetak used the double exponential decay of the covering error to argue the existence of an expurgated code for exponentially many covering constraints. However, in the current problem, we have exponentially many packing constraints, with each having only an exponential decay in the error. To resolve this issue, we develop a proof that only requires the average of the packing constraints. However, in such a case, it becomes unclear as to what should be the expurgation strategy. For this, we introduce another event that captures the non-distinctness of the codebook, and expurgate with respect to this event. 
Precisely, we define a codeword $A^n(m,k)$ is bad if there exists $(m',k')\neq (m,k)$ such that $A^n(m,k) = A^n(m',k').$ Let 
\[
E_4 \deq \frac{1}{MK}\sum_{m\in\calM}\sum_{k\in\calK}\11_{\{A^n(m,k) \mbox{ is bad}\}}.
\]
Computing its expectation, we get
\begin{align}
    & \EE[E_4]  = \EE\sq{\frac{1}{MK}\sum_{m\in\calM}\sum_{k\in\calK} \11_{\curlys{\exists (m',k')\neq (m,k) \mbox{ such that } A^n(m,k) = A^n(m',k')}}} \nonumber \\
    & \overset{a}{\leq} \frac{1}{MK}\!\!\!\!\!\sum_{\substack{m, m'\in\calM\\k,k'\in\calK\\(m,k)\neq (m',k')} }
    \sum_{a^n \in \TDelta^{(n)}(A)}\EE\sq{\11_{\curlys{A^n(m,k)=a^n}}}\EE\sq{\11_{\curlys{A^n(m',k')={a}^n}}} 
    \overset{b}{\leq} MK 2^{-n(S(\lambdaaA) -\delta_1)} 
    \overset{}{\leq}\epsilon,
\end{align}
for all sufficiently large $n$ and sufficiently small $\delta>0$ if $\frac{1}{n}\left(\log M + \log K \right)< S(\lambdaaA),$ where (a)  uses the mutual independence of the codewords, and (b) define $\delta_1$ as $\delta_1(\delta,\varepsilon) \searrow 0 $ as $\delta,\varepsilon \searrow 0$.
 Using the Markov inequality and the union bound, we have 
\begin{align*}
    \PP\bigg(\{\IndSP = 1\} &\cap \curly{E_1 \geq 1-\sqrt\epsilon} \cap \curly{E_2 \leq \sqrt\epsilon}\cap \curly{E_3 \geq 1-\sqrt\epsilon}\cap \curly{E_4 \leq \sqrt{\epsilon}}\bigg) \geq 1-5\sqrt{\epsilon}.
\end{align*}
Therefore, for all $\epsilon \in (0,1/25),$ and for all sufficiently small $\delta>0$, for all sufficiently large $n$ there exists a code $\mathcal{C}$ that satisfies the conditions $\{\IndSP = 1\}$, $\curly{E_1 \geq 1-\sqrt\epsilon}$, $\curly{E_2 \leq \sqrt\epsilon}$, $ \curly{E_3 \geq 1-\sqrt\epsilon}$, and $\curly{E_4 \leq {\epsilon}}$, simultaneously if 
\begin{align}\tag{S-0}
    \frac{1}{n}\left(\log M + \log K \right)> \chi({\lambdaaB,\rhohata}),\;\;\; \frac{1}{n}\log{K} < \chi(\{\lambda_a^{E}, \tau_a^E\}),\;\;\;
\frac{1}{n}\left(\log M + \log K \right)< S(\lambdaaA).\label{eq:rateConditions}
\end{align} 
At this point, we choose one such code $\calC$ satisfying all the above conditions, and fix it for the rest of the analysis.

Toward showing that this chosen code achieves the asymptotic performance stated in the theorem statement, we expurgate the code $\calC$ with respect to the random variable $E_4$, ensuring that the code has all distinct codewords. The assumption of codebook being distinct becomes crucial at multiple places in the proof and will be highlighted as necessary.
Since $\curly{E_4 \leq \sqrt{\epsilon}}$ ensures at most $\sqrt{\epsilon}MK$ codewords in $\calC$ are not distinct, we remove $\sqrt{\epsilon}MK$ codewords from $\calC$. This is performed by first removing all the non-distinct codewords, and then further removing some more from the distinct ones arbitrarily (if needed) until we remain with a total of  $(1-\sqrt{\epsilon})MK$ codewords.
Let the expurgated set (the remainder of the codewords) be denoted by $\calC_\calE$, and define the sets $\calC_\calE^{(m)}$ as $\calC_\calE^{(m)} \deq \calC_\calE \cap \calC^{(m)}$. Observe that, all the codewords in $\calC_\calE$ are distinct.
However, as opposed to $\calC$ which was consistent with regards to the size of $\calC^{(m)}$ (equal to $K$ for all $m\in \calM$), $\calC_\calE$ has varying sizes. Therefore, we define $K'_m$ to denote the size of $\calC_\calE^{(m)}$ and $M'$ to denote number of non-empty sets in the collection $\{\calC_\calE^{(m)}\}_{m\in\calM}$. Note that for some $m \in \calM$, $K'_m$ may be zero. 
Let $\calM'$ denote the subset of $\calM$ for which $K'_m > 0$, and let $\calH_{M}'$ denote the corresponding Hilbert space with $\dim(\calH_{M}') = M'+1$.
As is evident, $\sum_{m\in \calM'} K'_m = (1-\sqrt{\epsilon})MK$. In addition, define the set of indices corresponding to the expurgated codebook as $\calI^{(m)}_E \deq \{k: a^n(m,k) \in \calC_\calE^{(m)}\}$ and $\calI_E \deq \{(m,k): a^n(m,k) \in \calC_\calE\}$. 
Further, for the expurgated code, we have
\begin{align}
    {E}'_1 &\deq \frac{1}{(1-\sqrt{\epsilon})|\calM||\calK|}\sum_{m\in \calM'}\sum_{k\in\calI^{(m)}_E}  \Tr{\Tilde{\rho}^{\bB}_{m,k}} \geq 1-2\sqrt{\epsilon}, \label{eq:E1_prime} \\
    {E}'_2 &\deq \frac{1}{(1-\sqrt{\epsilon})|\calM||\calK|}\sum_{m\in \calM'}\sum_{k\in\calI^{(m)}_E}  \left\| \Tilde{\rho}^{\bB}_{m,k} - \hat{\rho}^{\bB}_{m,k} \right\|_1 \leq \frac{\sqrt{\epsilon}}{1-\sqrt{\epsilon}} \leq 2\sqrt{\epsilon} \label{eq:E2_prime}, \\
    E'_3 &\deq \frac{1}{(1-\sqrt{\epsilon})|\calM||\calK|} \sum_{m\in\calM'}\sum_{k\in\calI^{(m)}_E} \Tr{\Xi^{(m)}_k \tau_{k}^{(m)}} \geq 1-2\sqrt{\epsilon}, \label{eq:E3_prime} 
\end{align}
where the inequalities above follow from the fact that codebook $\calC$ satisfies $ \curly{E_1 \geq 1-\sqrt\epsilon}, \curly{E_2 \leq \sqrt\epsilon}$ and $ \curly{E_3 \geq 1-\sqrt\epsilon}$ and that only $\sqrt{{\epsilon}}$ fraction of the code is expurgated. Observe that the event $\curlys{\IndSP=1}$ remains true for the expurgated $\calC_\calE$.  Define the collection 
\[ \Gamma_\calE^{(n)} \deq \{A_{a^n(m,k)}\}_{ m\in \calM', k\in \Imset
} .\]
The collection $ \Gamma_\calE^{(n)}$ is completed using the operator $I - \sum_{m\in\calM'}\sum_{k\in \calI_\calE^{(m)}}A_{a^n(m,k)}$,  and the operator is associated with sequence $a^n_0$ chosen arbitrarily from $\mathcal{A}^n\backslash\mathcal{T}_{\delta}^{(n)}(A) $, i.e., 
\[ A_{a^n_0} \deq I - \sum_{m\in\calM'}\sum_{k\in \calI_\calE^{(m)}}A_{a^n(m,k)}.\]
Corresponding to this expurgated code, we now construct our encoding and decoding operations.

\subsubsection{Encoding and Decoding Isometries}
The encoding isometry $U_{\calE}$ is constructed by concatenating three isometries: (i) the covering isometry $U_{\calM}: \calH_{B^n} \rightarrow \calH_{B^n}\tensor \calH_{M}' \tensor \calH_K $, (ii) the rotation isometry $U_{\calR}: \calH_{B^n}\tensor \calH_M' \tensor \calH_K \rightarrow \calH_{E^n}\tensor \calH_M' \tensor \calH_K$, and (iii) the packing isometry $U_{\calP}: \calH_{E^n}\tensor\calH_{{\bE}}\tensor \calH_{M}' \tensor \calH_K  \rightarrow \calH_{E^n}\tensor\calH_{{\bE}}\tensor \calH_{M}' \tensor \calH_K $, where $\calH_M', \calH_K $ and $\calH_{{\bE}} $ are auxiliary Hilbert spaces with dimensions $M'+1, K+1$, and $K+1$, respectively.

\noindent \textbf{Step 1.1: Covering Isometry\\} 
To define the covering isometry $U_\calM$, we use the completion $[\Gamma_{\calE}^{(n)}]$  as
\begin{align}
    U_\calM &\deq \sum_{m \in \calM'}\sum_{k\in\calI_\calE^{(m)}}\sqrt{A_{a^n(m,k)}}\tensor \ket{m}\tensor\ket{k} + \sqrt{A_{a_0^n}}\tensor \ket{M'}_M\tensor \ket{K}_K.
    \label{def:M_unitary}
\end{align}
Note that, for the chosen code, the event $\{\IndSP = 1\}$  
makes $U_{\calM}$ a valid isometry.
From now on, for the ease of notation, we use $M_{m,k}, \lambda_{m,k}^{\bB},  $ and $A_{m,k}$ to denote the corresponding  $n-$letter objects constructed for the codewords $A^n(m,k)$.

\noindent\textbf{Step 1.2: Rotation Isometry\\}
Although the above covering unitary aims to cover the source, it only does so for the reference system. To be able to apply the next step of packing, we wish to use the post-measured state as side information. This could  be possible if the post-measured state also looks close to being product. For this, we employ a rotation unitary.
A similar operation is discussed in \cite[Fact 6]{anshu2017measurement} with regards to classical-quantum states obtained post measurement. The  construction below generalizes this to a coherent application of a measurement. More formally, for the expurgated code $\calC_\calE$, we construct the states
\begin{align}\label{eq:sigma_hat_tilde}
    \ket{\hat{\sigma}}^{{\bBn}E^nMK} &\deq \sum_{m\in \calM'}\sum_{k\in \calI^{(m)}_\calE}\frac{1}{\sqrt{(1-\sqrt{\epsilon})|\calM||\calK|}}
    \frac{(I\tensor {M_{m,k})}}{\sqrt{\lambda_{m,k}}}\ket{\psi_{\rho}^{\tensor n}}^{\bBn B^n}\tensor \ket{m,k} \eqand \nonumber \\
    \ket{\tilde{\sigma}}^{\bBn B^nMK} &\deq \sum_{m\in \calM'}\sum_{k\in \calI^{(m)}_\calE}\frac{1}{\sqrt{(1-\sqrt{\epsilon})|\calM||\calK|}}\frac{(I\tensor \sqrt{A_{m,k})}}{\sqrt{\delta_{m,k}}}\ket{\psi_{\rho}^{\tensor n}}^{\bBn B^n}\tensor \ket{m,k},
\end{align}
where $\delta_{m,k} \deq \Tr{A_{m,k}{\rho^B}^{\tensor n}} = \gamma \Tr{\Tilde{\rho}^{\bB}_{m,k}}$, and  $\gamma  \deq \frac{1-\varepsilon}{1+\eta}\frac{1}{|\calM||\calK|}.$ {For brevity in notation, we skip the sets in the summations over $m$ or $k$ when summations are performed over the codewords belonging to the set $\calI_\calE$ corresponding to the expurgated codebook $\calC_\calE$.}
Clearly, 
$\ket{\hat{\sigma}}^{{\bBn}E^nMK}$ and $\ket{\tilde{\sigma}}^{\bBn B^nMK}$ are valid states. Now to construct $U_{\calR}$,  consider the following lemma which upper bounds the fidelity.
\begin{lemma}\label{lem:RotationUnitary} For any $\epsilon, \eta \in (0,1)$, 
there exists a collection of isometries $\{U_r(m,k):\calH_{B^n}\rightarrow\calH_{E^n}\}$ 
and a collection of phases $\theta_{m,k}$ such that $F(\ket{\hat{\sigma}}^{{\bBn}E^nMK}, ( I_{\bB}\tensor U_{\calR}) \ket{\tilde{\sigma}}^{\bBn B^nMK})\geq 1-4\sqrt\epsilon$, for all sufficiently large $n$ and all sufficiently small $\delta > 0$, where 
    \begin{align}
        U_{\calR} \deq 
        \sum_{m\in \calM'\cup \curly{M'}}\sum_{k\in \calK\cup \curly{K}} 
        e^{-i\theta_{m,k}}U_r(m,k)\tensor \ketbra{m}_M \tensor \ketbra{k}_K. \label{eq:defUR}
    \end{align}
    and $U_r(m,k) = I$ and $\theta_{m,k} = 0$ for all $(m,k) \in (\calM \times \calK)$ such that $a^n(m,k) \notin \calC_\calE$.
\end{lemma}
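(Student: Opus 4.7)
The plan is to use a block-by-block Uhlmann alignment on the $(m,k)$ register. Both $\ket{\hat\sigma}$ and $\ket{\tilde\sigma}$ are coherent superpositions over orthonormal labels $\ket{m,k}$, each label carrying a pure state on $\calH_{\bBn}\otimes \calH_{E^n}$ or $\calH_{\bBn}\otimes \calH_{B^n}$. For labels in $\calI_\calE$, the reductions of these two pure states on $\calH_{\bBn}$ are $\hat\rho^{\bB}_{m,k}$ and (a renormalization of) $\tilde\rho^{\bB}_{m,k}$, which are close in trace norm on average by \eqref{eq:E2_prime}. Uhlmann's theorem then allows us to coherently rotate one purification into the other block-by-block, controlled on the $(m,k)$ register, with a phase absorbed in $\theta_{m,k}$.

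First I would identify, for each $(m,k)\in \calI_\calE$, the normalized pure states
\begin{align*}
\ket{\hat\sigma_{m,k}} &\deq \tfrac{1}{\sqrt{\lambda_{m,k}}}(I_{\bBn}\otimes M_{m,k})\ket{\psi_\rho^{\otimes n}}^{\bBn B^n},\\
\ket{\tilde\sigma_{m,k}} &\deq \tfrac{1}{\sqrt{\delta_{m,k}}}(I_{\bBn}\otimes \sqrt{A_{m,k}})\ket{\psi_\rho^{\otimes n}}^{\bBn B^n},
\end{align*}
and verify, using the canonical-purification identity $\Tr_C[(I\otimes X)\Psi_\rho^{\otimes n}(I\otimes X^\dagger)]=\sqrt{\rho^{\bB\otimes n}}(X^\dagger X)^{T}\sqrt{\rho^{\bB\otimes n}}$ together with $A_{m,k}^{T}= \gamma(\rho^{\bB\otimes n})^{-1/2}\tilde\rho^{\bB}_{m,k}(\rho^{\bB\otimes n})^{-1/2}$, that their reductions on $\calH_{\bBn}$ are $\hat\rho^{\bB}_{m,k}$ and $\tilde\rho^{\bB}_{m,k}/\Tr(\tilde\rho^{\bB}_{m,k})$, respectively. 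By Uhlmann's theorem applied with purifying spaces $\calH_{E^n}$ and $\calH_{B^n}$, for each $(m,k)\in\calI_\calE$ I can choose an isometry $U_r(m,k):\calH_{B^n}\to\calH_{E^n}$ and a phase $\theta_{m,k}\in[0,2\pi)$ with
\begin{equation*}
e^{-i\theta_{m,k}}\bra{\hat\sigma_{m,k}}(I_{\bBn}\otimes U_r(m,k))\ket{\tilde\sigma_{m,k}}=\sqrt{F\!\left(\hat\rho^{\bB}_{m,k},\tilde\rho^{\bB}_{m,k}/\Tr(\tilde\rho^{\bB}_{m,k})\right)}.
\end{equation*}
For $(m,k)\notin\calI_\calE$ set $U_r(m,k)=I$ and $\theta_{m,k}=0$; such labels do not appear in the supports of $\ket{\hat\sigma}$ or $\ket{\tilde\sigma}$ and hence contribute nothing to the overlap.

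Because $U_\calR$ in \eqref{eq:defUR} is diagonal in the orthonormal $\ket{m,k}$ basis, the overlap telescopes to
\begin{equation*}
\bra{\hat\sigma}(I_{\bBn}\otimes U_\calR)\ket{\tilde\sigma} = \frac{1}{(1-\sqrt\epsilon)|\calM||\calK|}\sum_{(m,k)\in\calI_\calE}\sqrt{F\!\left(\hat\rho^{\bB}_{m,k},\tilde\rho^{\bB}_{m,k}/\Tr(\tilde\rho^{\bB}_{m,k})\right)}.
\end{equation*}
I would then apply Lemma \ref{lem:relationshipTraceFidelity}, giving $\sqrt{F(\cdot,\cdot)}\ge 1-\tfrac12\|\cdot-\cdot\|_1$, together with a triangle inequality and the observation $\|\tilde\rho^{\bB}_{m,k}-\tilde\rho^{\bB}_{m,k}/\Tr(\tilde\rho^{\bB}_{m,k})\|_1=|1-\Tr(\tilde\rho^{\bB}_{m,k})|\le \|\hat\rho^{\bB}_{m,k}-\tilde\rho^{\bB}_{m,k}\|_1$ (valid because $\Tr(\hat\rho^{\bB}_{m,k})=1$), to conclude $\|\hat\rho^{\bB}_{m,k}-\tilde\rho^{\bB}_{m,k}/\Tr(\tilde\rho^{\bB}_{m,k})\|_1\le 2\|\hat\rho^{\bB}_{m,k}-\tilde\rho^{\bB}_{m,k}\|_1$. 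Averaging and invoking the expurgated bound $E'_2\le 2\sqrt\epsilon$ from \eqref{eq:E2_prime} gives an inner product of at least $1-2\sqrt\epsilon$; since pure-state fidelity is the squared modulus of the inner product, $F\ge(1-2\sqrt\epsilon)^2\ge 1-4\sqrt\epsilon$, as required.

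The main obstacle I expect is that Uhlmann is applied independently per codeword: if two distinct labels $(m,k)\neq (m',k')$ carried the same codeword $a^n(m,k)=a^n(m',k')$, then $\ket{\hat\sigma_{m,k}}$ and $\ket{\hat\sigma_{m',k'}}$ would coincide on the $\bBn E^n$ factor and a single blockwise $U_r$ could not align both blocks consistently. This is precisely why the expurgation step in Section \ref{sec:findingCode} was performed: after removing the non-distinct codewords, all labels in $\calI_\calE$ index genuinely orthogonal slices, so the blockwise Uhlmann optimizations are independent and additive, and the average trace-norm bound $E'_2\le 2\sqrt\epsilon$ directly controls the overall fidelity.
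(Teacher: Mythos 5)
Your core argument is correct and follows essentially the same route as the paper: the per-label Uhlmann alignment with phases, the inequality $\sqrt{F}\ge 1-\tfrac12\|\cdot\|_1$, and the averaging step are exactly what the paper packages as Lemma \ref{lem:coveringSuperposition}, which you inline directly. The one algebraic difference is in handling the renormalization of $\tilde{\rho}^{\bB}_{m,k}$: you fold it into the factor-two bound $\|\hat{\rho}^{\bB}_{m,k}-\tilde{\rho}^{\bB}_{m,k}/\Tr\tilde{\rho}^{\bB}_{m,k}\|_1 \le 2\|\hat{\rho}^{\bB}_{m,k}-\tilde{\rho}^{\bB}_{m,k}\|_1$ so that only $E'_2$ is needed, whereas the paper keeps $\Tr\tilde{\rho}^{\bB}_{m,k}$ separate via $1 - \|\hat{\rho}-\tilde{\rho}/\Tr\tilde{\rho}\|_1 \ge \Tr\tilde{\rho}-\|\hat{\rho}-\tilde{\rho}\|_1$ and invokes both $E'_1\ge 1-2\sqrt{\epsilon}$ and $E'_2\le 2\sqrt{\epsilon}$; both routes land on $1-4\sqrt{\epsilon}$.

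Your closing paragraph, however, misstates why expurgation matters for \emph{this} lemma. Because $U_{\calR}$ is block-diagonal in the orthonormal $\ket{m,k}$ register, the Uhlmann isometries for different labels act on orthogonal blocks and cannot interfere; if two labels shared a codeword you could simply reuse the same $U_r$ on both, and $\ket{\hat{\sigma}}$ would remain a normalized state because each summand $\ket{\hat{\sigma}_{m,k}}\otimes\ket{m,k}$ has unit norm and the required orthogonality is supplied by the $\ket{m,k}$ factor, not by the $\calH_{\bBn}\otimes\calH_{E^n}$ parts. Distinctness of the expurgated codewords is needed downstream: it is what makes $U_{\calD}$ a valid isometry, and it is what makes $\zetahat^{\bBn E^n M}$ in \eqref{eq:zetahat} a normalized pure state, since there the $K$ register has been collapsed to $\ket{0}_K$ and orthogonality must come from the operators $M_{m,k}$ themselves via \eqref{eq:Ma_orthogonal}.
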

\begin{proof}
    The proof of the lemma follows using \eqref{eq:E1_prime}, \eqref{eq:E2_prime} and from the Lemma \ref{lem:coveringSuperposition}. For completeness, we detail the proof in Appendix \ref{proof:lem:RotationUnitary}.
\end{proof}

\noindent\textbf{Step 1.3: Packing Isometry\\}
Observe that by coherently performing the covering and rotation operations, one can show that the source $\rho_B$ can be successfully recovered by the quantum registers $\{\ket{m,k}\}$. This implies that the quantum states $\{\ket{m,k}\}$ can be used by the decoder to faithfully reconstruct the source as per Definition \ref{def:protocolcompression}. As a result, we would require a rate of $\frac{1}{n}\log M + \frac{1}{n}\log K$ which has to be greater than the Holevo information $\chi({\lambdaaB,\rhohata})$, as constrained by Proposition \ref{lem:Lemma for not sPOVM}. However, we intend to further reduce the rate from this Holevo information to the coherent information provided in the statement of the theorem.

One can perhaps argue why we cannot simply release the information in the $\calH_K$ system into the environment (partial tracing)? But as expected for a purely quantum setup, this would lead to the protocol becoming incoherent. More precisely, the subsystem $\calH_{E^n}$ that the encoder has in its possession is entangled with the subsystem $\calH_K$, and tracing out the latter without decoupling the two systems would render the former in a mixed state. Once this entanglement is lost, the decoder would not be able to faithfully reconstruct the source by using such a (mixed) state. 

Therefore, a major task here is to successfully decouple the system $\calH_K$ before releasing it to the environment. To achieve this, we introduce the notion of coherent packing or coherent binning. This notion is built on the idea that the post-measured system present in $\calH_{E^n}$ contains information about the quantum state $\ket{k}_{K}$, and hence, conditioned on the state $\ket{m}$, a copy of the state $\ket{k}$ can be recovered from the state present in subsystem $\calH_{E^n}$, albeit with a small probability of error. Using this copy, we intend to decouple the existing copy of $\ket{k}$ from the latter. However, this new copy can erase (decouple) the original, but will itself still remain.
Therefore, as will become evident in the sequel, we perform the process of erasing the information in $\calH_K$ intrinsically without producing any additional copies.

{Toward this, we employ the packing code consisting of the sub-POVMs $\{\Xi_k^{(m)}\}_{k \in \Imset }$, generated for the ensemble $\{\lambdaaE,\tau_a^E\}$}. We complete this sub-POVM for each $m \in \calM'$ as 
\[ \Xi_K^{(m)} \deq I - \sum_{k\in \Imset}\Xi_k^{(m)}. \]
In addition,
we also make use of Naimark's extension theorem (also provided in Lemma \ref{lem:naimark}). This lemma gives us a collection of orthogonal projectors $\{\Pi_k^{(m)}\}$ each acting on $\calH_{E^n}\tensor \calH_{\bE}$, corresponding to the collection $\{\Xi^{(m)}_k\}_{k}$, such that
\begin{align}\label{eq:equivalence_naimark}
    \Tr{\Pi_k^{(m)} (\tau_{k}^{(m)}\tensor \ketbra{0}_{\bE})} = \Tr{\Xi^{(m)}_k \tau_{k}^{(m)}},
\end{align}
for all $m\in \calM'$ and $k \in \Imset \cup\{K\}$, and $\dim{\calH_{\bE}} = K+1$. 
Finally, we define the packing unitary $U_{\calP}$ as 
\begin{align}
    U_{\calP} \deq \sum_{m\in \calM'}&\left[\sum_{k \in \Imset \cup \{K\}}  \!\!\!\!\!\!\!\Pi_k^{(m)}\!\tensor \!\bigg(\sum_{k' \in \calK\cup \{K\}}e^{i\alpha_{k'}^{(m)}}\ket{(k-k')\!\!\!\!\mod (K+1)}\bra{k'} \bigg)\right] \tensor \!\ketbra{m} \nonumber \\
    & \hspace{3.7in}+ I_{E^n \bE K}\tensor\ketbra{M'}, \label{eq:UP_def}
\end{align}
where the phases $\{\alpha_k^{(m)}\}$ are introduced for later convenience,  and will be specified in the sequel\footnote{Moving forward, we implicitly assume the modulus operation rather than explicitly mentioning it for the purpose of brevity.}.
Note that by using $\Pi_k^{(m)}$ in the above definition, instead of $\Xi_k^{(m)}$ ensures that $U_{\calP}U_{\calP}^{\dagger} = I$, implying $U_{\calP}$ is a valid unitary.


As a result,
 we can express the encoding CPTP map $\calN_{\calE}^{(n)}$ as
\begin{align}
\left(I_{\bBn}\tensor\calN_{\calE}^{(n)}\right)&\left(\ketbra{\psi_{\rho}^{\tensor n}}^{\bBn B^n}\right)\nonumber \\
& \deq \Tr_{\bE E^n K}\left( (I\tensor U_\calP U_{\calR} U_{\calM})\ketbra{\psi_{\rho}^{\tensor n}}^{\bBn B^n} \tensor \ketbra{0}_{\bE} (I\tensor U_\calP U_{\calR} U_{\calM})^\dagger\right).
\end{align}
The quantum state in $\calH_M'$ is now sent to the decoder.

\noindent\textbf{Step 1.4: Decoding Isometry:\\}
The following decoding isometry is applied on the state in $\calH_M'$:
\begin{align}
    U_{\calD} \deq \sum_{m\in \calM' }\bigg(\frac{1}{\sqrt{K'_m}}\sum_{k\in \calI_\calE^{(m)}}e^{-i\beta_{k}^{(m)}}\ket{a^n(m,k)}\bigg)\bra{m}  + \ket{a^n_0}\langle{M'}|,
\end{align}
where the phases $\{\beta_k^{(m)}\}$ will be identified in the continuation. Observe that, to argue $U_{\calD}$ is a valid isometric operation, we need the vectors $\{\ket{a^n(m,k)}\}$ to be distinct. 
By expurgating the codebook to generate $\calC_\calE$, and only using the codewords from $\calC_\calE$ ensures this distinctness.
With the definitions of encoder and decoder, we move on to bounding the error incurred by the protocol (as defined in Definition \ref{def:protocolError}).

\subsubsection{Trace Distance} 
We begin by defining the following terms
\begin{align}
    \ket{\omega}^{\bBn E^n \bE A^n K} &\deq (I\tensor U_{\calD})(I\tensor U_{\calP})(I\tensor U_{\calR}U_{\calM})\ket{\psi_\rho^{\tensor n}}^{\bBn B^n} \ket{0}_{\bE}, \nonumber \\
    \ket{\zeta}^{\bBn E^n A^n} &\deq  ( \bV^{\otimes n} \tensor I) \ket{\psi_{\omega}}^{ \bAn A^n},  
\end{align}
where\footnote{For conciseness, we drop the $\tensor n$ from $\ket{\psi_\rho^{\tensor n}}^{\bBn B^n}$ when understood from context.} $\ket{\psi_{\omega}}^{ \bAn A^n}$ is the canonical purification of $\omega^{A^n}$.
Let \[ G  \deq \|\omega^{\bBn A^n} - \zeta^{\bBn A^n}\|_1.
\]
Following Definition \ref{def:protocolError}, our objective now is to show $G$ can be made arbitrarily small for all sufficiently large $n$ for the code $\calC_\calE$. 


\noindent\textbf{Step 2.1: Closeness of $\ket{\omega} \eqand  (I \tensor U_{\calD})(I\tensor U_{\calP}) \ket{\hat{\sigma}}:$}\\
Recall the definitions of $\ket{\hat{\sigma}} \eqand \ket{\tilde{\sigma}}$ from \eqref{eq:sigma_hat_tilde}, and let $\ket{\omega_1} \deq (I \tensor U_{\calR}U_{\calM})\ket{\psi_\rho}^{\bBn B^n}$ and $\varepsilon_1 \deq {(1-\varepsilon)/(1+\eta)}$.
Consider
\begin{align}
&\sqrt{F\left(\ket{\omega_1}^{\bBn E^n MK},(I\tensor U_{\calR})\ket{\tilde{\sigma}}^{\bBn E^n MK} \right)} \nonumber \\
    & = \sum_{m,k} \frac{1}{\sqrt{(1-\epsFourRoot)|\calM||\calK|}} \frac{\bra{\psi_\rho}(I\tensor U_r(m,k)\sqrt{A_{m,k}})^\dagger ((I\tensor U_r(m,k)\sqrt{A_{m,k}})\ket{\psi_\rho}}{\sqrt{\delta_{m,k}}} \nonumber \\
    & = \frac{1}{\sqrt{1-\epsFourRoot}}\sum_{m,k} \frac{\sqrt{\varepsilon_1}}{{|\calM||\calK|}}\sqrt{ \Tr{\Tilde{\rho}_{m,k}}} 
    \geq  \frac{1}{\sqrt{1-\epsFourRoot}}{\sum_{m,k} \frac{\sqrt{\varepsilon_1}}{{|\calM||\calK|}} \Tr{\Tilde{\rho}_{m,k}} } \geq \sqrt{\varepsilon_1}\sqrt{1-\epsFourRoot} (1-2\epsFourRoot),\label{eq:Fid_omg1_sigmahat}
\end{align}
where we note that there is no overlap between the term corresponding to $\sqrt{A_{a^n_0}}\tensor \ket{M'}_M\tensor\ket{K}_K$ 
of $\ket{\omega_1}^{\bBn E^n MK}$ and the state $(U_{\calR}\tensor I)\ket{\tilde{\sigma}}^{\bBn E^n MK}$, and the last inequality follows from \eqref{eq:E1_prime}.

\noindent Using Lemma \ref{lem:relationshipTraceFidelity}, and the inequality \eqref{eq:Fid_omg1_sigmahat}
, we get \footnote{At times, the subspace notation is omitted when it is clear from the context.}   
\begin{align}
    &\left\| {\omega_1}^{\bBn E^n MK} - (I \otimes U_{\calR} ){\tilde{\sigma}}^{\bBn E^n MK}(I \otimes U_{\calR})^\dagger \right\|_1 \nonumber \\
    &\hspace{1in}\leq 2\sqrt{1 - (1-\epsFourRoot)(1-2\epsFourRoot)^2\left(1-\frac{\eta+\varepsilon}{1+\eta}\right)} \leq 2\sqrt{\frac{\eta+\varepsilon}{1+\eta} + 5\epsFourRoot} \leq 6\sqrt[4]{\epsilon},
\end{align}
for all sufficiently large $n$ and sufficiently small $\eta,\delta>0$.
Further, using the unitary invariance of trace distance, we get the closeness of the states:
\begin{align}
    &\left\| (I \tensor U_{\calD}U_{\calP}){\omega_1}^{\bBn E^n MK}\tensor\ketbra{0}_{\bE}(I \tensor U_{\calD}U_{\calP})^\dagger \right.
    \nonumber \\
    & \hspace{1in} \left.- (I \tensor U_{\calD}U_{\calP}U_{\calR}){\tilde{\sigma}}^{\bBn E^n MK}\tensor\ketbra{0}_{\bE}(I \tensor U_{\calD}U_{\calP}U_{\calR})^\dagger \right\|_1 \leq 6\sqrt[4]{\epsilon}. \label{eq:trace_omg1_sigmatilde}
\end{align}
Using Lemma \ref{lem:RotationUnitary} and the fact that trace norm is invariant under isometric transformations,  we have 
\begin{align}\label{eq:trace_sigmatilde_sigmahat}
    & \left\|(I \tensor U_{\calD} U_{\calP}){\hat{\sigma}^{\bBn E^n MK}}\tensor\ketbra{0}_{\bE}(I \tensor U_{\calD} U_{\calP})^\dagger \right. \nonumber \\
    & \hspace{0.5in} \left.- (I \tensor U_{\calD}U_{\calP}U_{\calR}) {\tilde{\sigma}}^{\bBn E^n MK}\tensor\ketbra{0}_{\bE}(I \tensor U_{\calD}U_{\calP}U_{\calR})^\dagger \right\|_1  \nonumber \\
     &\hspace{0.4in}= \left\|\hat{\sigma} - (I \tensor U_{\calR}){\tilde{\sigma}}(I \tensor U_{\calR})^\dagger \right\|_1  \leq 2\sqrt{1-F\left( \ket{\hat{\sigma}} , (I \tensor U_{\calR})\ket{\tilde{\sigma}}\right) } \leq 4\sqrt[4]{\epsilon}.
\end{align}
Using triangle inequality and inequalities \eqref{eq:trace_omg1_sigmatilde} and \eqref{eq:trace_sigmatilde_sigmahat}, we obtain
\begin{align}\tag{S-1}
    \left\|{\omega}^{\bBn E^n \bE A^n K} -(I \tensor U_{\calD}U_{\calP})({\hat{\sigma} }^{\bBn E^n MK} \tensor \ketbra{0}_{\bE}) (I \tensor U_{\calD}U_{\calP})^\dagger\right\|_1 \leq 10\sqrt[4]{\epsilon} , \label{eq:trace_omg_sigmahat}
\end{align}
for all sufficiently large $n$ and sufficiently small $\eta,\delta>0$, 
which concludes Step 2.1.

\noindent For the next step, define $|{\hat\zeta}\rangle^{\bBn E^n\bE MK}$ as
\begin{align}
    \zetahat^{\bBn E^n\bE MK} \deq \sum_{m\in\calM'}\sum_{k\in \calI_\calE^{(m)}}\normMK e^{i\beta_k^{(m)}} \frac{(I\tensor M_{m,k})\ket{\psi_{\rho}}}{\sqrt{\lambda_{m,k}}}\tensor \ket{m}_M \tensor \ket{0}_K \tensor \ket{0}_{\bE}, \label{eq:zetahat}
\end{align}
where the phases $\beta_k^{(m)}$ will be specified shortly. Observe that $\zetahat^{\bBn E^n\bE MK} $ is a valid pure state due to (i) the
distinctness of codewords in $\calC_\calE$ and (ii) the identity \eqref{eq:Ma_orthogonal}. Furthermore, in its definition, the information in the subsystem $\calH_K$ is decoupled from the remaining subsystems. Since $U_\calP$ acts on $\calH_{E^n}\tensor \calH_{\bE}$, an additional pure ancilla is attached for appropriate comparisons. We aim to show that this state is close to the action of $(I \tensor U_{\calP})$ on the state $ \ket{\hat{\sigma}}\ket{0}_{\bE}$.

\noindent\textbf{Step 2.2: Closeness of $ (I\tensor U_{\calP})\ket{\hat{\sigma}}\ket{0}_{\bE}$ and $ \zetahat:$}\\
We begin by simplifying $(I\tensor U_{\calP})\ket{\hat{\sigma}} \ket{0}_{\bE}$ as
\begin{align}
    &(I\tensor U_{\calP})\ket{\hat{\sigma}} \ket{0}_{\bE}
    =  \sum_{m\in\calM'}\sum_{k\in \calI_\calE^{(m)}}\normMK e^{i\alpha_{k}^{(m)}} |{\phi_{k}^{(m)}}\rangle \tensor\ket{m}_M, \nonumber
\end{align}
where
\begin{align}
\label{eq:phimk_defnition}
    |{\phi_{k}^{(m)}}\rangle \deq \sum_{k'\in \Imset\cup \{K\}}\frac{(I\tensor \Pi_{k'}^{(m)} M_{m,k})\ket{\psi_{\rho}} \ket{0}_{\bE}}{\sqrt{\lambda_{m,k}}} \tensor \ket{k'-k}_K, \mbox{ for all } k \in \calI_\calE^{(m)} \mbox{ and } m\in \calM'.
\end{align}
Similarly, let 
\begin{align}
\label{eq:chimk_defnition}
    \zetahat = \sum_{m,k}\normMK e^{i\beta^{(m)}_k} |{\chi_{k}^{(m)}}\rangle\tensor\ket{m}_M, \;\; \;\;|{\chi_{k}^{(m)}}\rangle \deq \frac{(I\tensor M_{m,k})\ket{\psi_{\rho}}\ket{0}_{\bE}}{\sqrt{\lambda_{m,k}}} \tensor \ket{0}_K,
\end{align}
for all $m\in \calM'$ and $k \in \Imset$, and the phases $\{\beta_k^{(m)}\}$ are the same phases incorporated in the construction of the decoding isometry $U_\calD$.
Further, from \eqref{eq:equivalence_naimark}, we know for all $m \in \calM'$,
\begin{align}
   \frac{1}{K'_m}\sum_{k \in \calI_\calE^{(m)}} \langle{\phi_{k}^{(m)}}|{\chi_{k}^{(m)}}\rangle =  
   \frac{1}{K'_m}\sum_{k \in \calI_\calE^{(m)}} \Tr{\Xi^{(m)}_k \tau_{k}^{(m)}}.
   \label{eq:avg_packing_error}
\end{align}
Now the fidelity between  $ \zetahat \eqand (I \tensor U_{\calP})\ket{\hat{\sigma}}\ket{0}_{\bE}$ can be written as
\begin{align}
    \sqrt{F\left( (I \tensor U_{\calP})\ket{\hat{\sigma}} \ket{0}_{\bE} ,\zetahat \right)} = \frac{1}{M'} \left|\sum_{m\in \calM'}\bra{\phi_m}\ket{\chi_m}\right|,
    \label{eq:fid_packing1}
\end{align}
where, for all $m\in \calM'$,
\begin{align}
    \ket{\phi_m} \deq c\sum_{k\in \Imset} e^{i\alpha_k^{(m)}}|{\phi_{k}^{(m)}}\rangle  \qand 
    \ket{\chi_m} \deq c\sum_{k\in \Imset} e^{i\beta_k^{(m)}}|{\chi_{k}^{(m)}}\rangle, \nonumber
\end{align}
and $c \deq \sqrt{\frac{M'}{(1-\epsFourRoot)|\calM||\calK|}}$. Toward a lower bound on the fidelity, we provide the following proposition.
\begin{prop}\label{prop:ExpectedFourier}
    For any $\epsilon \in (0,1)$, there exists phases $ \{\alpha_k^{(m)}\},$ and $ \curlys{\beta_k^{(m)}}$ such that 
    \begin{align}
       \left| \frac{1}{M'} \sum_{m\in \calM'}\bra{\phi_m}\ket{\chi_m}\right| \geq 1-2\sqrt\epsilon,
    \end{align}
    for all sufficiently small $\delta > 0$ and all sufficiently large $n$.
\end{prop}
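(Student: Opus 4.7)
The plan is to employ a random-phase averaging argument, exploiting the two independent collections of free phases $\{\alpha_k^{(m)}\}$ (embedded in $U_\calP$) and $\{\beta_k^{(m)}\}$ (embedded in $U_\calD$), which were left unspecified in \eqref{eq:UP_def} and the decoder construction precisely so that the off-diagonal contributions in the expansion of $\bra{\phi_m}\ket{\chi_m}$ can be annihilated in expectation.

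First, I will set $\alpha_k^{(m)} = \beta_k^{(m)} = \theta_k^{(m)}$, where $\{\theta_k^{(m)}\}_{m,k}$ is a collection of iid random phases uniform on $[0, 2\pi)$. Expanding the inner product yields
\begin{align*}
\bra{\phi_m}\ket{\chi_m} = c^2 \sum_{k, k' \in \Imset} e^{i(\theta_{k'}^{(m)} - \theta_k^{(m)})} \bra{\phi_k^{(m)}}\ket{\chi_{k'}^{(m)}},
\end{align*}
and the averaging identity $\EE[e^{i(\theta_{k'}^{(m)} - \theta_k^{(m)})}] = \delta_{k,k'}$ annihilates every $k \neq k'$ term, leaving
\begin{align*}
\EE[\bra{\phi_m}\ket{\chi_m}] = c^2 \sum_{k \in \Imset} \bra{\phi_k^{(m)}}\ket{\chi_k^{(m)}} = c^2 \sum_{k \in \Imset} \Tr{\Xi_k^{(m)} \tau_k^{(m)}},
\end{align*}
where the last equality is the per-term form of \eqref{eq:avg_packing_error}, a direct consequence of the Naimark identity \eqref{eq:equivalence_naimark}.

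Averaging next over $m \in \calM'$ and substituting $c^2 = M'\,\normsqMK$, the expected value of $\frac{1}{M'}\sum_{m \in \calM'}\bra{\phi_m}\ket{\chi_m}$ collapses to $\normsqMK\sum_{m \in \calM'}\sum_{k \in \Imset}\Tr{\Xi_k^{(m)} \tau_k^{(m)}}$, which equals $E_3'$ and is bounded below by $1 - 2\epsFourRoot$ by \eqref{eq:E3_prime}, valid for the chosen expurgated code $\calC_\calE$. Since this expected value is real and non-negative, the probabilistic method produces specific realizations of $\{\alpha_k^{(m)}\}$ and $\{\beta_k^{(m)}\}$ for which $\Re\big[\frac{1}{M'}\sum_m \bra{\phi_m}\ket{\chi_m}\big] \geq 1 - 2\epsFourRoot$, and the bound $|z| \geq \Re(z)$ then completes the argument.

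The main obstacle is controlling the cross terms $\bra{\phi_k^{(m)}}\ket{\chi_{k'}^{(m)}}$ for $k \neq k'$: the Naimark projector $\Pi_{k'}^{(m)}$ spreads $\ket{\phi_k^{(m)}}$ into a superposition over all packing outcomes, so a direct deterministic bound on these terms is not apparent and would seem to require additional packing-type estimates. The random-phase trick sidesteps this difficulty entirely by averaging them to zero, which is precisely the purpose of introducing the phases $\{\alpha_k^{(m)}\}$ and $\{\beta_k^{(m)}\}$ into the definitions of $U_\calP$ and $U_\calD$ in the first place.
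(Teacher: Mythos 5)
Your proof is correct, and you have correctly identified the role of the free phases: they exist so that the off-diagonal overlaps $\bra{\phi_k^{(m)}}\ket{\chi_{k'}^{(m)}}$, $k\neq k'$, which admit no direct deterministic bound, cancel on average. The paper achieves the same cancellation via a discrete Fourier transform instead of continuous random phases. It re-indexes the surviving codewords of block $m$ by $f^{(m)}:[0,K'_m-1]\to\calI_\calE^{(m)}$, forms DFT-twisted vectors $|\hat\phi_s^{(m)}\rangle = c\sum_{j} e^{2\pi ijs/K'_m}|\phi_{f^{(m)}(j)}^{(m)}\rangle$ (and similarly $|\hat\chi_s^{(m)}\rangle$), and applies the Parseval identity $\frac{1}{K'_m}\sum_s\langle\hat\phi_s^{(m)}|\hat\chi_s^{(m)}\rangle = c^2\sum_{j}\langle\phi_{f^{(m)}(j)}^{(m)}|\chi_{f^{(m)}(j)}^{(m)}\rangle$, which is the discrete analogue of your averaging identity $\EE\big[e^{i(\theta_{k'}^{(m)}-\theta_k^{(m)})}\big]=\delta_{k,k'}$. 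It then pigeonholes within each block to extract a single frequency $s_m$ and a compensating rotation $\hat\theta_m$, producing explicit phases $\alpha_k^{(m)}=2\pi g^{(m)}(k)s_m/K'_m$ and $\beta_k^{(m)}=\alpha_k^{(m)}+\hat\theta_m$ drawn from finite alphabets. Your version instead draws iid uniform continuous phases with $\alpha_k^{(m)}=\beta_k^{(m)}$, computes the expectation all at once, and invokes the probabilistic method globally; this is slightly cleaner (no per-$m$ rotation $\hat\theta_m$ is needed, and only one extraction step), at the cost of being nonconstructive and not confining the phases to a finite set. The decisive input to both arguments is identical: once the cross terms are averaged out, what survives is exactly the expurgated packing statistic $E'_3$ of \eqref{eq:E3_prime}, which is $\geq 1-2\sqrt\epsilon$.
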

\begin{proof}
    The proof is provided in Appendix \ref{proof:prop:ExpectedFourier}.
\end{proof}
\noindent Observe that, using the relation in Lemma \ref{lem:relationshipTraceFidelity}, and the result of Proposition \ref{prop:ExpectedFourier} and \eqref{eq:fid_packing1}, we get
\begin{align}\tag{S-2}
    \left\| (I\tensor U_{\calP}){\hat{\sigma}}\tensor \ketbra{0}_{\bE}(I\tensor U_{\calP})^\dagger - \hat\zeta \right\|_1  \leq 2\sqrt{1- F\left( (U_{\calP}\tensor I)\ket{\hat{\sigma}}\ket{0}_{\bE} ,\zetahat \right) } \leq 4\sqrt[4]{\epsilon}, \label{eq:step2_2Result1}
\end{align}
for all sufficiently large $n$, and sufficiently small $\eta,\delta > 0$.
Observe that $\zetahat^{\bBn E^n MK} = \zetahat^{\bBn E^n M} \tensor \ket{0}_{K\bE}$, and hence $\zetahat^{\bBn E^n M}$ remains pure after partial tracing over the subsystem $\calH_K\tensor \calH_{\bE}$.
Finally, we are left with showing the closeness of the state $ (I\tensor U_{\calD})\zetahat^{\bBn E^n M}$ with the state $\ket{\zeta}^{\bBn E^n A^n}$.

\noindent\textbf{Step 2.3: Closeness of $(I \tensor U_{\calD})\zetahat^{\bBn E^n M}$ and $\ket{\zeta}^{\bBn E^n A^n}:$}\\
We begin by defining $\sigma^{A^n}$ as
\begin{align}
    \sigma^{A^n} &\deq \Tr_{\bBn E^n}\{(I\tensor U_{\calD})\hat\zeta^{\bBn E^n M}(I\tensor U_{\calD})^\dagger\}, \nonumber 
\end{align}
and perform the simplification
\begin{align}
 \sigma^{A^n} &= U_{\calD} \bigg( \sum_{m,m'}\sum_{k,k'}\normsqMK e^{-i(\beta_{k}^{(m)} - \beta_{k'}^{(m')})}\frac{\Tr\big(M_{mk}\rho^B M_{m'k'}^{\dagger}\big)}{\sqrt{\lambda_{mk}\lambda_{m'k'}}}|{m}\rangle\langle{m'}|\bigg)U_{\calD}^{\dagger} \nonumber \\
    & = U_{\calD} \left( \sum_{m}\frac{K'_m}{{{(1-\epsFourRoot)|\calM||\calK|}}}\ketbra{m}\right)U_{\calD}^{\dagger} = \sum_{m}\frac{K'_m}{{{(1-\epsFourRoot)|\calM||\calK|}}}\ketbra{b^n(m)}^{A^n} , 
\end{align}
where the second equality uses \eqref{eq:Ma_orthogonal} and the crucial condition that the codebook $\calC_\calE$ obtained after expurgation is distinct, and last equality defines $\ket{b^n(m)}^{A^n}$ as
\begin{equation}\ket{b^n(m)}^{A^n} \deq \frac{1}{\sqrt{K'_m}} \sum_{k\in \calI_\calE^{(m)}}e^{-i\beta_k^{(m)}}\ket{a^n(m,k)}^{A^n}, \label{eq:bm_definition}\end{equation} for all $m\in \calM'$.
This implies, we can write the canonical purification  of $\sigma^{A^n}$ as 
\begin{align}
    \ket{\psi_\sigma}^{\bAn A^n} \deq (I_{\bAn}\tensor \sqrt{\sigma^{A^n}})&\ket{\Gamma^{\tensor n}}^{\bAn A^n } = \sum_{m}\sqrt{\frac{K'_m}{{{(1-\epsFourRoot)|\calM||\calK|}}}}\ket{b^n(m)}^{\bAn}\tensor \ket{b^n(m)}^{A^n} \nonumber \\
    & = (I \tensor U_{\calD}^{A^n} ) \sum_{m}\sqrt{\frac{K'_m}{{{(1-\epsFourRoot)|\calM||\calK|}}}}\ket{b^n(m)}^{\bAn}\tensor\ket{m}^{M}, 
\end{align}
where the first equality follows by defining $ \ket{b^n(m)}^{\bAn} \deq (I_{\bAn}\tensor \bra{b^n(m)}^{A^n})\ket{\Gamma^{\tensor n}}^{\bAn A^n }$.
Using the relation from \eqref{eq:actionVbarOnA} and definition \eqref{eq:bm_definition}, we can write
\begin{align}
   \bV^{\tensor n}\ket{b^n(m)}^{\bAn}  &= \!\frac{1}{\sqrt{K'_m}} \!\sum_{k\in \calI_\calE^{(m)}}\!\!\!e^{i\beta_k^{(m)}}\bV^{\tensor n}\ket{a^n(m,k)}^{\bAn} =   \!\frac{1}{\sqrt{K'_m}}\! \sum_{k\in \calI_\calE^{(m)}}\!\!\!e^{i\beta_k^{(m)}}\frac{(I_{\bB}\!\tensor \!M_{m,k}) \ket{\psi_\rho^{\tensor n}}^{\bBn B^n}}{\sqrt{\lambda_{m,k}^A}},\nonumber
\end{align}
for all $m\in \calM'$, which gives
\begin{align}
    (\bV^{\tensor n} \tensor I_{A^n}) \ket{\psi_\sigma}^{\bAn A^n } 
    & = (I \tensor U_{\calD}^{A^n} ) \sum_{m}\sqrt{\frac{K'_m}{{{(1-\epsFourRoot)|\calM||\calK|}}}}\bV^{\tensor n}\ket{b^n(m)}^{\bAn}\tensor\ket{m}^{M}\nonumber \\
    & = (I \tensor U_{\calD}^{A^n} )\sum_{m,k}\normMK e^{i\beta_k^{(m)}}\frac{(I_{\bB}\tensor M_{m,k}) \ket{\psi_\rho^{\tensor n}}^{\bBn B^n}}{\sqrt{\lambda_{m,k}^A}} \tensor \ket{m}^{M} \nonumber \\
    & = (I \tensor U_{\calD}^{A^n}) \zetahat^{\bBn E^n M},
\end{align}
where the last equality follows from the definition of $\zetahat^{\bBn E^n M}$ in \eqref{eq:zetahat}. 
Observing that \[\ket\zeta^{\bBn E^n A^n} = (\bV^{\tensor n} \tensor I_{A^n})\ket{\psi_\omega}^{\bAn A^n},\]
we are left with showing the closeness of $\ket{\psi_\sigma}^{\bAn A^n }$ and  $\ket{\psi_\omega}^{\bAn A^n }$. 
Using \eqref{eq:trace_omg_sigmahat} and \eqref{eq:step2_2Result1}, the triangle inequality, the monotonicity of trace distance, and identification of appropriate purifications, we obtain for all sufficiently large $n$ and sufficiently small $\eta,\delta> 0$,
\begin{align}
     \| \omega^{A^n} - \sigma^{A^n} \|_1
    & \leq
    \| \omega^{\bBn E^n A^n} - (I \tensor U_{\calD})\hat\zeta^{\bBn E^n M}(I \tensor U_{\calD})^\dagger \|_1 
    \leq 14\sqrt[4]{\epsilon}.\nonumber
\end{align}
This implies,
{\begin{align}
     \left\|(I \tensor U_{\calD}^{A^n}) \hat{\zeta}^{\bBn E^n M}(I \tensor U_{\calD}^{A^n})^\dagger -  \zeta^{\bBn E^n A^n}\right\|_1 
  & \overset{a}{=}   \left\|{\psi_\sigma}^{A^n \bAn} -  {\psi_\omega}^{A^n \bAn}\right\|_1  \nonumber \\
  & \overset{b}{\leq} 2{\sqrt{1-F(|{\psi_\sigma}\rangle^{A^n \bAn},|{\psi_\omega}\rangle^{A^n \bAn} )}}  \nonumber \\
  & {\overset{c}{\leq} 2\sqrt{\| \omega^{A^n} - \sigma^{A^n} \|_1 } \leq 2\sqrt{{14\sqrt[4]{\epsilon}}} \leq 8\sqrt[8]{\epsilon},} \label{eq:step3}\tag{S-3}
\end{align}}
for all sufficiently large $n$ and sufficiently small $\eta,\delta>0$, where (a) follows from the isometric invariance of trace distance, (b) uses Lemma \ref{lem:relationshipTraceFidelity}, and (c) uses Lemma \ref{lem:closenessofPurification}, which concludes this step.

In summary, combining results of \eqref{eq:rateConditions}, \eqref{eq:trace_omg_sigmahat}, \eqref{eq:step2_2Result1} and \eqref{eq:step3}, we have showed that there exist a code $\calC$ satisfying $G \leq {14\sqrt[4]{\epsilon} + 8\sqrt[8]{\epsilon}} $ with the following rate constraints:
\begin{align}
     S(\lambdaaA)>\frac{1}{n}\left(\log M + \log K \right) &> \chi({\lambdaaB,\rhohata}), \;\;
    \frac{1}{n}\log K< \chi(\{\lambda_a^{E}, \tau_a^E\}), \;\;
     \frac{1}{n}\log{M} \geq 0 , \quad  \frac{1}{n}\log K\geq 0,\nonumber
\end{align}
for all sufficiently large $n$ and sufficiently small $\eta,\delta>0$,
where we have also included the necessary non-negativity constraints.
Eliminating $\frac{1}{n}\log{K}$ using Fourier-Motzkin elimination \cite{fourier-motzkin} gives
\[ \frac{1}{n}\log{M} > \chi({\lambdaaB,\rhohata}) - \chi(\{\lambda_a^{E}, \tau_a^E\}), \qand \frac{1}{n}\log{M} \geq 0, \]
where we remove the redundant constraints. This completes the proof. 

\begin{remark}[Zero performance rate]
\label{remark:zero_rate}
The coherent information $I_c(\calN_W,\rho^{A_R})$ is negative when the Holevo information quantities are such that $\chi({\lambdaaB,\rhohata}) < \chi(\{\lambda_a^{E}, \tau_a^E\})$. The asymptotic performance limit for these situations is zero, according to the statement of the theorem. We must therefore demonstrate that a rate of zero is feasible. To put it another way, we must construct a protocol (see Definition \eqref{def:protocolcompression}) that  satisfies \eqref{def:protocolError} with a rate that can be made arbitrarily close to zero. The constraints imposed by the preceding proof are still met if we select $M = 1$ and $\frac{1}{n}\log K = \chi({\lambdaaB,\rhohata}) + \delta_0 < \chi(\{\lambda_a^{E}, \tau_a^E\})$, while achieving a rate of $log(2)/n$,
for a sufficiently small $\delta_0$.
This rate $(1/n)$ can be made arbitrarily close to zero (i.e., smaller than the provided $\epsilon$) for any given $\epsilon$, for all sufficiently large $n$ and sufficiently small $\eta,\delta > 0.$
Similarly, when the coherent information $I_c(\calN_W,\rho^{A_R})$ is exactly zero, i.e., $\chi({\lambdaaB,\rhohata}) = \chi(\{\lambda_a^{E}, \tau_a^E\})$, we choose $M$ and $K$ such that $\frac{1}{n}\log M = 2\delta_0 $, and $\frac{1}{n}\log K =\chi(\{\lambda_a^{E}, \tau_a^E\}) - \delta_0$. This gives a rate of $2\delta_0$ which can be again made arbitrarily close to zero. Therefore, even though the coherent information is not necessarily positive, the rate in the theorem can still be achieved.

\end{remark}


\subsection{Proof of Converse}
\label{QLSC:sec:proofOuterBound}
Let $R$ be an achievable rate. Then  from Definition \ref{def:achievability},
given a triple ($\rho_B$,
$\calH_A$, $\calN_W$), for all
$\epsilon>0$, and all sufficiently large $n$, there exists $(n,\Theta)$ lossy compression protocol with an encoding CPTP map $\calN_\calE^{(n)}$ and a decoding CPTP map $\calN_\calD^{(n)}$ that satisfies the following constraints:
\begin{align}\label{eq:cons_conv1}
c_0: \frac{1}{n}\log{\Theta} \leq R + \epsilon, \qand c_1:    \|\omega^{\bBn A^n} - \upsilon^{\bBn A^n}\|_1 \leq \epsilon,
\end{align}
where $\omega^{\bBn A^n} \deq (I \tensor \calN_\calD^{(n)})(I \tensor \calN_\calE^{(n)} ) (\ket{\psi_{\rho}}^{\bBn B^n})$, 
\[
\upsilon^{\bBn A^n} =\Tr_{E^n} \left\{\upsilon^{{B}^n_R A^n E^n} \right\}\deq \Tr_{E^n}\left\{(\bV^{\otimes n} \otimes I) \Psi_{\omega}^{\bAn A^n}(\bV^{\otimes n} \otimes I)^{\dagger}\right\},
\] and $\ket{\psi_{\omega}}^{A^n \bA^n}$ is the canonical purification of $\omega^{A^n}$, and $\bV$ is the Stinespring's dilation of the CPTP map $\calN_W$. 
Let $\omega^{A_R^n}\deq \Tr_{A^n} (\Psi_\omega^{A^n A_R^n})$.

\noindent \textbf{Step 1: Quantum Data Processing Inequality:} 
Let $M$ denote the quantum state at the output of the encoder.
Let $V_\calE^{(n)}: \calH_{B^n} \rightarrow 
\calH_M \tensor \calH_{\tilde{E}_1}$ and 
$V_\calD^{(n)}:\calH_{M} \rightarrow \calH_{A^n} \tensor \calH_{\tilde{E}_2}$ be Stinesping dilations of encoding and decoding maps $\calN_{\calE}^{(n)}$ and 
$\calN_{\calD}^{(n)}$, respectively, such that $\dim(\calH_{\Tilde{E}_1}) \geq \dim(\calH_M)$ and $\dim(\calH_{\Tilde{E}_2}) \geq \dim(\calH_{A^n})$,  as shown in Figure \ref{fig:q_converse}(a).
Let
\begin{align}
    {\omega_1}^{\bBn M \tilde{E}_1} &\deq (I_{\bBn}\tensor V_\calE^{(n)})(\Psi_{\rho}^{\bBn B^n})(I_{\bBn}\tensor V_\calE^{(n)})^\dagger\nonumber
     \\
     \quad {\omega}^{\bBn \tilde{E}_1 \tilde{E}_2 A^n} &\deq (I_{\bBn \Tilde{E}_1}\tensor V_\calD^{(n)})({\omega_1}^{\bBn \tilde{E}_1 M})(I_{\bBn\Tilde{E}_1}\tensor V_\calD^{(n)})^\dagger.
\end{align}
Let $|\psi_{\omega_1}\rangle ^{M_R M}$ denote the canonical purification of the quantum state $\omega_1^M$. Let 
$W_\calE^{(n)}: \calH_{M_R} \rightarrow \calH_{\bBn}\tensor \calH_{\Tilde{E}_1}$ denote the posterior reference isometry  (see Definition \ref{def:VBar}) of
$V_\calE^{(n)}$ with respect to $\omega_1^M$, as shown in Figure \ref{fig:q_converse}(b). Moreover, let 
$W_\calD^{(n)}: \calH_{\bAn} \rightarrow \calH_{M_R} \tensor \calH_{\Tilde{E}_2}$ denote the posterior reference isometry of
$V_\calD^{(n)}$ with respect to $w^{A^n}$, as shown in Figure \ref{fig:q_converse}(c). Let $\calN_{W_{\calE}}(\cdot)=\Tr_{\tilde{E_1}}(W_{\calE}^{(n)} \cdot (W_{\calE}^{(n)})^{\dagger})$
and $\calN_{W_{\calD}}(\cdot)=\Tr_{\tilde{E_2}}(W_{\calD}^{(n)} \cdot (W_{\calD}^{(n)})^{\dagger})$
be the induced CPTP maps. 
Let $$\tilde{\omega}_1^{M_R \tilde{E}_2 A^n} \deq (W_\calD^{(n)}\tensor I_{A^n})(\Psi_{\omega}^{\bAn A^n})(W_\calD^{(n)}\tensor I_{A^n})^\dagger.$$ 
Using the quantum data processing inequality for coherent information \cite[Theorem 11.3.2]{wilde_arxivBook}, we obtain
\[
I_c(\calN_{W_{\calD}},\omega^{A_R^n}) \geq 
I_c(\calN_{W_{\calE}}\circ 
\calN_{W_{\calD}}, \omega^{A_R^n}).
\]
Expanding the coherent information in terms of  Von Neuman entropy, we get
\[
S(M_R)_{\tilde\omega_1}-S(\tilde{E}_2)_{\tilde\omega_1} \geq S(B_R^n)_\omega-S(\tilde{E}_1\tilde{E}_2)_\omega,
\]
which implies that 
\begin{equation}
    S(M)_{\omega_1} \geq S(B_R^n)_\omega-S(B_R^nA^n)_\omega.
\label{eq:QDPI}
\end{equation}

\noindent \textbf{Step 2: Implication of the constraints $c_0$ and $c_1$:}
Consider the following sequence of inequalities:
\begin{figure}
    \centering
    \includegraphics[scale=0.7]{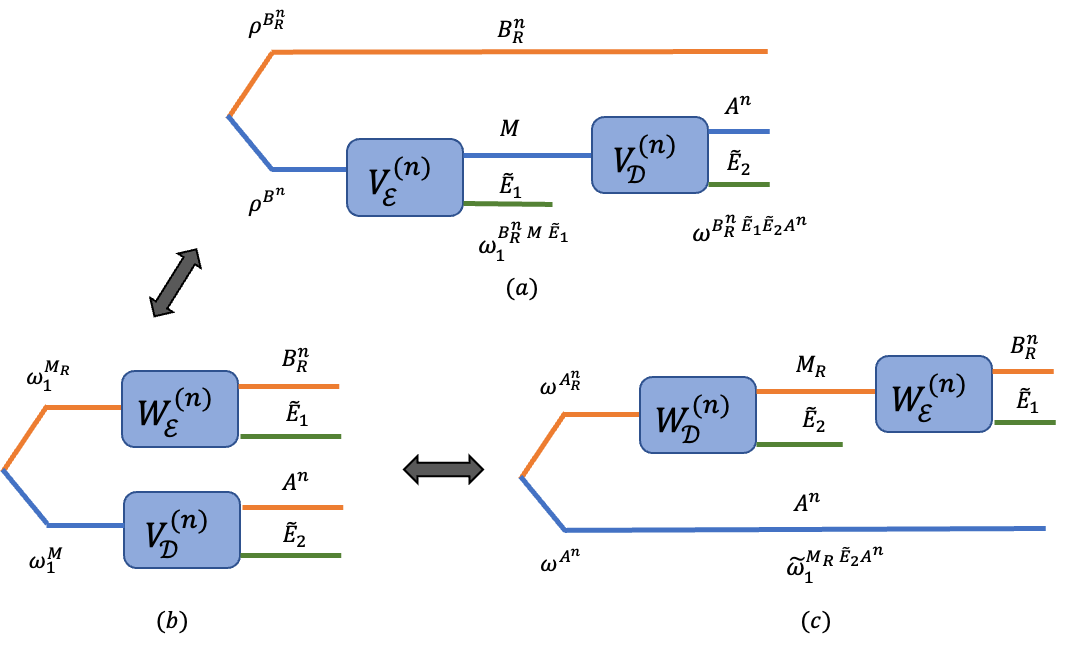}
    \vspace{-0.2in}
    \caption{Lossy quantum source coding protocol and the associated CPTP maps and their Stinespring dilations.}
    \vspace{-0.2in}
    \label{fig:q_converse}
\end{figure}
\begin{align}
nR &\geq \log \Theta -n \epsilon  \geq S(M)_{\omega_1}  -n \epsilon \\
&\overset{a}{\geq} S({B}^n_R)_\omega-S({B}^n_R,A^n)_\omega  -n \epsilon \\
&\overset{b}{\geq} S({B}^n_R)_\omega- S({B}^n_R,A^n)_\upsilon  -n \epsilon-n\tilde{\epsilon}_1 \\
&=S({B}^n_R)_\omega- S(E^n)_\upsilon  -n \epsilon-n\tilde{\epsilon}_1 \\
&\overset{c}{\geq} S({B}^n_R)_\omega- \sum_{i=1}^n S(E_i)_\upsilon  -n \epsilon-n\tilde{\epsilon}_1 \\
&\overset{d}{=}\sum_{i=1}^n S(\bB_i)_\omega-\sum_{i=1}^n S(E_i)_\upsilon  -n \epsilon-n\tilde{\epsilon}_1 \\
&\overset{e}{\geq}  \sum_{i=1}^n S(\bB_i)_\upsilon-\sum_{i=1}^n S(E_i)_\upsilon  -n \epsilon-n\tilde{\epsilon}_1-n\tilde{\epsilon}_2 \\
&\overset{f}{=}\sum_{i=1}^n I_c(\calN_W,\omega^{{A}_{Ri}})   -n \epsilon-n\tilde{\epsilon}_1-n\tilde{\epsilon}_2 \\
&\overset{g}{\geq} n \min_{\rho^{\bA} \in \calD(\calH_{\bA}): \| \rho^{\bB}-\mathcal{N}_W(\rho^{\bA}) \|_1 \leq \epsilon} 
I_c(\calN_W,\rho^{{A}_R})   -n \epsilon-n\tilde{\epsilon}_1-n\tilde{\epsilon}_2 \label{eq:conv_l},
\end{align}
where the inequalities are argued as follows.
(a) follows from \eqref{eq:QDPI}. (b) follows from the condition $c_1$ and the Fannes-Audenaert inequality \cite[Theorem 11.10.1]{wilde_arxivBook}
by defining $\tilde{\epsilon}_1\deq\epsilon \log |\calH_A| |\calH_B|+h_b(\epsilon)$. (c) follows from the subadditivity of entropy. (d) follows from the memoylessness of the quantum source.
(e) follows from condition $c_2$ and the Fannes-Audenaert inequality \cite[Theorem 11.10.1]{wilde_arxivBook}, where 
condition 
\[
c_2: \| \omega^{\bB_i} -\upsilon^{\bB_i}\|_1 \leq \epsilon, \ \  \forall\; 1 \leq i \leq n,
\] is implied by $c_1$ using the monotonicity of trace distance with respect to partial trace. $\tilde{\epsilon}_2$ is defined as 
$\tilde{\epsilon}_2 \deq \epsilon \log|\calH_B|+h_b(\epsilon)$. 
(f) follows from the fact that 
$\upsilon^{\bB_i E_i}=W \omega^{{A}_{Ri}} W^{\dagger}$.
(g) follows from condition $c_2$ which can also be stated as
\[
c_2: \| \rho^{\bB} -\calN_W(\omega^{\bA_i})\|_1 \leq \epsilon, \ \  \forall \;1 \leq i \leq n,
\]
and the fact that coherent information is continuous, and the constraint set is closed and bounded. The continuity follows from the following arguments: for the fixed CPTP map $\calN_W$, let a function $f: \calD(\calH_{\bA}) \rightarrow \RR$ be defined as $f(\rho^{\bA}) = I_c(\calN_W,\rho^{{A}_R}).$
    One can establish the continuity of $f$ for a fixed $\calN_W$ by writing $I_c(\calN_W,\rho^{{A}_R}) = S(\bB)_{W\rho^{\bA}W^\dagger } - S(E)_{W\rho^{\bA}W^\dagger }, $ and using the Fannes–Audenaert Inequality \cite[Theorem 11.10.2]{wilde_arxivBook}, where $W$ is the Stinespring's extension of the given CPTP map $\calN_W$.

\noindent \textbf{Step 3: Continuity Argument:} 
We have shown that 
\[
R \in \bigcap_{\epsilon>0} \calI_{\epsilon},
\]
where we have defined for all $\epsilon \geq 0$, 
\begin{align}
    \calI_\epsilon &\deq \curly{R: \exists \; \rho^{{A}_R} \in \calS_\epsilon(\rho^{B},\calN_W) \mbox{ such that  } R \geq I_c(\calN_W,\rho^{{A}_R}) - g(\epsilon) }, 
\end{align}
and
\begin{align}
   \calS_\epsilon(\rho^{B},\calN_W) &\deq \curly{ \rho^{{A}_R} \in \calD(\calH_{\bA}):  \| \mathcal{N}_W(\rho^{\bA})-\rho^{\bB} \|_1 \leq \epsilon }, 
\end{align}
 $g(\epsilon) \deq \epsilon+\tilde{\epsilon}_1+\tilde{\epsilon}_2.$
Condition $c_2$ ensures that the set $\calS_\epsilon$ is non-empty for all $\epsilon>0$.
Now, by arguing continuity of $\calI_\epsilon$ at $\epsilon = 0$, we obtain the desired result.
\begin{lemma} \label{lem:intersectionSets}
    For the above definitions of $\calS_\epsilon$ and $\calI_{\epsilon}$, we have
    $\calS_0(\rho^{B},\calN_W)$ non-empty, and
    \[\calI_0 = \bigcap_{\epsilon>0}\calI_\epsilon.\]
\end{lemma}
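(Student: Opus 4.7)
\textbf{Proof proposal for Lemma \ref{lem:intersectionSets}.}

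The plan is to use a standard compactness/continuity argument that exploits the finite dimensionality of $\calD(\calH_{\bA})$. First, I would verify that $\calS_0(\rho^B,\calN_W)$ is non-empty. From the condition $c_2$ derived just above the lemma statement, for every $\epsilon>0$ there exists a density operator $\omega_\epsilon^{\bA_i} \in \calS_\epsilon(\rho^B,\calN_W)$, i.e., $\|\calN_W(\omega_\epsilon^{\bA_i})-\rho^{\bB}\|_1 \leq \epsilon$. Taking a sequence $\epsilon_n\searrow 0$, the corresponding sequence of density operators lies in the compact set $\calD(\calH_{\bA})$ (finite-dimensional), so it admits a limit point $\rho_0^{\bA}$. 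Since $\calN_W$ is CPTP and hence trace-norm continuous, $\calN_W(\rho_0^{\bA})=\rho^{\bB}$, i.e.\ $\rho_0^{\bA}\in \calS_0(\rho^B,\calN_W)$.

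Next, I would prove the set identity by double inclusion. The inclusion $\calI_0 \subseteq \bigcap_{\epsilon>0}\calI_\epsilon$ is immediate from monotonicity: $\calS_0\subseteq \calS_\epsilon$ for all $\epsilon>0$, and $g(\epsilon)\geq 0$, so any witness $\rho^{\bA}$ for membership in $\calI_0$ is also a witness for each $\calI_\epsilon$. For the reverse inclusion $\bigcap_{\epsilon>0}\calI_\epsilon \subseteq \calI_0$, suppose $R\in \calI_\epsilon$ for every $\epsilon>0$. Choose a sequence $\epsilon_n\searrow 0$; for each $n$ there exists $\rho_n^{\bA}\in \calS_{\epsilon_n}(\rho^B,\calN_W)$ with
\[
R \geq I_c(\calN_W,\rho_n^{\bA}) - g(\epsilon_n).
\]
By compactness of $\calD(\calH_{\bA})$, after passing to a subsequence we may assume $\rho_n^{\bA}\to \rho_*^{\bA}$ in trace norm. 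Continuity of $\calN_W$ together with the constraint $\|\calN_W(\rho_n^{\bA})-\rho^{\bB}\|_1\leq \epsilon_n$ gives $\calN_W(\rho_*^{\bA})=\rho^{\bB}$, so $\rho_*^{\bA}\in \calS_0(\rho^B,\calN_W)$.

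Finally, I would take the limit in the rate inequality. Coherent information $\rho^{\bA}\mapsto I_c(\calN_W,\rho^{\bA})$ is continuous on $\calD(\calH_{\bA})$: writing $I_c(\calN_W,\rho^{\bA})=S(\bB)_{W\rho^{\bA}W^\dagger}-S(E)_{W\rho^{\bA}W^\dagger}$ for the Stinespring dilation $W$ and applying the Fannes--Audenaert inequality (as already invoked in Step 3 of the converse) yields continuity. Moreover, $g(\epsilon)=\epsilon+\tilde\epsilon_1+\tilde\epsilon_2$ with $\tilde\epsilon_1,\tilde\epsilon_2$ of the form $\epsilon\log|\calH|+h_b(\epsilon)$ vanishes as $\epsilon\searrow 0$. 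Passing to the limit in $R\geq I_c(\calN_W,\rho_n^{\bA}) - g(\epsilon_n)$ gives $R\geq I_c(\calN_W,\rho_*^{\bA})-g(0)$, witnessing $R\in \calI_0$. The only mildly delicate point is ensuring the continuity of $I_c$ is uniform enough for the limit to pass, but this is immediate from the explicit Fannes--Audenaert bound on each entropy term; the rest is standard compactness.
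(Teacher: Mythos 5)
Your proof is correct and rests on the same two pillars the paper uses: compactness of $\calD(\calH_{\bA})$ in finite dimension and continuity of $\rho^{\bA}\mapsto I_c(\calN_W,\rho^{\bA})$ via Fannes--Audenaert. The difference is one of implementation. The paper introduces the auxiliary family $\calI'_\epsilon$ (the rate regions without the $g(\epsilon)$ slack), invokes a closure-of-intersections lemma from Cuff \emph{et al.}\ to relate $\bigcap_\epsilon \calI_\epsilon$ to $\overline{\bigcap_\epsilon \calI'_\epsilon}$, and then argues via images of the nested compact sets $\calS_\epsilon$ under the continuous map $f=I_c(\calN_W,\cdot)$, using $f(\calS_0)=\bigcap_\epsilon f(\calS_\epsilon)$. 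You bypass all of that by extracting a convergent subsequence of witnesses $\rho_n^{\bA}\in\calS_{\epsilon_n}$, using trace-norm continuity of the CPTP map to place the limit $\rho_*^{\bA}$ in $\calS_0$, and passing to the limit in $R\ge I_c(\calN_W,\rho_n^{\bA})-g(\epsilon_n)$ since $g(\epsilon_n)\to 0$. Your route is more self-contained (no appeal to the external closure lemma) and handles the $g(\epsilon)$ slack directly rather than stripping it off in a separate step; the paper's route packages the argument into a reusable ``nested compact sets under continuous images'' form. Both establish the same identity. One small point worth making explicit in your writeup: the inclusion $\calI_0\subseteq\calI_\epsilon$ also uses that $g$ is nonnegative (so $g(\epsilon)\ge g(0)=0$), which you mention but could state as part of the monotonicity of $\calI_\epsilon$ in $\epsilon$.
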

\begin{proof}
This is a standard argument used in the literature \cite{dueck1981strong,ahlswede2006strong,cuff2013distributed}.  
    A proof is provided in Appendix \ref{proof:lem:intersectionSets} for completeness.
\end{proof}
This completes the proof.

\section{Proof of Theorem \ref{thm:qclossysourcecoding}}\label{sec:qcproof}
\subsection{Proof of Achievability}
\label{sec:qc_Achievability}
For a given $(\sourcedo,\calX,\calW)$ QC source coding setup, we choose a reconstruction distribution $\targetpx \in \calA(\sourcedo,\calW)$. 
Toward specifying the POVM $\Gamma^{(n)}$ and a decoding map $f:\curly{1,2,\cdots,\Theta} \rightarrow \calX^n$, we construct a codebook $\codebook$. From now on, we let $\Theta = 2^{nR}$.  



\subsubsection{Codebook Design} We generate a codebook $\codebook$ consisting of $n$-length codewords by randomly and independently selecting $2^{nR}$ sequences $\curly{\Xn(m)}_{m\in [2^{nR}]}$ according to the following pruned distribution:
 \begin{align}\label{def:qc_distribution}
     &\codeDistribution(\Xn(m) = \xn) = \left\{\begin{array}{cc}
          \dfrac{\targetpx^n(\xn)}{(1-\varepsilon)} \quad & \mbox{for} \quad \xn \in \Txqc\\
           0 & \quad \mbox{otherwise}
     \end{array} \right. \!\!,
 \end{align} 
 where $ \targetpx^n(\xn) = \prod_{i=1}^n \targetpx(x_i)$, $\Txqc$ is the $\delta$-typical set corresponding to the distribution $\px$ on the set $\calX$, and $\varepsilon(\delta,n) \triangleq \sum_{\xn \not \in \Txqc} \targetpx^n(\xn)$. Note that $\varepsilon(\delta,n) \searrow 0$ as $n \rightarrow \infty$ and for all sufficiently small $\delta > 0$. The generated codebook $\calC$ is revealed to both the encoder and decoder before the QC lossy source compression protocol begins.
 
\subsubsection{Construction of POVM}
We use Winter's POVM construction \cite{winter}. Let $\pi_{\sourcedo}$ and $\pi_{\xn}$ denote the $\delta$-typical and conditional $\delta$-typical projectors defined as in \cite[Def. 15.1.3]{wilde_arxivBook} and \cite[Def. 15.2.4]{wilde_arxivBook}, with respect to $\sourcedo$ and $\calW$, respectively. Consider the following positive operators with a trace of less than one, and we exploit the random selection of these operators to construct the sub-POVM $\curly{A_{\xn}}$. For all $\xn \in \Txqc$, define:
\begin{align}
    \xi_{\xn} &\deq  \pi_{\sourcedo}\pi_{\xn}\calW_{\xn}\pi_{\xn}\pi_{\sourcedo}, 
\end{align}
 and $\xi_{\xn} = 0$ for $\xn \not \in \Txqc$, where $\calW_{\xn} \deq \bigotimes_i\calW_{x_i}$. We 
now define $\xi$ as the expectation of $\xi_{\xn}$ with respect to the pruned distribution $\codeDistribution$ as defined in \eqref{def:qc_distribution}:
\[\xi \deq \EE_{\codeDistribution}\sq{\xi_{\xn}} = \sum_{\xn \in \Txqc} \frac{\targetpx^n(\xn) }{1-\varepsilon}\xi_{\xn}.\]
Let $\hat{\pi}$ be the cut-off projector onto the subspaces spanned by the eigenstates of $\xi$ with eigenvalues greater than $\epsilon d$, where $d \deq 2^{-n(H(\sourcedo)+\delta_1)}$ and $\delta_1$ will be specified later. With the above notation, we define  
\begin{align}\label{eqn:rhotilde}
    \rhotilde_{\xn} \deq \hat{\pi} \xi_{\xn} \hat{\pi} \quad \eqand \quad \rhotilde \deq \EE_{\PP}\sq{\rhotilde_{\xn}} = \hat{\pi} \xi \hat{\pi}.
\end{align}
Using the Average Gentle Measurement Lemma \cite[Lemma 9.4.3]{wilde_arxivBook}, for any given $\epsilon \in (0,1)$, and all sufficiently large $n$ and all sufficiently small $\delta$, we have 
\begin{align} \label{eq:differencerhotildeandrhohat}
    \sum_{\xn \in \calX^n} \frac{\targetpx^n(\xn)}{1-\varepsilon}\norm{\rhotilde_{\xn} - \calW_{\xn}}_1 \leq \epsilon.
\end{align}
Detailed proof of the above statement can be found in \cite[Eq. 35]{wilde_e}.
Using the above definitions, for all $\xn \in \calX^n$, we construct the operators,
\[A_{\xn} \deq \gamma_{\xn}\ \sqrt{\sourcedo^{\tensor n}}^{-1} \rhotilde_{\xn} \sqrt{\sourcedo^{\tensor n}}^{-1},
\text{  where  }\gamma_{\xn} \deq \frac{1}{2^{nR}}  \frac{(1-\varepsilon)}{(1+\eta)}  \sum_{m=1}^{2^{nR}} \I_{\curly{\Xn(m) = \xn}},\]
and $\eta \in \round{0,1}$ is a parameter that determines the probability of not obtaining a sub-POVM.
Let $\I_{\curly{\mbox{sP}}}$ denote the indicator random variable corresponding to the event that  $\{A_{\xn} \colon \xn \in  \Txqc\}$ forms a  sub-POVM. 
If $\I_{\curly{\mbox{sP}}} = 1$, then construct sub-POVM $\Gamma^{(n)}$ as follows:
\begin{align}
   \Gamma^{(n)}& \deq \{A_{\xn} \colon \xn \in  \Txqc\}.
\label{eq:POVM_r1}
\end{align} 
Since $\Gamma^{(n)}$ is a sub-POVM, we add an extra operator $A_{\xn_0} \deq \round{I-\sum_{\xn\in \Txqc}A_{\xn}}$, associated with an arbitrary sequence $\xn_0 \in \calX^n \backslash\Txqc$, to form a valid POVM $\sq{\Gamma^{(n)}}$ with at most $(2^{nR}+1)$ elements.
If $\I_{\curly{\mbox{sP}}} = 0$, then we define $\Gamma^{(n)} = \curly{I}$ and associate it with $\xn_0$. This defines the POVM and the associated decoder. 
We now provide a proposition from \cite{winter}, which will be helpful later in the analysis. 
\begin{prop} \label{prop:qcsubpovm}For any $\epsilon, \eta \in (0,1)$, for any sufficiently small $\delta > 0$, and sufficiently large $n$, we have
$\EE\sq{\I_{\{\mbox{\normalfont sP}\}}} \geq 1-\epsilon$, if $R > I(X;\refstate)_{\sigma}$, where the quantum mutual information is computed with respect to the CQ state,
\[\sigma^{X\refstate} \deq \sum_x P_X(x) \ketbra{x}^{X}\tensor \calW_x,\]
 and $\{\ket{x}\}_{\curly{x\in \calX}}$ is an orthonormal basis 
    for the Hilbert space $\calH_X$ with $\dim{(\calH_X)}=|\calX|$.
\end{prop}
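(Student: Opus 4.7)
The plan is to establish the proposition via the operator Chernoff bound argument, closely following Winter's measurement compression proof~\cite{winter}. The condition $\IndSP = 1$ is equivalent to $\sum_{\xn \in \Txqc} A_{\xn} \leq I$, which, after substituting the definitions of $A_{\xn}$ and $\gamma_{\xn}$, reduces to
\[
\frac{1}{2^{nR}}\sum_{m=1}^{2^{nR}} Y_m \leq \frac{1+\eta}{1-\varepsilon}\, I,
\]
where $Y_m \deq \sqrt{\sourcedo^{\tensor n}}^{-1}\, \rhotilde_{\Xn(m)}\, \sqrt{\sourcedo^{\tensor n}}^{-1}$ are i.i.d.\ positive operators drawn according to the pruned distribution \eqref{def:qc_distribution}. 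The task is therefore to show operator-valued concentration of the sample mean of $\{Y_m\}$ around its expectation.

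Next, I would compute $\EE[Y_m]$ and bound $\|Y_m\|_{\infty}$. Since $\targetpx \in \calA(\sourcedo,\calW)$, the identity $\sum_{x} \targetpx(x)\calW_{x} = \sourcedo$ gives $\sum_{\xn} \targetpx^n(\xn)\calW_{\xn} = \sourcedo^{\tensor n}$, and combining this with the cut-off projector construction (which discards eigenstates of $\xi$ with eigenvalue below $\epsilon d$) yields $\EE[Y_m] = \sqrt{\sourcedo^{\tensor n}}^{-1}\,\rhotilde\,\sqrt{\sourcedo^{\tensor n}}^{-1}$, which is dominated by $\hat{\pi} \leq I$ on the relevant subspace. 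For the operator norm, the standard typical-projector bounds $\pi_{\xn}\,\calW_{\xn}\,\pi_{\xn} \leq 2^{-n(S(\refstate|X)_\sigma - \delta_2)}\,\pi_{\xn}$ and $\pi_{\sourcedo}\,\sqrt{\sourcedo^{\tensor n}}^{-1}\,\pi_{\sourcedo} \leq 2^{n(S(\sourcedo)+\delta_3)/2}\,\pi_{\sourcedo}$ combine, together with $\hat{\pi}$, to give $\|Y_m\|_{\infty} \leq 2^{n(I(X;\refstate)_\sigma + \delta_4)}$, where $\delta_4 \searrow 0$ as $\delta \searrow 0$.

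Finally, I would invoke the operator Chernoff bound~\cite[Ch.~17]{wilde_arxivBook} applied to $\{Y_m\}$, which yields
\[
\PP\!\left(\frac{1}{2^{nR}}\sum_{m=1}^{2^{nR}} Y_m \not\leq (1+\eta)\,\EE[Y_1]\right) \leq 2\,\dim(\calH_B^{\tensor n})\,\exp\!\left(-\frac{2^{nR}\,\eta^2}{4\,\|Y_m\|_{\infty}}\right),
\]
and the right-hand side is at most $\epsilon$ whenever $R > I(X;\refstate)_\sigma + \delta_4$. Choosing $\delta$ small enough that $\delta_4 < R - I(X;\refstate)_\sigma$, and then $n$ sufficiently large, gives $\EE[\IndSP] \geq 1-\epsilon$ as claimed. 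The main obstacle is the careful bookkeeping surrounding the cut-off projector: one must check that $\EE[Y_1]$ is dominated by $I$ with enough slack to absorb the factor $(1-\varepsilon)^{-1}(1+\eta)$, and that the dimension prefactor in the Chernoff bound, which is only exponential in $n$, is dominated by the (doubly exponential) Chernoff exponent, which is precisely what forces the single-letter rate condition $R > I(X;\refstate)_\sigma$.
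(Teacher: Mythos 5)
Your proposal correctly reconstructs the Chernoff-bound argument from Winter's measurement compression theorem, which is exactly what the paper invokes (the paper states this proposition without its own proof, citing \cite{winter} directly). The reduction of $\IndSP=1$ to the operator-valued sample-mean inequality $\frac{1}{2^{nR}}\sum_m Y_m \leq \frac{1+\eta}{1-\varepsilon} I$, the operator-norm bound $\|Y_m\|_\infty \lesssim 2^{n(I(X;\refstate)_\sigma+\delta_4)}$ via typical-projector estimates, and the resulting rate condition $R>I(X;\refstate)_\sigma$ are all sound.

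One conceptual point is mis-assigned, however. You say the chief obstacle is checking that ``$\EE[Y_1]$ is dominated by $I$ with enough slack to absorb the factor $(1-\varepsilon)^{-1}(1+\eta)$'' and attribute this to the cut-off projector $\hat\pi$. That upper bound actually needs no help from $\hat\pi$: since $\targetpx\in\calA(\sourcedo,\calW)$ gives $\sum_x\targetpx(x)\calW_x=\sourcedo$ and $\pi_{\xn}$ commutes with $\calW_{\xn}$, one has $\xi\leq (1-\varepsilon)^{-1}\sourcedo^{\tensor n}$ and therefore $\EE[Y_1]=\sqrt{\sourcedo^{\tensor n}}^{-1}\rhotilde\sqrt{\sourcedo^{\tensor n}}^{-1}\leq (1-\varepsilon)^{-1}I$ directly. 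What the cut-off actually supplies is the \emph{lower} bound $\rhotilde=\hat\pi\xi\hat\pi\geq \epsilon d\,\hat\pi$, which keeps the smallest nonzero eigenvalue of $\EE[Y_1]$ bounded away from zero on its support. It is this lower bound that makes the operator Chernoff exponent nondegenerate: the exponent is really $\exp(-N\eta^2 \lambda_{\min}(\EE[Y_1])/(4\|Y_m\|_\infty))$, and without $\hat\pi$ there would be $O(1)$ eigenvalues of $\xi$ that could be arbitrarily small, killing the concentration regardless of $R$. Your displayed Chernoff inequality, with only $\|Y_m\|_\infty$ in the denominator, implicitly assumes $\lambda_{\min}(\EE[Y_1])=\Omega(1)$; naming the cut-off as what delivers that assumption, rather than as enforcing the upper bound, is the correction needed to make the argument watertight.
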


\subsubsection{Error Analysis} 
We show that for the above-mentioned POVM and decoder, the sum of unnormalized post-measurement reference state  $\sqrt{\sourcedo^{\tensor n}} A_{\xn}
\sqrt{\sourcedo^{\tensor n}}$ is close to the unnormalized n-product posterior reference state  $\Tr(A_{\xn} \sourcedo^{\tensor n} ) \bigotimes_{i=1}^{n} \calW_{x_i}$ 
in the trace distance, averaged over the random codebook. In other words, we would like to bound the following error term:
\[\EE[\Xi(\Gamma^{(n)})] = \EE\sq{\sum_{x^n} \norm{ \sqrt{\sourcedo^{\tensor n}} A_{x^n}
\sqrt{\sourcedo^{\tensor n}} -  \Tr(A_{x^n} \sourcedo^{\tensor n} ) \bigotimes_{i=1}^{n} \calW_{x_i}}_1}.\]
We begin by splitting the error $ \Xi(\Gamma^{(n)}) $ into two terms using the indicator function $\I_{\curly{\mbox{\normalfont sP}}}$ as 
\begin{align}
    \Xi(\Gamma^{(n)})  &= \I_{\curly{\mbox{\normalfont sP}}} 
\Xi(\Gamma^{(n)})  + \round{1- \I_{\curly{\mbox{\normalfont sP}}}} \Xi(\Gamma^{(n)}) ,\nonumber \\
&\leq \I_{\curly{\mbox{\normalfont sP}}} 
\Xi(\Gamma^{(n)})  + 2\round{1- \I_{\curly{\mbox{\normalfont sP}}}} \label{eqn:qcerrorsubpovm},
\end{align}
where \eqref{eqn:qcerrorsubpovm} follows from upper bounding the trace distance between two density operators by two, i.e., its maximum value.

\vspace{10pt}
\noindent\textbf{Step 1: Isolating the error term induced by not covering}

\noindent Using the triangle inequality, we now expand the $\Xi(\Gamma^{(n)})$ under the condition $\I_{\curly{\mbox{\normalfont sP}}} = 1$.

\begin{align}
    \Xi(\Gamma^{(n)})
&\leq \!\!\!\!\!\!\!\!\ \sum_{x^n \in \Txqc} \norm{ \sqrt{\sourcedo^{\tensor n}} A_{\xn}
\sqrt{\sourcedo^{\tensor n}} -  \Tr(A_{\xn} \sourcedo^{\tensor n} ) \calW_{\xn}}_1  \nonumber\\
&\hspace{1.5in}+ \norm{ \sqrt{\sourcedo^{\tensor n}} A_{\xn_0}
\sqrt{\sourcedo^{\tensor n}}}_1 \!\!\!+  \Tr(A_{\xn_0} \sourcedo^{\tensor n} ) \norm{\calW_{\xn_0}}_1, \nonumber \\
&= \!\!\!\!\!\!\!\!\ \sum_{x^n \in \Txqc} \norm{ \sqrt{\sourcedo^{\tensor n}} A_{\xn}
\sqrt{\sourcedo^{\tensor n}} -  \Tr(A_{\xn} \sourcedo^{\tensor n} ) \calW_{\xn}}_1 + 2  \Tr(A_{x_0^n} \sourcedo^{\tensor n})=\zeta + 2\Tilde{\zeta},
\end{align}
where we have defined:
\begin{align}
    \zeta &\deq  \sum_{\xn\in \Txqc}\norm{ {\sqrt{\sourcedo^{\tensor n}} A_{\xn}
\sqrt{\sourcedo^{\tensor n}}} -  {\Tr(A_{\xn} \sourcedo^{\tensor n} )}\bigotimes_{i=1}^{n} \calW_{x_i}}_1, \nonumber \\
\eqand \Tilde{\zeta}&\deq \Tr(A_{\xn_0} \sourcedo^{\tensor n}) =   \Tr{\Bigg(I- \!\!\!\!\! \sum_{x \in \Txqc} \!\!\! A_{\xn}\Bigg)\sourcedo^{\tensor n}}. \nonumber 
\end{align}
The error term $\Tilde{\zeta}$ captures the error induced by not covering the $n$-tensored posterior reference state. 
We provide the following proposition that bounds this term.
\begin{prop}\label{prop:qcerrornotcovering}
For all $\epsilon \in (0,1)$, and for all sufficiently small $\eta, \delta > 0$, and sufficiently large $n$, we have $\EE\sq{ \I_{\curly{\mbox{\normalfont sP}}}\Tilde{\zeta}} \leq \epsilon$.
\end{prop}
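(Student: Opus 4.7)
The plan is to reduce the covering error $\Tilde{\zeta}$ to a linear functional of the codewords, take the codebook expectation, and then invoke \eqref{eq:differencerhotildeandrhohat} to bound the trace of $\rhotilde$ close to $1$. First I would substitute the explicit form of $A_{\xn}$ into the definition of $\Tilde{\zeta}$: using $\sqrt{\sourcedo^{\tensor n}}A_{\xn}\sqrt{\sourcedo^{\tensor n}} = \gamma_{\xn}\,\rhotilde_{\xn}$ and $\Tr\{\sourcedo^{\tensor n}\}=1$, one gets
\[
\Tilde{\zeta}_0 \;\deq\; 1 - \sum_{\xn \in \Txqc} \gamma_{\xn}\,\Tr\{\rhotilde_{\xn}\},
\]
which coincides with $\Tilde{\zeta}$ on $\{\IndSP=1\}$. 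Since $0\le\Tr\{\rhotilde_{\xn}\}\le\Tr\{\calW_{\xn}\}=1$ and $\sum_{\xn}\gamma_{\xn}=(1-\varepsilon)/(1+\eta)\le 1$, the quantity $\Tilde{\zeta}_0$ is always nonnegative, and $\IndSP\,\Tilde{\zeta}\le \Tilde{\zeta}_0$, so it suffices to bound $\EE[\Tilde{\zeta}_0]$.

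Next I would compute the codebook expectation by linearity. Using the pruned sampling law \eqref{def:qc_distribution} and independence of the $\Xn(m)$'s,
\[
\EE[\gamma_{\xn}] \;=\; \frac{1-\varepsilon}{(1+\eta)2^{nR}}\sum_{m=1}^{2^{nR}}\PP(\Xn(m)=\xn)
\;=\; \frac{\targetpx^{n}(\xn)}{1+\eta}\,\I_{\{\xn\in\Txqc\}},
\]
so that
\[
\EE[\Tilde{\zeta}_0] \;=\; 1 - \frac{1-\varepsilon}{1+\eta}\,\EE_{\PP}\!\sq{\Tr\{\rhotilde_{\Xn}\}}
\;=\; 1 - \frac{1-\varepsilon}{1+\eta}\,\Tr\{\rhotilde\}.
\]
Now the key step: I would use \eqref{eq:differencerhotildeandrhohat} together with the fact that $\Tr\{\calW_{\xn}\}=1$ for every $\xn$ to conclude
\[
\bigl|\Tr\{\rhotilde\}-1\bigr| \;\le\; \EE_{\PP}\!\sq{\bigl|\Tr\{\rhotilde_{\Xn}-\calW_{\Xn}\}\bigr|}
\;\le\; \EE_{\PP}\!\sq{\norm{\rhotilde_{\Xn}-\calW_{\Xn}}_1} \;\le\; \epsilon,
\]
for all sufficiently large $n$ and sufficiently small $\delta>0$. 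Substituting back gives
$\EE[\Tilde{\zeta}_0]\le 1-(1-\varepsilon)(1-\epsilon)/(1+\eta)$, which can be made smaller than any prescribed threshold by choosing $\eta,\delta$ small (noting $\varepsilon(\delta,n)\searrow 0$).

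The proof is essentially bookkeeping once the right reformulation is in place; the main obstacle is the careful handling of $\IndSP$, since $\Tilde{\zeta}$ is defined only on the sub-POVM event. The trick above---introducing the unconditional $\Tilde{\zeta}_0$ and verifying $\IndSP\,\Tilde{\zeta}\le\Tilde{\zeta}_0$ via the elementary bound $\sum_{\xn}\gamma_{\xn}\le 1$---bypasses any delicate conditioning. Everything else is a direct consequence of the Average Gentle Measurement estimate \eqref{eq:differencerhotildeandrhohat}, which is already in hand from Winter's construction.
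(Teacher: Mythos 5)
Your proof is correct and follows essentially the same route as the paper's: drop the indicator by the nonnegativity of $1-\sum_{\xn}\gamma_{\xn}\Tr\{\rhotilde_{\xn}\}$, take the codebook expectation to arrive at $1-\frac{1-\varepsilon}{1+\eta}\Tr\{\rhotilde\}$, and then bound $\Tr\{\rhotilde\}$ close to $1$. The one small deviation is in the last step: the paper invokes the direct bound $\Tr\{\rhotilde\}\geq 1-2\varepsilon-2\sqrt{\varepsilon}$ from \cite[Eq.~28]{wilde_e}, whereas you derive the estimate $|\Tr\{\rhotilde\}-1|\leq\epsilon$ from the stated trace-norm inequality \eqref{eq:differencerhotildeandrhohat} via $|\Tr\{A\}|\leq\norm{A}_1$, which is a valid (and slightly more self-contained) substitute.
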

\begin{proof}
    The proof is provided in Appendix \ref{app:prop:proof:qcnotcoveringerror}.
\end{proof}
\vspace{10pt}
\noindent\textbf{Step 2: Bounding the error induced by covering}

\noindent We now bound the term $\zeta$, which captures the error induced by covering. 
Under the condition $\I_{\curly{\mbox{\normalfont sP}}} = 1$, we rewrite $\zeta$ as
\begin{align}
    \zeta &
= \sum_{\xn \in \Txqc} \gamma_{\xn} \Tr{\rhotilde_{\xn}}\norm{\frac{\rhotilde_{\xn}}{\Tr{\rhotilde_{\xn}}} - \calW_{\xn}}_1.
\end{align}
We now provide the following proposition that bounds the error term $\zeta$.
\begin{prop}\label{prop:qcerrorcovering}
For all $\epsilon, \eta \in (0,1)$, for all sufficiently small $\delta > 0$, and sufficiently large $n$, we have
    $\EE\sq{\I_{\curly{\mbox{\normalfont sP}}}\zeta} \leq \epsilon$.
\end{prop}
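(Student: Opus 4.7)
The plan is to reduce the bound on $\EE[\I_{\{\mbox{sP}\}} \zeta]$ to the gentle-measurement estimate \eqref{eq:differencerhotildeandrhohat}, which already controls the averaged trace distance $\sum_{x^n} \targetpx^n(x^n) \|\rhotilde_{x^n} - \calW_{x^n}\|_1 / (1-\varepsilon)$. Since $\zeta \geq 0$ and $\I_{\{\mbox{sP}\}} \leq 1$, I would first drop the indicator so that it suffices to bound $\EE[\zeta]$, which simplifies the analysis considerably since the complicated sub-POVM constraint no longer has to be tracked.

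Next, I would eliminate the normalization inside the trace norm by pulling the scalar $\Tr(\rhotilde_{x^n})$ inside, obtaining $\Tr(\rhotilde_{x^n}) \|\rhotilde_{x^n}/\Tr(\rhotilde_{x^n}) - \calW_{x^n}\|_1 = \|\rhotilde_{x^n} - \Tr(\rhotilde_{x^n})\calW_{x^n}\|_1$. Applying the triangle inequality, together with $\Tr(\calW_{x^n}) = 1$ and the elementary observation that $|1 - \Tr(\rhotilde_{x^n})| = |\Tr(\calW_{x^n} - \rhotilde_{x^n})| \leq \|\calW_{x^n} - \rhotilde_{x^n}\|_1$, yields
\begin{equation*}
\Tr(\rhotilde_{x^n}) \bigg\|\frac{\rhotilde_{x^n}}{\Tr(\rhotilde_{x^n})} - \calW_{x^n}\bigg\|_1 \leq 2\,\|\rhotilde_{x^n} - \calW_{x^n}\|_1.
\end{equation*}
This reduces $\zeta$ to a sum involving only the \emph{unnormalized} difference that \eqref{eq:differencerhotildeandrhohat} already controls.

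Finally, I would take expectation with respect to the random coefficients $\gamma_{x^n}$. Under the pruned IID distribution \eqref{def:qc_distribution}, the codewords $\{X^n(m)\}$ are mutually independent with marginal $\targetpx^n/(1-\varepsilon)$ supported on $\Txqc$, so a direct computation gives $\EE[\gamma_{x^n}] = \targetpx^n(x^n)/(1+\eta)$ for $x^n \in \Txqc$ and zero otherwise. Combining with the previous step,
\begin{equation*}
\EE[\I_{\{\mbox{sP}\}} \zeta] \leq \EE[\zeta] \leq \frac{2(1-\varepsilon)}{1+\eta} \sum_{x^n \in \Txqc} \frac{\targetpx^n(x^n)}{1-\varepsilon} \|\rhotilde_{x^n} - \calW_{x^n}\|_1 \leq \frac{2\epsilon'}{1+\eta},
\end{equation*}
by \eqref{eq:differencerhotildeandrhohat} with $\epsilon'$ taken small. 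Choosing $\epsilon' < \epsilon/2$, then $\delta$ sufficiently small and $n$ sufficiently large, closes the argument. There is no real obstacle here: the Average Gentle Measurement Lemma did the analytic heavy lifting when constructing $\rhotilde_{x^n}$, and what remains is bookkeeping via the triangle inequality plus a one-line expected-value computation of $\gamma_{x^n}$.
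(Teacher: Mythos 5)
Your proof is correct and follows the same overall blueprint as the paper (triangle inequality to split off the normalization error, expectation of $\gamma_{x^n}$ reduces to the pruned IID marginal, the Average Gentle Measurement estimate \eqref{eq:differencerhotildeandrhohat} does the rest), but you handle the normalization piece more economically. The paper decomposes $\zeta \le \zeta_1 + \zeta_2$ with $\zeta_1$ built from $\|\tilde\rho_{x^n}/\Tr(\tilde\rho_{x^n}) - \tilde\rho_{x^n}\|_1 = 1 - \Tr(\tilde\rho_{x^n})$, and then bounds $\EE\big[1-\Tr(\tilde\rho)\big]$ by invoking a separate estimate from Wilde \emph{et al.} (their Eq.~28), whereas $\zeta_2$ uses \eqref{eq:differencerhotildeandrhohat}. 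You instead observe the pointwise bound $|1-\Tr(\tilde\rho_{x^n})| = |\Tr(\calW_{x^n} - \tilde\rho_{x^n})| \le \|\tilde\rho_{x^n}-\calW_{x^n}\|_1$, which lets you absorb both contributions into the single quantity that \eqref{eq:differencerhotildeandrhohat} already controls, at the cost of a harmless factor of $2$. Both arguments are correct; yours needs one fewer external citation and slightly less bookkeeping. One minor point to keep in mind if you write this up: dropping the indicator via $\EE[\I_{\{\mbox{sP}\}}\zeta] \le \EE[\zeta]$ is fine here because the operators $\tilde\rho_{x^n}$ and coefficients $\gamma_{x^n}$ in the definition of $\zeta$ do not themselves depend on the sub-POVM event, so $\zeta$ is a well-defined nonnegative random variable regardless of whether $\I_{\{\mbox{sP}\}}=1$.
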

\begin{proof}
    The proof is provided in Appendix \ref{app:prop:proof:qcerrorcovering}.
\end{proof}

\noindent Finally, using Propositions \ref{prop:qcsubpovm}, \ref{prop:qcerrornotcovering}, and \ref{prop:qcerrorcovering}, we bound  $\EE\sq{\Xi(\Gamma^{(n)}}$, for all $\epsilon \in (0,1)$,  
\begin{align}
    \EE_{\codebook}\sq{\Xi(\Gamma^{(n)})} &\overset{}{\leq}\EE_{\codebook}\sq{\I_{\curly{\mbox{\normalfont sP}}} 
\Xi(\Gamma^{(n)}) + 2\round{1- \I_{\curly{\mbox{\normalfont sP}}}}} \overset{}{\leq}\EE_{\codebook}\sq{ \I_{\curly{\mbox{\normalfont sP}}} 
\Xi(\Gamma^{(n)})} + 2\epsilon \overset{}{\leq} 6 \epsilon. \nonumber
\end{align}
Since $\EE_{\codebook}\sq{\Xi(\Gamma^{(n)})} \leq 6\epsilon$, there exists a codebook $\codebook$ and the associated POVM $\Gamma^{(n)}$ 
such that $\Xi(\Gamma^{(n)}) \leq  6\epsilon$. 
This completes the achievability proof.

\subsection{Proof of Converse}
\label{sec:qc_converse}
Let $R$ be an achievable rate. Then from Definition \ref{def:qc_achievability}, given a triple $(\sourcedo,{\calX},\calW)$, for all $\epsilon > 0$, and all sufficiently large $n$, there exists $(n,\Theta)$ QC lossy compression protocol with a POVM $\Gamma^{(n)} = \curly{A_m}_{m\in [\Theta]}$ and a decoding map $f$ that satisfies the following constraint:
\[\sum_{x^n} \norm{ \sqrt{\sourcedo^{\tensor n}} A_{f^{-1}(x^n)}
\sqrt{\sourcedo^{\tensor n}} -  \Tr(A_{f^{-1}(x^n)} \sourcedo^{\tensor n}) \bigotimes_{i=1}^{n} \calW_{x_i}}_1 \leq \epsilon, \eqand \frac{1}{n}\log \Theta \leq R+\epsilon.\]
Let $M$ denote the transmitted message, and define the following classical-quantum state:
\begin{align}
\omega^{\Xn \refstate^n} &\deq \sum_{\xn} 
\ketbra{\xn}\tensor {\sqrt{\sourcedo^{\tensor n}} A_{f^{-1}(x^n)}
\sqrt{\sourcedo^{\tensor n}}}
\eqand \nonumber \\
\tau^{\Xn \refstate^n} &\deq \sum_{\xn} \Tr(A_{f^{-1}(x^n)} \sourcedo^{\tensor n}) \ketbra{\xn} \tensor \calW_{x^n},
\end{align}
where $\omega^{\Xn \refstate^n}$
and $\tau^{\Xn \refstate^n}$ are the resulting CQ-states of the QC lossy compression protocol and the ideal QC lossy compression protocol according to Definition \ref{def:qc_achievability}, respectively. By triangle equality, we have $\norm{\omega^{\Xn \refstate^n} - \tau^{\Xn \refstate^n}}_1 \leq \epsilon$.
We now provide a lower bound on the rate $R$. We have the following inequalities:
\begin{align}
    nR &= \log \Theta - n\epsilon \geq H(M) - n\epsilon  \overset{}{\geq} I(M;\refstate^n)_{\omega} 
    - n\epsilon \nonumber \\
    &\overset{a}{\geq} I(\Xn;\refstate^n)_{\omega} -n\epsilon, \nonumber \\
    &\overset{b}{\geq} nS(\refstate)_\omega - \sum_{i=1}^nS((\refstate)_i|X_i)_{\omega} -n\epsilon\nonumber\\
    &\overset{c}{\geq} nS(\refstate)_{\omega_Q} - nS(\refstate|X)_{\omega_Q} -n\epsilon = nI(X;\refstate)_{\omega_Q} -n\epsilon,\nonumber\\
    &\overset{d}{\geq} nI(X;\refstate)_{\tau_Q} - n\tilde{\epsilon}(\epsilon) -n\epsilon ,
\end{align}
where inequalities are argued as follows:
$(a)$ follows from the quantum data processing inequality \cite[Section 11.9.2]{wilde_arxivBook}, 
$(b)$ follows from the fact that conditioning does not increase quantum entropy,
$(c)$ follows from the concavity of conditional quantum entropy \cite[Ex. 11.7.5]{wilde_arxivBook} and by defining 
\[\omega^{X_Q (\refstate)_Q} \deq \frac{1}{n}\sum_{i=1}^n \Tr_{X^{n \backslash i}(\refstate)^{n\backslash i}}\curly{\omega^{\Xn \refstate^n}}\eqand \text{ noting that } \omega^{(\refstate)_Q} = \sourcedo,\] and 
$(d)$ follows from the continuity of quantum mutual information (AFW inequality) \cite[Ex. 11.10.2]{wilde_arxivBook},  by defining \[\tau^{X_Q (\refstate)_Q} \deq \frac{1}{n}\sum_{i=1}^n  \Tr_{X^{n \backslash i}(\refstate)^{n\backslash i}}\curly{\tau^{\Xn \refstate^n}} = \sum_{x} \round{\frac{1}{n}\sum_{i=1}^n \sum_{x^{n\backslash i}} \Tr{A_{f^{-1}(\xn)}\sourcedo^{\tensor n}}} \ketbra{x} \tensor \calW_{x}, \]
and
$ \Tilde{\epsilon} \deq \frac{3}{2}\epsilon \log(\dim \calH_B) + (2+\epsilon)h_{b}\Big(\frac{\epsilon}{2+\epsilon}\Big)$,
and noting
 \begin{equation}\label{eq:QCaverageSingleletter}
 \norm{\sourcedo - \Tr_{X_Q}\{\tau^{X_Q (\refstate)_Q}\}}_1 \leq \norm{\omega^{X_Q (\refstate)_Q} - \tau^{X_Q (\refstate)_Q}}_1 \leq \norm{\omega^{\Xn \refstate^n} - \tau^{\Xn \refstate^n}}_1 \leq \epsilon,
\end{equation}
where $\Tr_{X_Q}\{\tau^{X_Q (\refstate)_Q}\} =  \sum_x  P_{X_Q}(x)\calW_x$, and 
\[
P_{X_Q}(x) \deq \round{\frac{1}{n}\sum_{i=1}^n \sum_{x^{n\backslash i}} \Tr{A_{f^{-1}(\xn)}\sourcedo^{\tensor n}}}.
\]
We note that $\sum_x P_{X_Q}(x)=1$.
So far, we have shown that 
\[
R \in \bigcap_{\epsilon>0} \calI_{\epsilon},
\]
where we have defined for all $\epsilon \geq 0$, 
\[
\calI_\epsilon(\sourcedo,\calW) \deq \{R: \exists \; \px \in \calA_{\epsilon} \mbox{ such that } R\geq  I(X,\refstate)_{\sigma} -g(\epsilon)\},
\]
\[
\calA_\epsilon(\sourcedo,\calW) \deq \{\px \in \calP(\calX): \|\sum_{x} \px(x) \calW_x - \sourcedo\|_1 \leq \epsilon\}, \eqand \sigma^{X\refstate} \deq \sum_x P_X(x) \ketbra{x}^{X}\tensor \calW_x,\]
$g(\epsilon) \deq \tilde{\epsilon}+\epsilon$. Equation \eqref{eq:QCaverageSingleletter} ensures that the set $\calA_{\epsilon}$ is non-empty for $\epsilon>0$.
Using the continuity of rate regions similar to Lemma \ref{lem:intersectionSets}, we obtain 
$\bigcap_{\epsilon>0} \calI_{\epsilon}=\calI_{0},
$ and $\calA_0$ is non-empty, and hence 
$R \in \calI_0$.
This concludes the converse proof.

    

\section{Proof of Theorem \ref{thm:clsrate_distortion}}\label{sec:clsproof}
We begin the section with the achievability part, i.e., any rate $R$
that satisfies \eqref{eqn:clsratedistortion} is achievable. We then prove the converse, i.e., any achievable lossy source compression protocol must satisfy \eqref{eqn:clsratedistortion}.
\subsection{Proof of Achievability}\label{subsec:clsachievability}
For a given source distribution $\px$, reconstruction alphabet $\hat{\calX}$, and a posterior channel $\prevTC$, we choose a reconstruction distribution $\pxhat \in \calA(\px,\prevTC)$. Toward specifying the encoder $\encodern:\calX^n\longrightarrow[\Theta]$ and  the decoder $\decodern:[\Theta]\longrightarrow\hat{\calX}^n$, we construct a codebook $\codebook$. From now on, we let $\Theta = 2^{nR}+1$. 


\subsubsection{Codebook Construction} We construct a codebook $\codebook \deq \{\hat{X}^n(1),
\hat{X}^n(2),\cdots, \hat{X}^n(2^{nR})\}$, by choosing each codewords randomly and independently according to the following ``\textit{pruned}'' distribution:
\[\codeDistribution(\hat{X}^n(m) = \xhat^n) = \begin{cases}
     \dfrac{\pxhat^n(\hat{x}^n)}{1-\varepsilon}  & \text{if} \ \hat{x}^n\in \Txhat, \\
     0 &\text{otherwise.}
    \end{cases}\]
where $ \pxhat^n(\hat{x}^n) = \Pi_{i=1}^n \pxhat(\hat{x}_i)$, $\Txhat$ is the $\delta$-typical set corresponding to the distribution $\pxhat$ on the set $\hat{\calX}$, and $\varepsilon(\delta,n) \deq \sum_{\hat{x}^n \not \in \Txhat} \pxhat^n(\hat{x}^n)$. 
The codebook $\codebook$ is revealed to both the encoder and the decoder before the lossy source compression protocol begins.
\subsubsection{Encoder Description} 
For an observed source sequence $x^n$, construct a randomized encoder that chooses an index $m \in [2^{nR}]$ according to a sub-PMF $E_{M|X^n}(m|x^n)$\footnote{A non-negative function $q_X(x)$ over a finite alphabet $\calX$ is said to be a sub-PMF if $\sum_{x\in \calX} q_{X}(x) \leq 1$.}, which is analogous to the likelihood encoders used in  \cite{cuff2013distributed, atif2022source}. We now specify $E_{M|X^n}(m|x^n)$ for $x^n\in \Tx$ and $m\in[2^{nR}]$, where $\hat{\delta} = \delta(|\calX| + |\hat{\calX}|)$. For a  $\eta \in (0,1)$ (to be specified later), and $\delta>0$, define
\begin{align}\label{eqn:clssubpmfenc}
 E_{M|X^n}(m|x^n) \deq \sum_{\hat{x}^n}   \frac{1}{2^{nR}}\frac{(1-\varepsilon)}{(1+\eta)}\frac{\prevTC^n(x^n|\hat{x}^n)}{\px^n(x^n)} 
  \I_{\{\hat{x}^n\in \Txhat\}} 
  \I_{\{x^n\in \Txcond\}}
  \I_{\{\hat{X}^n(m) = \hat{x}^n\}}.
\end{align}
Similar to the encoder specification in \cite{atif2022source}, we also have relaxed the constraint that $E_{M|X^n}(\cdot|x^n)$ is strictly a PMF, i.e, $\sum_{m = 1}^{2^{nR}} E_{M|X^n}(m|x^n) = 1$. Let $\I_{\curly{\mbox{sPMF}}}$ denotes the indicator random variable corresponding to the event 
that $\{E_{M|\Xn}(m|\xn)\}_{m \in [\Theta]}$ forms a sub-PMF for all $\xn\in\Tx $.
If $\I_{\curly{\mbox{sPMF}}} = 1$, then construct the sub-PMF as follows:
\[P_{M|\Xn}(m|\xn) \deq {E_{M|\Xn}(m|\xn), \text{ for all }\xn\in\Tx \eqand m\in [\Theta]}.\]
We then add an additional PMF element $P_{M|\Xn}(0|\xn) = E_{M|\Xn}(0|\xn) \deq \round{1-\sum_{m = 1}^{2^{nR}} E_{M|X^n}(m|x^n)}$ for all $\xn \in \Tx$, associated with $m=0$, to form a valid PMF $P_{M|\Xn}(m|\xn)$ for all $\xn \in \Tx$ and $ m\in\curly{0} \cup [2^{nR}]$. If $\xn \not \in \Tx$, then we define $P_{M|\Xn}(m|\xn) = \I_{\curly{m=0}}$.
We provide a proposition that will be helpful later in the analysis.
\begin{prop}\label{prop:clssubPMF}
    For all $\epsilon,\eta \in (0,1)$, for all sufficiently small $\delta > 0$, and sufficiently large $n$, we have $\EE\sq{\I_{\curly{\mbox{\normalfont sPMF}}}} \geq 1-\epsilon$, 
    i.e.,
    \[\Pr\round{\bigcap_{\xn \in \Tx} \round{\sum_{m=1}^{2^{nR}} E_{M|\Xn}(m|\xn) \leq 1}} \geq 1-\epsilon,\]
if $R > I(X;\Xhat)$.
\end{prop}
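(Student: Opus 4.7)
The plan is to view $S_{x^n}\deq\sum_{m=1}^{2^{nR}}E_{M|X^n}(m|x^n)$, for each $x^n\in\Tx$, as a sum of $2^{nR}$ independent nonnegative random variables (one per codeword $\hat X^n(m)$) and to show that with high probability the event $\{S_{x^n}\le 1\}$ holds simultaneously for every $x^n\in\Tx$. The randomness comes entirely from the codebook $\codebook$, generated according to the pruned distribution.

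First I would compute the expectation. Writing $Y_m(x^n)$ for the $m$-th summand in \eqref{eqn:clssubpmfenc}, the $Y_m$'s are i.i.d.\ over $m$ (for fixed $x^n$), and a direct computation using $\EE[\I_{\{\hat X^n(m)=\hat x^n\}}]=\pxhat^n(\hat x^n)/(1-\varepsilon)$ for $\hat x^n\in\Txhat$ yields
\begin{align}
\EE[S_{x^n}] \;=\; \frac{1}{1+\eta}\sum_{\hat x^n\in\Txhat}\frac{\prevTC^n(x^n|\hat x^n)\pxhat^n(\hat x^n)}{\px^n(x^n)}\,\I_{\{x^n\in\Txcond\}}
\;\le\;\frac{1}{1+\eta},
\end{align}
where the last bound uses the marginal consistency $\sum_{\hat x}\pxhat(\hat x)\prevTC(x|\hat x)=\px(x)$ guaranteed by $\pxhat\in\calA(\px,\prevTC)$. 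Thus the mean is already strictly below $1$, leaving only a concentration argument to close the gap.

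Next I would bound each $Y_m$ almost surely. On the indicator events $\{\hat x^n\in\Txhat\}\cap\{x^n\in\Txcond\}$, standard strong-typicality bounds on $\prevTC^n(x^n|\hat x^n)$ and $\px^n(x^n)$ give $\prevTC^n(x^n|\hat x^n)/\px^n(x^n)\le 2^{n(I(X;\Xhat)+\delta_1)}$, where $\delta_1(\delta)\searrow 0$ as $\delta\searrow 0$. Therefore
\begin{align}
0\;\le\; Y_m(x^n)\;\le\;\frac{(1-\varepsilon)}{(1+\eta)}\,2^{-n(R-I(X;\Xhat)-\delta_1)}\;\deq\; b_n,
\end{align}
which is exponentially small provided $R>I(X;\Xhat)+\delta_1$. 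Applying a multiplicative Chernoff/Bernstein bound to the i.i.d.\ sum $S_{x^n}=\sum_m Y_m$ with mean $\mu\le 1/(1+\eta)$ and per-term bound $b_n$ gives
\begin{align}
\Pr\bigl(S_{x^n}>1\bigr)\;\le\;\exp\!\bigl(-\Omega(1)\cdot 2^{n(R-I(X;\Xhat)-\delta_1)}\bigr),
\end{align}
i.e.\ a doubly exponential decay in $n$ whenever $R>I(X;\Xhat)$ and $\delta$ is small enough.

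Finally, I would close via a union bound over the (singly) exponentially many typical sequences, $|\Tx|\le 2^{n(H(X)+\delta_2)}$. Because the Chernoff tail is doubly exponential and the union bound contributes only a single exponential, the resulting probability that $\{S_{x^n}\le 1\}$ fails for some $x^n\in\Tx$ is still super-exponentially small, and this can be made below any given $\epsilon$ for sufficiently large $n$ and sufficiently small $\delta$. The main subtlety I expect is quantitative: one must pick $\delta$ small enough (so that $\delta_1,\delta_2$ are small relative to the gap $R-I(X;\Xhat)$) to ensure the Chernoff exponent dominates the union-bound exponent, and one must handle the pruning factor $(1-\varepsilon)$ cleanly so that it is absorbed into the $1/(1+\eta)$ slack. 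Everything else reduces to a standard soft-covering-style estimate, analogous to the arguments used in \cite{cuff2013distributed, atif2022source}.
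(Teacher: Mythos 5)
Your argument is correct and follows essentially the same route as the paper's proof: rescale the i.i.d.\ per-codeword terms to be bounded in $[0,1]$ (the paper multiplies by $D=2^{n(H(X|\hat X)-\delta_1)}$, you divide by $b_n$, which is the same thing up to the prefactor), invoke a Chernoff bound to get a doubly exponential tail, and finish with a union bound over the singly exponential set $\Tx$. The one cosmetic difference is that the paper lower-bounds the mean by $\theta=2^{-n(I(X;\hat X)+4\delta_1)}$ and uses a two-sided multiplicative Chernoff (Lemma~\ref{lem:chernoff}) together with the separate upper bound $\EE[Z(x^n)]\le \px^n(x^n)$, whereas you upper-bound the mean by $1/(1+\eta)$ and exploit the $(1+\eta)$ slack directly; to make that precise you would want the exponential-moment form $\Pr(\sum_m Y_m\ge 1)\le\inf_{s>0}e^{-s}\prod_m\EE[e^{sY_m}]$, which with $Y_m/b_n\in[0,1]$ and $\EE[\sum Y_m]\le 1/(1+\eta)$ gives exponent $b_n^{-1}\bigl(\log(1+\eta)-\eta/(1+\eta)\bigr)>0$ and hence the same doubly exponential decay.
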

\begin{proof}
    A proof is provided in Appendix \ref{app:proof:prop:clssubPMF}.
\end{proof}
\noindent We now summarize $P_{M|X^n}$ for $m\in \{0\}\cup [2^{nR}]$ and under the condition that $\I_{\curly{\mbox{\normalfont sPMF}}} = 1$,
\begin{equation}
    P_{M|X^n}(m|x^n) \deq \begin{cases}
    \I_{\{m=0\}} & \text{if } x^n \not \in \Tx,\\
     E_{M|X^n}(m|x^n) & \text{if } x^n \in \Tx.
    \end{cases}
\end{equation}
If $\I_{\curly{\mbox{\normalfont sPMF}}} = 0, \text{ then } P_{M|\Xn}(m|\xn) = \I_{\{m=0\}}$, for all $\xn \in \calX^n$. This concludes the encoder description.
\subsubsection{Decoder Description}
We now describe the decoder. For an observed index $m \in \{0\}\cup [2^{nR}]$ communicated by the encoder, the decoder outputs $\hat{X}^n(m)$ if $m \neq 0$. Otherwise, decoder outputs a fixed $\hat{x}_0^n \in \hat{\calX}^n \backslash \Txhat$, i.e.,
\begin{equation}
    \decodern(m) \deq 
    \begin{cases}
    \hat{X}^n(m) & \text{if } m \neq 0,\\
    \hat{x}^n_0  &\text{otherwise.}
    \end{cases}
\end{equation}

\subsubsection{Error Analysis}

We show that for the above-mentioned encoder and decoder, $P_{X^n\hat{X}^n}$ is close to the approximating distribution $P_{\hat{X}^n}\prevTC^n$ in the total variation, averaged over the random codebook. 
We begin by splitting the error $ \Xi(\encodern,\decodern)$ into two terms using the indicator function $\I_{\curly{\mbox{\normalfont sPMF}}}$ as 
\begin{align}
    \Xi(\encodern,\decodern) &= \I_{\curly{\mbox{\normalfont sPMF}}} 
\Xi(\encodern,\decodern) + \round{1- \I_{\curly{\mbox{\normalfont sPMF}}}} \Xi(\encodern,\decodern),\nonumber \\
&\leq \I_{\curly{\mbox{\normalfont sPMF}}} 
\Xi(\encodern,\decodern) + \round{1- \I_{\curly{\mbox{\normalfont sPMF}}}}. \label{eqn:clserrorsubpmf}
\end{align}

\vspace{10pt}
\noindent\textbf{Step 1: Isolating the error term induced by not covering}

\noindent Using the triangle inequality, we now expand the $\Xi(\encodern,\decodern)$ under the condition $\I_{\curly{\mbox{\normalfont sPMF}}} = 1$.
\begin{align}
    2 \; \Xi(&\encodern,\decodern) \nonumber \\ &= \sum_{\xn \xhat^n}   \Bigg | \px^n(\xn) \!\!\!\!\!\sum_{m \in \curly{0}\cup[2^{nR}]}   \!\!\!\!\! P_{M|X^n}(m|x^n) \I_{\{\hat{X}^n(m) = \hat{x}^n\}} 
    -\!\!\!\!\!\!\!\!\sum_{m \in \curly{0}\cup[2^{nR}]} \!\!\!\!\!  \!\!\!\! P_{M}(m)  
   \I_{\{\hat{X}^n(m) = \hat{x}^n\}}\prevTC^n(\xn|\hat{x}^n)\Bigg|  \nonumber\\
   &\overset{a}{\leq} \!\!\!\!\!\sum_{\substack{\xn  \in \Tx \\ \xhat^n}}   \Bigg | \px^n(\xn) \!\!\!\!\!\sum_{m \in [2^{nR}]}  \!\!\!\!\! E_{M|X^n}(m|x^n) \I_{\{\hat{X}^n(m) = \hat{x}^n\}}
    -\!\!\!\!\sum_{m \in [2^{nR}]} \!\!\!\!P_{M}(m) \I_{\{\hat{X}^n(m) = \hat{x}^n\}} 
    \prevTC^n(\xn|\hat{x}^n)\Bigg| \nonumber\\ 
   &\hspace{1in}+
   \sum_{\substack{\xn  \in \Tx}}\Bigg| \px^n(\xn)E_{M|\Xn}(0|\xn) - P_M(0) \prevTC^n(\xn|\xhat_0^n) \Bigg|
   \nonumber \\
   &\hspace{1in} +
   \sum_{\substack{\xn  \not \in \Tx \\ \hat{x}^n}}\Bigg| \px^n(\xn) \I_{\{\xhat^n = \hat{x}_0^n\}}  - \!\!\!\!\!\!\!\!\sum_{m\in \curly{0}\cup[2^{nR}]}P_M(m) 
 \prevTC^n(\xn|\xhat^n) \I_{\{\hat{X}^n(m) = \hat{x}^n\}}  \Bigg| 
   \nonumber \\
    &\overset{b}{\leq} \zeta + 
   \Tilde{\zeta} + \!\!\!\! \sum_{\xn\in \Tx} P_M(0) 
   \prevTC^n(\xn|\xhat_0^n)
   + \!\!\!\! \sum_{\xn \not \in \Tx} P_M(0) \prevTC^n(\xn|\xhat_0^n) + \!\!\!\! \sum_{\substack{\xn  \not \in \Tx}} \px^n(\xn) \nonumber \\
   &\hspace{1in}+ \sum_{ m\in [2^{nR}]}\sum_{\substack{\xhat^n \in \Txhat \\ \xn \not \in \Tx}} P_{M}(m) \prevTC^n(\xn|\hat{x}^n)\I_{\{\hat{X}^n(m) = \hat{x}^n\}}
   \nonumber \\
   &\overset{c}{\leq} \zeta + 
   \Tilde{\zeta} + \!\!\!\!
   \sum_{\substack{\xn  \not \in \Tx}} \px^n(\xn) + P_M(0)
   + \epsilon
   \nonumber \\
    &= \zeta + 
   2 \Tilde{\zeta} + 
   2 \!\!\!\! \sum_{\substack{\xn  \not \in \Tx}} \px^n(\xn) +\epsilon
   \overset{d}{\leq} \zeta + 2 \Tilde{\zeta} +
   3\epsilon,
\end{align}
for all sufficiently large $n$ and all $\delta>0$, 
where $(a)$ and $(b)$ follow from the triangle inequality, and by defining  
\begin{align}
   \zeta &\deq \!\!\!\!\!\sum_{\substack{\xn  \in \Tx \\ \xhat^n}}   \Bigg | \px^n(\xn) \!\!\!\sum_{m \in [2^{nR}]}  \!\!\!\!\! E_{M|X^n}(m|x^n) \I_{\{\hat{X}^n(m) = \hat{x}^n\}}
    -\!\!\sum_{m \in [2^{nR}]} \!\!\!\! P_{M}(m)  \I_{\{\hat{X}^n(m) = \hat{x}^n\}}
\prevTC^n(\xn|\hat{x}^n)\Bigg|, \nonumber\\
   \eqand \Tilde{\zeta} &\deq \sum_{\xn \in \Tx}
\px^n(\xn) E_{M|X^n}(0|x^n) =  \sum_{\xn \in \Tx} 
\px^n(\xn) \Bigg(1-\sum_{m = 1}^{2^{nR}} E_{M|X^n}(m|x^n)\Bigg),
\end{align}
$(c)$ follows from the conditional typicality argument for all sufficiently large $n$, and finally, $(d)$ follows from the standard typicality argument for all sufficiently large $n$.
The error term $\Tilde{\zeta}$ captures the error induced by not covering the $n$-product posterior test channel. 
We provide the following proposition that bounds this term.
\begin{prop}\label{prop:clserrornotcovering}
For all $\epsilon \in (0,1)$, and for all sufficiently small $\eta, \delta > 0$, and sufficiently large $n$, we have $\EE\sq{ \I_{\curly{\mbox{\normalfont sPMF}}}\Tilde{\zeta}} \leq \epsilon$
if $R > I(X;\Xhat)$.

\end{prop}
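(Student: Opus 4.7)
The plan is to prove Proposition \ref{prop:clserrornotcovering} via a soft-covering argument analogous to those in \cite{cuff2013distributed,atif2022source}. First, I would rewrite
\[
\sum_{m=1}^{2^{nR}} E_{M|X^n}(m|x^n) \;=\; \frac{(1-\varepsilon)}{(1+\eta)\, 2^{nR}\, \px^n(x^n)} \sum_{m=1}^{2^{nR}} Z_m(x^n),
\]
where $Z_m(x^n)\deq \prevTC^n(x^n|\hat{X}^n(m))\, \I_{\curly{\hat{X}^n(m)\in\Txhat}}\,\I_{\curly{x^n\in\mathcal{T}_\delta^{(n)}(X|\hat{X}^n(m))}}$. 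By the independent codeword generation, the random variables $\{Z_m(x^n)\}_{m}$ are i.i.d.\ for each fixed $x^n$.

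Next, I would compute the mean. Since $\pxhat\in\calA(\px,\prevTC)$ enforces $\px(x)=\sum_{\hat{x}}\pxhat(\hat{x})\prevTC(x|\hat{x})$, the factorization $\pxhat^n(\hat{x}^n)\prevTC^n(x^n|\hat{x}^n)=\px^n(x^n)\,\Pi^n(\hat{x}^n|x^n)$ holds for the induced posterior $\Pi^n$. Summing over $\hat{x}^n$ with $\hat{x}^n\in\Txhat$ and $(x^n,\hat{x}^n)$ jointly typical, the conditional typicality lemma delivers
\[
\EE[Z_m(x^n)]\;\geq\; \frac{\px^n(x^n)(1-\epsilon_1)}{1-\varepsilon}\qquad\text{for all }x^n\in\Tx,
\]
with $\epsilon_1\searrow 0$ as $\delta\searrow 0$ and $n\to\infty$. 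Consequently $\EE[\sum_m E_{M|X^n}(m|x^n)]\geq (1-\epsilon_1)/(1+\eta)$.

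The concentration step is the crux. Each $Z_m(x^n)$ is upper bounded by $\px^n(x^n)\,2^{n(I(X;\Xhat)+2\delta)}$, using the joint typicality estimate $\prevTC^n(x^n|\hat{x}^n)/\px^n(x^n)\leq 2^{n(I(X;\Xhat)+2\delta)}$. A Bernstein (or Chernoff) inequality applied to the i.i.d.\ sum $\sum_m Z_m(x^n)$ then yields a lower-tail probability of order $\exp\!\bigl(-c\cdot 2^{n(R-I(X;\Xhat)-2\delta)}\bigr)$, which is doubly exponentially small whenever $R>I(X;\Xhat)$. A union bound over $x^n\in\Tx$, of cardinality at most $2^{n(H(X)+\delta)}$, is easily absorbed by this doubly exponential decay.

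Finally, I would bound $\EE[\I_{\curly{\mbox{sPMF}}}\tilde{\zeta}]$ by conditioning on the good event $\calG\deq\bigcap_{x^n\in\Tx}\curly{\sum_m E_{M|X^n}(m|x^n)\geq (1-\epsilon_2)/(1+\eta)}$. On $\calG$, one has $\tilde{\zeta}\leq \eta/(1+\eta)+\epsilon_2$; on $\calG^c$, use the trivial bound $\tilde{\zeta}\leq 1$ combined with the concentration estimate above. Choosing $\eta,\epsilon_2,\delta$ sufficiently small delivers $\EE[\I_{\curly{\mbox{sPMF}}}\tilde{\zeta}]\leq\epsilon$. The main obstacle is bookkeeping the typicality constants: the parameter $\hat{\delta}=\delta(|\calX|+|\hat{\calX}|)$ must be chosen so that the outer indicator $\I_{\curly{x^n\in\Tx}}$ and the inner conditional typicality indicator $\I_{\curly{x^n\in\mathcal{T}_\delta^{(n)}(X|\hat{X}^n(m))}}$ are simultaneously controlled under a single notion of typicality, and the Bernstein constants must be tuned so that the doubly exponential tail decay strictly beats the typical-set union bound.
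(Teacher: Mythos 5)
Your proposal is correct and follows essentially the same route as the paper's own proof: define i.i.d.\ random variables $Z_m(x^n)$, lower-bound $\EE[Z_m]/\px^n(x^n)$ via the conditional typicality lemma using $\pxhat \in \calA(\px,\prevTC)$, apply a Chernoff-type concentration bound to the normalized sum to get a doubly exponential lower tail, union-bound over $x^n \in \Tx$, and then bound $\EE[\I_{\{\mbox{\normalfont sPMF}\}}\tilde{\zeta}]$ by splitting on the good event. The paper carries out exactly these steps (reusing the Operator Chernoff bound from Lemma~\ref{lem:chernoff} and the estimates already established in the proof of Proposition~\ref{prop:clssubPMF}), so the two arguments coincide up to the bookkeeping of the typicality constants $\delta_1$ vs.\ $2\delta$.
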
 
\begin{proof}
    The proof is provided in Appendix \ref{app:proof:prop:clserrornotcovering}.
\end{proof}

\vspace{10pt}
\noindent\textbf{Step 2: Bounding the error induced by covering}

\noindent We now bound the term $\zeta$, which captures the error induced by covering. Using the triangle inequality, we get
\begin{align}
    \zeta 
     &= \sum_{\substack{ \xn \in \Tx \\ \xhat^n }} \Bigg|\;{\sum_{m \in [2^{nR}]} \round{  \px^n(x^n) E_{M|X^n}(m|x^n) -  P_{M}(m)\prevTC^n(x^n|\hat{x}^n)}\I_{\{\hat{X}^n(m) = \hat{x}^n\}}}\Bigg|, \nonumber\\
    &\overset{}{\leq}\!\! \sum_{ m\in [2^{nR}]}\sum_{\substack{ \xn \in \Tx \\ \xhat^n }}\absolute{\px^n(x^n) E_{M|X^n}(m|x^n) -  P_{M}(m)\prevTC^n(x^n|\hat{x}^n)} \I_{\{\hat{X}^n(m) = \hat{x}^n\}}.
    \nonumber 
\end{align}
We now provide the following proposition that bounds the error term $\zeta$.
\begin{prop}\label{prop:clserrorcovering}
    For all $\epsilon,\eta \in (0,1)$, for all sufficiently small $\delta > 0$, and sufficiently large $n$, we have $\EE\sq{ \I_{\curly{\mbox{\normalfont sPMF}}}{\zeta}} \leq \epsilon$.
\end{prop}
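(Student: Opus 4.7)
The plan is to exploit the explicit structure of the likelihood encoder \eqref{eqn:clssubpmfenc}: for each message $m$, the joint weight $\px^n(\xn)E_{M|\Xn}(m|\xn)$ is proportional to $\prevTC^n(\xn|\hat{X}^n(m))$ restricted to the conditional typical set $\calT_m \deq \calT_\delta^{(n)}(X|\hat{X}^n(m))$, and the induced marginal $P_M(m)$ is essentially that same constant. Consequently the per-message contribution to $\zeta$ is governed purely by how much of the $\prevTC^n(\cdot|\hat{X}^n(m))$-mass escapes $\calT_m$, a quantity the conditional typicality lemma controls uniformly in the codeword.

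First I would collapse the sum over $\xhat^n$ against the indicator $\I_{\{\hat{X}^n(m)=\xhat^n\}}$ and apply the triangle inequality, reducing the bound on $\zeta$ to
\begin{align*}
\zeta \leq \sum_{m\in[2^{nR}]}\sum_{\xn\in \Tx}\bigl|\px^n(\xn)E_{M|\Xn}(m|\xn)-P_M(m)\,\prevTC^n(\xn|\hat{X}^n(m))\bigr|.
\end{align*}
Substituting \eqref{eqn:clssubpmfenc} and writing $\alpha\deq \tfrac{1-\varepsilon}{(1+\eta)2^{nR}}$, and using that $\hat{X}^n(m)\in\Txhat$ deterministically under the pruned distribution, one obtains $\px^n(\xn)E_{M|\Xn}(m|\xn)=\alpha\,\prevTC^n(\xn|\hat{X}^n(m))\,\I_{\{\xn\in\calT_m\}}$. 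Summing over $\xn$ then gives $P_M(m)=\alpha\,q_m$ with $q_m\deq \prevTC^n(\calT_m\mid \hat{X}^n(m))$.

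The central analytic input is the conditional typicality lemma: with the coupled choice $\hat\delta=\delta(|\calX|+|\hat\calX|)$, the set $\calT_m$ is contained in $\Tx$ and carries almost all of the conditional probability uniformly for $\hat{X}^n(m)\in\Txhat$, so there exists $\epsilon'=\epsilon'(\delta,n)\searrow 0$ such that $q_m\geq 1-\epsilon'$ for every realization allowed by the pruning. A triangle-inequality split of $\zeta_m$ on $\xn$ inside versus outside $\calT_m$ then yields
\begin{align*}
\zeta_m \,\leq\, \alpha(1-q_m)q_m+\alpha q_m(1-q_m)\,\leq\, 2\alpha\epsilon',
\end{align*}
where the first piece bounds the mismatch of the weight $P_M(m)$ versus $\alpha$ on $\calT_m$ and the second bounds the tail of the ideal distribution on $\Tx\setminus\calT_m$.

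Summing over $m\in[2^{nR}]$ yields $\zeta \leq 2\cdot 2^{nR}\alpha\epsilon'\leq 2\epsilon'$, which holds deterministically for every codebook realization compatible with $\I_{\{\mathrm{sPMF}\}}=1$, so the expectation passes trivially; choosing $\delta$ small and $n$ large to force $2\epsilon'\leq \epsilon$ then finishes. The main pitfall, rather than the expectation step, is securing this uniform conditional typicality bound: one has to check that $\epsilon'$ is independent of the particular codeword $\hat{X}^n(m)\in\Txhat$, which is exactly what the pruning in the codebook design and the coupled choice $\hat\delta=\delta(|\calX|+|\hat\calX|)$ are engineered to provide.
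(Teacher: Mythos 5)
Your proposal is correct and follows essentially the same route as the paper: substitute the explicit likelihood-encoder and $P_M(m)$ formulas, split the per-message error on $\calT_m=\Txhatcond|_{\hat x^n=\hat X^n(m)}$ versus its complement in $\Tx$, and invoke the conditional typicality lemma uniformly over $\hat X^n(m)\in\Txhat$ to bound the escaped mass by a vanishing $\epsilon'$. The paper carries out the same computation by explicitly averaging over the codeword distribution $\codeDistribution$ and writing the triangle-inequality split as a double sum over $x^n$ and $\texttt{x}^n$, whereas you package it more cleanly as a deterministic per-codeword bound $\zeta_m\leq 2\alpha q_m(1-q_m)$ before noting the expectation is then trivial; the substance is identical.
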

\begin{proof}
    A proof is provided in Appendix \ref{app:proof:prop:clserrorcovering}.
\end{proof}
\noindent Finally, using Propositions \ref{prop:clssubPMF}, \ref{prop:clserrornotcovering}, and \ref{prop:clserrorcovering}, we bound  $\EE\sq{\Xi(\encodern,\decodern)}$, for all $\epsilon \in (0,1)$,  
\begin{align}
    \EE_{\codebook}\sq{\Xi(\encodern,\decodern)} &\overset{}{\leq}\EE_{\codebook}\sq{\I_{\curly{\mbox{\normalfont sPMF}}} 
\Xi(\encodern,\decodern) + \round{1- \I_{\curly{\mbox{\normalfont sPMF}}}}}
\overset{}{\leq}9 \epsilon/2.\nonumber
\end{align}
Since $\EE_{\codebook}\sq{\Xi(\encodern,\decodern)} \leq 9\epsilon/2$, there exists a code $\codebook$ 
such that the associated $\Xi(\encodern,\decodern) \leq  9\epsilon/2$. 
This completes the achievability proof. 

\subsection{Proof of Converse}
\label{subsec:clsconverse}
Let $R$ be an achievable rate. Then  from Definition \ref{def:error_constraint}, given a triple $(\px,
\hat{\calX}, \prevTC)$, for all
$\epsilon>0$, and for all sufficiently large $n$, there exists $(n,\Theta)$ lossy compression protocol with an encoding map $\encodern$ and a decoding map $\decodern$ that satisfy the following constraints:
\begin{equation}
  \Xi(\encodern,\decodern)
  = \norm{P_{X^n\hat{X}^n} -
    P_{\hat{X}^n}
    \prevTC^n}_{\text{TV}}
    \leq \epsilon, \eqand \frac{1}{n}\log \Theta \leq R+\epsilon. 
    \nonumber
\end{equation}
Let $M$ denote the transmitted message. We now provide a lower bound on the rate $R$.  We have the following inequalities:
\begin{align}
    nR &= \log \Theta -n \epsilon \geq H(M) -n \epsilon 
    \geq I(X^n,M) -n \epsilon \nonumber \\
    &\overset{a}{\geq} I(X^n,\Xhat^n) -n \epsilon \nonumber \\
    &\overset{}{\geq} \sum_{i} H(X_i) - \sum_i H(X_i|\hat{X}_{i}) -n \epsilon \nonumber \\
    &= \sum_{i} I(X_i;\hat{X}_i) -n \epsilon \nonumber \\
    &\overset{b}{\geq} nI(X_Q;\hat{X}_Q) - n\epsilon  \nonumber\\
    &\overset{c}{=} nI(P_{\hat{X}_Q},P_{X_Q|\Xhat_Q}) - n\epsilon  \nonumber\\
    &\overset{d}{\geq} nI(P_{\hat{X}_Q},\prevTC) - n\Tilde{\epsilon}(\epsilon) -n\epsilon, 
    \nonumber
\end{align}
where the inequalities are argued as follows:
$(a)$ follows from the data processing inequality,
$(b)$ follows from the convexity of mutual information as the function of varying channel for a fixed source, and by defining $$P_{X_Q\hat{X}_Q} = \sum_i\frac{1}{n}P_{X_i\hat{X}_i} \quad \text{ and noting that } \quad P_{X_Q} = \px,$$
$(c)$ follows from the change of notation of mutual information \cite{csiszar2011information}, and $(d)$ follows from the continuity of mutual information \cite[Theorem 17.3.3]{cover2006elements} and from Lemma \ref{lem:averageSingleletter} (see below) and by defining $\tilde{\epsilon} \deq -2\epsilon\log \frac{4\epsilon^2}{|\calX|^2|\hat{\calX}|}$.
\begin{lemma} \label{lem:averageSingleletter}
The distributions $P_{X^n\hat{X}^n}$ and $P_{\hat{X}^n}\prevTC^n$ satisfy
\[\|P_X - \sum_{\xhat} P_{\hat{X}_Q}(\xhat)\prevTC(\cdot|\xhat) \|_{\normalfont \text{TV}} \leq \|P_{X_Q\hat{X}_Q} - P_{\hat{X}_Q}\prevTC \|_{\normalfont \text{TV}}  \leq 
\|P_{X^n\hat{X}^n} -
    P_{\hat{X}^n}\prevTC^n\|_{\normalfont \text{TV}}.\] 
\end{lemma}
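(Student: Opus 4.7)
The plan is to prove the two inequalities separately, using only the monotonicity of total variation under marginalization together with the triangle inequality for sums of signed measures. No serious obstacle is anticipated; this is essentially a bookkeeping lemma that underpins the converse step $(d)$.

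For the second inequality, I would first observe that by the definition used in the converse, $P_{X_Q\hat{X}_Q} = \frac{1}{n}\sum_{i=1}^n P_{X_i\hat{X}_i}$, and hence $P_{\hat{X}_Q}\prevTC(x,\hat{x}) = \frac{1}{n}\sum_{i=1}^n P_{\hat{X}_i}(\hat{x})\prevTC(x|\hat{x})$. Writing the difference as an average, the triangle inequality gives
\[
\|P_{X_Q\hat{X}_Q} - P_{\hat{X}_Q}\prevTC\|_{\text{TV}} \;\leq\; \frac{1}{n}\sum_{i=1}^n \|P_{X_i\hat{X}_i} - P_{\hat{X}_i}\prevTC\|_{\text{TV}}.
\]
Next, for each $i$, the pair $(X_i,\hat{X}_i)$ is a marginal of $(X^n,\hat{X}^n)$ under $P_{X^n\hat{X}^n}$, and the pair $(X_i,\hat{X}_i)$ is also a marginal of $(X^n,\hat{X}^n)$ under $P_{\hat{X}^n}\prevTC^n$, since $\prevTC^n$ is a product channel and marginalizing over the other coordinates of $\prevTC^n$ collapses them trivially. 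Monotonicity of total variation under marginalization then yields $\|P_{X_i\hat{X}_i} - P_{\hat{X}_i}\prevTC\|_{\text{TV}} \leq \|P_{X^n\hat{X}^n} - P_{\hat{X}^n}\prevTC^n\|_{\text{TV}}$ for every $i$, and averaging this bound preserves it.

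For the first inequality, I would use the fact that, since the source is memoryless, $P_{X_Q}(x) = \frac{1}{n}\sum_i P_{X_i}(x) = \px(x)$, so $\px$ is precisely the $X$-marginal of $P_{X_Q\hat{X}_Q}$. Likewise $\sum_{\hat{x}} P_{\hat{X}_Q}(\hat{x})\prevTC(x|\hat{x})$ is the $X$-marginal of $P_{\hat{X}_Q}\prevTC$. Applying monotonicity of total variation under marginalization once more gives the desired inequality
\[
\|\px - \sum_{\hat{x}} P_{\hat{X}_Q}(\hat{x})\prevTC(\cdot|\hat{x})\|_{\text{TV}} \;\leq\; \|P_{X_Q\hat{X}_Q} - P_{\hat{X}_Q}\prevTC\|_{\text{TV}}.
\]
The only minor care point is to verify the two marginalization claims in the second inequality, namely that marginalizing the product form $P_{\hat{X}^n}\prevTC^n$ down to coordinate $i$ indeed yields $P_{\hat{X}_i}(\hat{x}_i)\prevTC(x_i|\hat{x}_i)$; this follows immediately from summing out coordinates $j \neq i$ in $\prevTC^n(x^n|\hat{x}^n)=\prod_j \prevTC(x_j|\hat{x}_j)$, which contributes $1$ for each $j\neq i$. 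Combining the two inequalities completes the proof.
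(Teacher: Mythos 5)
Your proof is correct and relies on exactly the same two ingredients as the paper's argument: monotonicity of total variation under marginalization and the triangle inequality applied to the average over $i$. The paper merely orders these steps differently, marginalizing each term of the $\frac{1}{n}\sum_i$ decomposition inside the sum and then pulling the average out of the absolute value, whereas you pull the sum outside the norm first via triangle inequality and then marginalize term by term; the two orderings are logically interchangeable.
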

\begin{proof}
The proof is provided in Appendix \ref{app:lem:proof:nlettergeqavg}.
\end{proof}


\noindent So far, we have shown that 
\[
R \in \bigcap_{\epsilon>0} \calI_{\epsilon},
\]
where we have defined for all $\epsilon \geq 0$, 
\[
\calI_\epsilon(\px,\prevTC) \deq \{R: \exists \pxhat \in \calA_{\epsilon}(P_X,W_{X|\hat{X}}) \mbox{ such that } R\geq  I(\pxhat,W_{X|\hat{X}}) -g(\epsilon)\},
\]
\[
\calA_\epsilon(\px,\prevTC) \deq \{\pxhat \in \calP(\hat{\calX}): \|\sum_{\hat{x}} \pxhat(\hat{x}) \prevTC(\cdot|\hat{x}) - \px\|_{\text{TV}} \leq \epsilon\},
\]
$g(\epsilon) \deq \tilde{\epsilon}+\epsilon$.
Lemma \ref{lem:averageSingleletter} ensures that the set $\calA_{\epsilon}$ is non-empty for $\epsilon>0$. Using the continuity of rate regions similar to Lemma \ref{lem:intersectionSets}, we obtain 
$\bigcap_{\epsilon>0} \calI_{\epsilon}=\calI_{0}
$ and $\calA_0$ is non-empty, and hence 
$R \in \calI_0$.
This concludes the converse proof.

\section{Conclusion}
\label{sec:conclusion}


In this work, we explored a new formulation of the lossy quantum source coding problem. The two ingredients that make our formulation different from the standard rate-distortion problem are (i) the usage of a global error criterion to measure the quality of reconstruction, and (ii) the notion of a posterior reference channel defined as a CPTP map acting on the reference of the reconstruction to produce the reference of the source. 
Instead of a single-letter distortion function, a global error criterion measures the error incurred by using the given single-letter posterior channel. The given channel characterizes the nature  of the loss incurred in the encoding and decoding operations. 

As a first main result, we provide a single-letter characterization of the asymptotic performance limit of this source coding problem using the minimal coherent information of the posterior reference map, where the minimization is over all reconstructions. 
Even though the formulation uses a global error criterion, it sheds light on an ``optimistic'' perspective of the lossy source coding theory.
In this regard, our results provide the missing duality pair of the quantum channel coding problem, and also
broadens the framework of performing lossy quantum source compression. 
Investigation of this formulation to other variants of lossy source coding problem can be an interesting research avenue to pursue. Similarly, it would be interesting to explore other techniques of establishing the achievability and converse of this limit.

Subsequently, we considered the quantum-classical (QC) setting and formulated a corresponding lossy QC source coding problem. We provided a single-letter characterization of the asymptotic performance limit of this problem using 
the minimal Holevo information (or the corresponding quantum mutual information) of the posterior classical-quantum (CQ) channel, where the minimization is over all reconstruction distributions (see Theorem \ref{thm:qclossysourcecoding}).
Finally, we performed a correspondingly new formulation for the classical setup, and established the minimal mutual information of the posterior channel as the single-letter characterization of the asymptotic performance limit of the classical source coding problem.

\appendix
\label{QLSC:sec:appdx}
\section{Proof of Lemmas}

\subsection{Proof of Lemma \ref{lem:closenessofPurification}}
\label{proof:lem:closenessofPurification}
Note that
    \begin{align*}
        \ket{\psi_{\rho}} &\deq (I_R\otimes \sqrt{\rho^B}) \ket{\Gamma}^{RB}, \quad
        \ket{\psi_{\sigma}} \deq (I_R\otimes \sqrt{\sigma^B}) \ket{\Gamma}^{RB},
    \end{align*}
    where $\ket{\Gamma}_{RB}$ is the unnormalized maximally entangled pure state:
    $\ket{\Gamma}_{RB} = \sum_{i} \ket{i}_R\ket{i}_B$.
    Consider the fidelity between the canonical purification states $\ket{\psi_{\rho}}$ and $\ket{\psi_{\sigma}}$:
    \begin{align*}
        F(\ket{\psi_{\rho}},\ket{\psi_{\sigma}}) \overset{a}{=}|\bra{\psi_{\rho}}\ket{\psi_{\sigma}}|^2
        &\overset{}{=} |\bra{\Gamma}^{RB}(I_R\otimes \sqrt{\rho^B}\sqrt{\sigma^B}) \ket{\Gamma}^{RB}|^2,\\
        &\overset{b}{=} |\Tr(\sqrt{\rho^B}\sqrt{\sigma^B})|^2
        \overset{c}{\geq} \left(1-\frac{1}{2}\norm{{\rho^B}- {\sigma^B}}_{1}\right)^2 \geq 1-\norm{{\rho^B}- {\sigma^B}}_{1},
    \end{align*}
    where $(a)$ follows from the definition of fidelity for a pure state, $(b)$ follows from the definition of trace, 
    $(c)$ follows from 
    the Power-Størmer inequality \cite[Lemma 4.1]{powers1970free}, i.e., for any positive semi-definite matrices $A$ and $B$, we have
    \[\Tr(A)+\Tr(B) - \norm{A-B}_1 \leq 2 \Tr(\sqrt{A}\sqrt{B}).\]

\subsection{Proof of Lemma \ref{lem:RotationUnitary}}
\label{proof:lem:RotationUnitary}
We first provide the following lemma.
\begin{lemma}[Covering superposition states] \label{lem:coveringSuperposition}
    Consider a finite set $\calU$, and a 
    pair of collections $\{\rho_u\}_{u \in \calU}$ and $\{\sigma_u\}_{u \in \calU}$ where  $\rho_u,\sigma_u \in \calD(\calH_A)$ for all $u \in \calU$.
    Let $\{\Psi_u^\rho\}_{u \in \calU}$ and $ \{\Psi_u^\sigma\}_{u \in \calU} $ acting on $\calD(\calH_{R_1}\tensor\calH_A)$ and $\calD(\calH_{R_2}\tensor\calH_A)$ be some purifications of $\{\rho_u\}_{u \in \calU}$ and $\{\sigma_u\}_{u \in \calU}$, respectively, with $\dim(\calH_{R_1}) \leq \dim(\calH_{R_2})$.
    Then there exists a collection of isometric operators $\{U_r(u)\}_{u\in\calU}$ acting on $\calH_{R_1} \rightarrow \calH_{R_2}$ and phases $\{\delta_u\}$ such that 
    \begin{align}
         F((U_R\tensor I_A)\ket{\tau_\rho}, \ket{\tau_\sigma})  =  F(\ket{\tau_\rho}, (U_R\tensor I_A)^\dagger\ket{\tau_\sigma}) \geq 1-\sum_{u\in \calU} \frac{1}{|\calU|}\|\rho_u - \sigma_u\|_1,
    \end{align}
    where
    \begin{align}
        U_R \deq \sum_{u\in \calU} e^{-i\delta_u} U_r(u) \tensor \ketbra{u}, \quad \ket{\tau_\rho} &\deq \sum_{u\in\calU} \frac{1}{\sqrt{|\calU|}}\ket{\psi_u^\rho}\tensor \ket{u}, \quad
        \ket{\tau_\sigma} \deq \sum_{u\in\calU} \frac{1}{\sqrt{|\calU|}}\ket{\psi^\sigma_u}\tensor \ket{u}.\nonumber
    \end{align}
\end{lemma}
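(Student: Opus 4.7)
The plan is to exploit the orthogonality of the control register $\{\ket{u}\}_{u \in \calU}$ to collapse the fidelity computation into a sum of per-$u$ Uhlmann overlaps, and then apply the fidelity-to-trace-distance bound of Lemma \ref{lem:relationshipTraceFidelity} termwise. The overall approach mirrors the standard Uhlmann-based coherent coupling argument, but with an auxiliary classical register serving as a ``which-index'' marker that suppresses cross terms.

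First I would, for each $u \in \calU$, invoke Uhlmann's theorem together with the assumption $\dim(\calH_{R_1}) \leq \dim(\calH_{R_2})$ to select an isometry $U_r(u): \calH_{R_1}\to\calH_{R_2}$ that maximizes $|\langle \psi_u^\sigma | (U_r(u)\otimes I_A) |\psi_u^\rho\rangle|$, achieving the value $\sqrt{F(\rho_u,\sigma_u)}$. The phase $\delta_u$ is then defined as the complex argument of this inner product, so that $e^{-i\delta_u}\langle \psi_u^\sigma | (U_r(u)\otimes I_A) |\psi_u^\rho\rangle = \sqrt{F(\rho_u,\sigma_u)}$ is real and non-negative. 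Before continuing, I would verify that $U_R = \sum_u e^{-i\delta_u} U_r(u)\otimes \ketbra{u}$ is itself an isometry by computing $U_R^\dagger U_R = \sum_u (U_r(u)^\dagger U_r(u))\otimes \ketbra{u} = I_{R_1}\otimes I_U$, where the phases cancel and the cross terms $\ketbra{u}\ketbra{u'}$ vanish for $u \neq u'$.

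Next I would compute the overlap directly: because the control register kills cross terms,
\begin{align*}
\langle \tau_\sigma | (U_R\otimes I_A) | \tau_\rho\rangle
 = \frac{1}{|\calU|}\sum_{u,u'} e^{-i\delta_u}\langle \psi_{u'}^\sigma | (U_r(u)\otimes I_A) |\psi_u^\rho\rangle\, \langle u'| u\rangle
 = \frac{1}{|\calU|}\sum_u \sqrt{F(\rho_u,\sigma_u)},
\end{align*}
which is real and non-negative. Taking the square yields the pure-state fidelity $F((U_R\otimes I_A)\ket{\tau_\rho},\ket{\tau_\sigma})$. The two asserted fidelities are equal simply because $|\langle \psi|U|\phi\rangle|^2 = |\langle \phi|U^\dagger|\psi\rangle|^2$ for any isometry $U$.

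To finish, I would apply Lemma \ref{lem:relationshipTraceFidelity} per index to obtain $\sqrt{F(\rho_u,\sigma_u)} \geq 1 - \tfrac{1}{2}\|\rho_u-\sigma_u\|_1$, average over $u$, and square, using the elementary inequality $(1-a)^2 \geq 1 - 2a$ with $a = \frac{1}{2|\calU|}\sum_u \|\rho_u-\sigma_u\|_1$. This delivers the claimed bound $1-\sum_u \tfrac{1}{|\calU|}\|\rho_u-\sigma_u\|_1$. There is no real obstacle here; the only subtlety is bookkeeping with the phases $\delta_u$ so that the pointwise Uhlmann overlaps add constructively rather than exhibiting destructive interference, which is precisely why the phase freedom in the definition of $U_R$ is needed.
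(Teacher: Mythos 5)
Your proof is correct and follows essentially the same Uhlmann-plus-phase-alignment argument as the paper: select $U_r(u)$ by Uhlmann's theorem to realize $\sqrt{F(\rho_u,\sigma_u)}$ per index, choose $\delta_u$ to make each contribution real and positive, note that the control register kills cross terms, then apply Lemma \ref{lem:relationshipTraceFidelity} termwise and the elementary square bound. The only thing you add is the explicit verification that $U_R$ is an isometry, which the paper leaves implicit, but otherwise the two proofs coincide.
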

\begin{proof}
    We provide a proof in Appendix \ref{proof:lem:coveringSuperposition}.
\end{proof}

Now, with the intention of employing the above lemma
we perform the following identification. Identify $ \calU$ with $\calM\times\calK$, $\rho_u$ with $\hat{\rho}_{m,k}^{\bB}$,  $\sigma_u$ with $\tilde{\rho}_{m,k}^{\bB}$, $\ket{\psi_u^\rho}$ with $\frac{(I\tensor M_{m,k})}{\sqrt{\lambda_{m,k}}}|{\psi_{\rho}^{\tensor n}}\rangle^{\bBn B^n}$, and $\ket{\psi_u^\sigma}$ with $\frac{(I\tensor \sqrt{A_{m,k}})}{\sqrt{\delta_{m,k}}}|{\psi_{\rho}^{\tensor n}}\rangle^{\bBn B^n}$. Note that the last two identifications are, in fact, the purifications of $\hat{\rho}_{m,k}^{\bB}$ and $\tilde{\rho}_{m,k}^{\bB}/\Tr (\tilde{\rho}_{m,k}^{\bB})$, respectively as
\begin{align*}
    \Tr_{E}\left(\frac{(I\tensor M_{m,k})}{\sqrt{\lambda_{m,k}}}{\Psi_{\rho_B}^{\tensor n}}\frac{(I\tensor M^\dagger_{m,k})}{\sqrt{\lambda_{m,k}}}\right) = \hat{\rho}_{m,k}^{\bB} , \;\; \Tr_{B}\left(\frac{(I\tensor \sqrt{A_{m,k}})}{\sqrt{\delta_{m,k}}}{\Psi_{\rho_B}^{\tensor n}}\frac{(I\tensor \sqrt{A_{m,k}})}{\sqrt{\delta_{m,k}}}\right) = \frac{\tilde{\rho}_{m,k}^{\bB}}{\Tr(\tilde{\rho}_{m,k}^{\bB})}.
\end{align*}
Using Lemma \ref{lem:coveringSuperposition}, we obtain
\begin{align}
    F&(|{\hat{\sigma}}\rangle^{{\bB}EMK}, (I_{\bB} \tensor U_{\calR} ) |{\tilde{\sigma}}\rangle^{\bB BMK}) \nonumber \\
    & \geq 1 - \frac{1}{(1-\sqrt{\epsilon})|\calM||\calK|}\sum_{m,k}\bigg\|\hat{\rho}_{m,k}^{\bB}-\frac{\tilde{\rho}_{m,k}^{\bB}}{\Tr(\tilde{\rho}_{m,k}^{\bB})}\bigg\|_1 \nonumber\\
    & \geq \frac{1}{(1-\sqrt{\epsilon})|\calM||\calK|}\sum_{m,k}{\Tr{\tilde{\rho}_{m,k}^{\bB}}}  - \frac{1}{(1-\sqrt{\epsilon})|\calM||\calK|}\sum_{m,k}\bigg\|\hat{\rho}_{m,k}^{\bB}-{\tilde{\rho}_{m,k}^{\bB}}\bigg\|_1 \geq 1-4\epsFourRoot,
\end{align}
where the last inequality follows from using the bounds in \eqref{eq:E1_prime} and \eqref{eq:E2_prime}.
\subsection{Proof of Lemma \ref{lem:coveringSuperposition}}
\label{proof:lem:coveringSuperposition}
    Consider the following:
    \begin{align*}
        F(\ket{\tau_\rho}, (U_R\tensor I)^{\dagger}\ket{\tau_\sigma}) 
        &{=}
        \left|\sum_{u\in \calU} \frac{1}{|\calU|} e^{-i\delta_u}\bra{\psi_u^\rho}U_r^{\dagger}(u)\ket{\psi_u^\sigma}\right|^2\\
        &\overset{a}{=}
        \left(\sum_{u\in \calU} \frac{1}{|\calU|} \left|\bra{\psi_u^\rho}U_r^\dagger(u)\ket{\psi_u^\sigma}\right|\right)^2\\
        &\overset{b}{=}
        \left(\sum_{u\in \calU} \frac{1}{|\calU|} \sqrt{F(\rho_u,\sigma_u)}\right)^2\\
        &\overset{c}{\geq}
        \left(1-\frac{1}{2}\sum_{u \in \calU} \frac{1}{|\calU|} \norm{\rho_u -\sigma_u}_1\right)^2
        \geq
        1-\sum_{u \in \calU} \frac{1}{|\calU|} \norm{\rho_u -\sigma_u}_1,
    \end{align*}
    where 
    $(a)$ follows by choosing $\delta_u$ such that $e^{-i\delta_u} \bra{\psi_u^\rho}U_r^\dagger(u)\ket{\psi_u^\sigma} = |\bra{\psi_u^\rho}U_r^{\dagger}(u)\ket{\psi_u^\sigma}|$, $(b)$ follows from Uhlmann's theorem \cite[Theorem 9.2.1]{wilde_arxivBook}, i.e., there exists  some isometry $U_r(u)$ such that $F(\rho_u,\sigma_u) = F(U_r(u)\ket{\psi^{\rho}},\ket{\psi^{\sigma}})$, and $(c)$ follows from Lemma \ref{lem:relationshipTraceFidelity}.

\subsection{Proof of Proposition \ref{prop:ExpectedFourier}} \label{proof:prop:ExpectedFourier}

We begin by defining indexing functions $f^{(m)}:[0,K'_m-1] \rightarrow \calI_\calE^{(m)}$, for each $m\in \calM'$, that uniquely map each element of  the $[0,K'_m-1]$ to the set $\calI_\calE^{(m)}$ in a monotonic fashion. Let $g^{(m)}: \calI_\calE^{(m)}\rightarrow [0,K'_m-1]$ be the inverse of $f^{(m)}$, for each $m\in\calM'$.
Define the transformed vectors corresponding to the collections $\curlys{\chi_k^{(m)}}$ and  $\curlys{\phi_k^{(m)}}$ as
\begin{align}
    |{\hat{\chi}_{s}^{(m)}}\rangle \deq c\sum_{j = 0}^{K_m'-1}e^{\frac{2\pi i js}{K'_m}}|\chi_{f^{(m)}(j)}^{(m)}\rangle \qand     |{\hat{\phi}_s^{(m)}}\rangle \deq c\sum_{j = 0}^{K_m'-1}e^{\frac{2\pi i js}{K'_m}}|\phi_{f^{(m)}(j)}^{(m)}\rangle, \nonumber
\end{align}
for $s \in [0,K_m'-1]$.
It follows from basic algebra that, for all $m \in \calM'$,
\begin{align} 
    \frac{1}{K'_m}\sum_{s= 0}^{K_m'-1} \langle{\hat{\phi}_s^{(m)}}|{\hat{\chi}_s^{(m)}}\rangle  = c^2\sum_{j = 0}^{K_m'-1}\langle\phi_{f^{(m)}(j)}^{(m)}|\chi_{f^{(m)}(j)}^{(m)}\rangle = 
    c^2\sum_{ k\in \calI_\calE^{(m)}} \langle{{\phi}_k^{(m)}}|{{\chi}_k^{(m)}}\rangle.
\end{align}
This implies, for all $m\in \calM'$, there exists at least one value of $s_m \in [0,K_m'-1]$ that follows the inequality: 
\[
e^{i\hat\theta_m} \langle{\hat{\phi}_{s_m}^{(m)}}|{\hat{\chi}_{s_m}^{(m)}}\rangle \geq c^2\sum_{ k\in \calI_\calE^{(m)}} \langle{{\phi}_k^{(m)}}|{{\chi}_k^{(m)}}\rangle, \quad \mbox{for some phase } \hat{\theta}_m.
\]
Observe that,
\begin{align*}
     \langle{\hat{\phi}_{s_m}^{(m)}}|{\hat{\chi}_{s_m}^{(m)}}\rangle = c^2 \sum_{ k\in \calI_\calE^{(m)}} \sum_{ k'\in \calI_\calE^{(m)}} e^{\frac{2\pi i (g^{(m)}(k)-g^{(m)}(k'))s_m}{K_m'}}\langle {\phi}_{k'}^{(m)}| {\chi}_k^{(m)}\rangle,
\end{align*}
for all $m \in \calM'$.
Choosing $\alpha_k^{(m)} = \frac{2\pi g^{(m)}(k) s_m}{K_m'}$ and $\beta_k^{(m)} = \frac{2 \pi g^{(m)}(k) s_m}{K_m'} + \hat\theta_m$, we obtain
\begin{align}
    \frac{1}{M'}\sum_{m\in \calM'} \langle\phi_m|\chi_m\rangle \geq \frac{c^2}{M'}\sum_{m\in \calM'}\sum_{ k\in \calI_\calE^{(m)}} \langle{{\phi}_k^{(m)}}|{{\chi}_k^{(m)}}\rangle = \frac{c^2}{M'}\sum_{m\in \calM'}\sum_{ k\in \calI_\calE^{(m)}} \Tr{\Xi^{(m)}_k \tau_{k}^{(m)}} 
    \geq 1-2\epsFourRoot,\nonumber
\end{align}  
where the equality uses \eqref{eq:avg_packing_error}, and the last inequality uses \eqref{eq:E3_prime} and substitutes the value of $c$.
This means  
\[
 \frac{1}{M'} \mbox{Re} \round{\sum_{m\in \calM'} \langle\phi_m|\chi_m\rangle}\geq 1-2\epsFourRoot,
\]
and consequently,
\begin{align}
    \bigg| \frac{1}{M'}\sum_{m\in \calM'} \langle\phi_m|\chi_m\rangle \bigg| & 
     \geq \bigg[\frac{1}{M'}\mbox{Re}\left( \sum_{m\in \calM'}\langle\phi_m|\chi_m\rangle\right) \bigg]  
     \geq 1-2\epsFourRoot.
\end{align}
This completes the proof.


\subsection{Proof of Lemma \ref{lem:intersectionSets}} \label{proof:lem:intersectionSets}
Here we follow arguments similar to the proof of \cite[Lemma VI.5]{cuff2013distributed}. We begin by defining $\calI'_{\epsilon}$ (removing the relaxation in the rate) as, for all $\epsilon \geq 0$,
\begin{align}
    \calI'_\epsilon \deq  \curly{R: \exists \; \rho^{{A}_R} \in \calS_\epsilon(\rho^{B},\calN_W) \mbox{ such that } R \geq I_c(\calN_W,\rho^{{A}_R})},
\end{align}
and note from \cite{cuff2013distributed} that
\[\bigcap_{\epsilon > 0} \calI_\epsilon \subseteq \mbox{Closure} \left(\bigcap_{\epsilon > 0} \calI'_\epsilon\right). \]
Now we prove the following: 
\begin{align}
     \calS_0(\rho^{B},\calN_W) = \bigcap_{\epsilon > 0} \calS_\epsilon(\rho^{B},\calN_W). \label{eq:defI_equal}
\end{align}
$\calS_0(\rho^{B},\calN_W) \subseteq \bigcap_{\epsilon > 0} \calS_\epsilon(\rho^{B},\calN_W)$ is straightforward. To show the other direction, consider any $\rho_1^{A_R} \in \bigcap_{\epsilon > 0} \calS_\epsilon(\rho^{B},\calN_W)$. This means, for all $\epsilon > 0$,
    \begin{align}
        \| \mathcal{N}_W(\rho_1^{\bA})-\rho^{\bB} \|_1 \leq \epsilon \implies \| \mathcal{N}_W(\rho_1^{\bA})-\rho^{\bB} \|_1 = 0 \implies \mathcal{N}_W(\rho_1^{\bA}) = \rho^{\bB},
    \end{align}
    where the second implication follows from the definition of a metric, and hence $\rho_1^{\bA} \in \calS_0(\rho^{B},\calN_W)$ and \eqref{eq:defI_equal} is true. Observe that, since the intersection of  decreasing sequence of non-empty closed and bounded sets of a compact (finite-dimensional) metric space is non-empty, $\calS_0(\rho^{B},\calN_W)$ is non-empty.
    Therefore, using the continuity of $f(\rho^{\bA}) = I_c(\calN_W,\rho^{\bA})$, and the fact that $\calS_\epsilon$ are decreasing non-empty closed and bounded subsets of a compact (finite-dimensional) metric space gives
    \[ f(\calS_0(\rho^{B},\calN_W)) = \bigcap_{\epsilon>0}f(\calS_\epsilon(\rho^{B},\calN_W)).\]
    Noting that the images $f(\calS_\epsilon(\rho^{B},\calN_W))$ and  $f(\calS_0(\rho^{B},\calN_W))$ characterize the rate regions $\calI'_\epsilon$ and $\calI_0$, respectively, and the fact that $\calI_0$ is closed 
    completes the proof.

\subsection{Proof of Proposition \ref{prop:qcerrornotcovering}}
\label{app:prop:proof:qcnotcoveringerror}
Consider the following inequalities:
    \begin{align}
    \EE_{}\sq{\I_{\curly{\mbox{\normalfont sP}}}\Tilde{\zeta}} &\leq \EE_{}\sq{\Tr{\round{I-\sum_{\xn \in\Txqc}A_{\xn}}\sourcedo^{\tensor n}}} = 1 - \EE_{}\sq{ \ \sum_{\xn \in \Txqc} \Tr{A_{\xn}\sourcedo^{\tensor n}}},\nonumber \\
      &= 1 -   \frac{(1-\varepsilon)}{(1+\eta)} \sum_{\xn \in \Txqc} \frac{\targetpx^n(\xn)}{(1-\varepsilon)}\Tr{\rhotilde_{\xn}},\nonumber \\
      &\overset{a}{=} 1 -   \frac{(1-\varepsilon)}{(1+\eta)} \Tr{\rhotilde} \overset{b}{\leq} 1-\frac{(1-\varepsilon)}{(1+\eta)}(1-2\varepsilon -2\sqrt{\varepsilon}) <\epsilon, \nonumber 
\end{align}
for all sufficiently large $n$ and all sufficiently small $\eta,\delta >0 $, where (a) follows from \eqref{eqn:rhotilde} and (b) follows from \cite[Eq. 28]{wilde_e}. 
This completes the proof of Proposition \ref{prop:qcerrornotcovering}.

\subsection{Proof of Proposition \ref{prop:qcerrorcovering}}
\label{app:prop:proof:qcerrorcovering}
We begin the proof by splitting the term $\zeta$ under the condition $\I_{\curly{\mbox{\normalfont sP}}} = 1$; using triangle inequality, we get $\zeta \leq \zeta_1+\zeta_2$, where 
\begin{align}\label{eqn:qcerrorevent}
    \zeta_1 \deq  \sum_{\xn \in \Txqc}\gamma_{\xn} \Tr{\rhotilde_{\xn}}
\norm{\frac{\rhotilde_{\xn}}{\Tr{\rhotilde_{\xn}}} - \rhotilde_{\xn}}_1, \;  
\zeta_2 \deq \sum_{\xn \in \Txqc} \gamma_{\xn} \Tr{\rhotilde_{\xn}}
\norm{\rhotilde_{\xn} - \calW_{\xn}}_1.
\end{align}
Consider the following inequalities:
    \begin{align}
        \EE_{}\sq{\I_{\curly{\mbox{\normalfont sP}}}\zeta_1} 
&\leq \sum_{m=1}^{2^{nR}} \sum_{\xn \in \Txqc}  \frac{1}{2^{nR}} \frac{(1-\varepsilon)}{(1+\eta)} \EE_{}\sq{\I_{\curly{\Xn(m) = \xn}}} \Tr{\rhotilde_{\xn}}  
\norm{\frac{\rhotilde_{\xn}}{\Tr{\rhotilde_{\xn}}} - \rhotilde_{\xn}}_1\nonumber\\
&\leq \sum_{m=1}^{2^{nR}} \sum_{\xn \in \Txqc} \frac{1}{2^{nR}} \frac{(1-\varepsilon)}{(1+\eta)} \frac{P_{X}^n(\xn)}{(1-\varepsilon)}\round{1-\Tr{\rhotilde_{\xn}}}
\nonumber\\
&= \frac{(1-\varepsilon)}{(1+\eta)} \sq{1 - \sum_{\xn \in \Txqc} \frac{P_{X}^n(\xn)}{(1-\varepsilon)}\Tr{\rhotilde_{\xn}}}\nonumber \\
&\overset{a}{=} \frac{(1-\varepsilon)}{(1+\eta)} \sq{1-\Tr{\rhotilde}} \overset{b}{\leq} \frac{(1-\varepsilon)}{(1+\eta)} (2\varepsilon+2\sqrt{\varepsilon})\leq  \epsilon, \nonumber
    \end{align}
for all sufficiently large $n$ and all sufficiently small $\eta,\delta >0 $, where $(a)$ follows from definition \eqref{eqn:rhotilde}, and $(b)$  follows from \cite[Eq. 28]{wilde_e}. 
Similarly, we now compute
    \begin{align}   \EE_{}\sq{\I_{\curly{\mbox{\normalfont sP}}}\zeta_2} 
&\overset{}{\leq} \sum_{m=1}^{2^{nR}} \sum_{\xn \in \Txqc} \frac{(1-\varepsilon)}{(1+\eta)} \frac{1}{2^{nR}} \frac{P_{X}^n(\xn)}{(1-\varepsilon)} 
\norm{\rhotilde_{\xn} - \calW^{\tensor n}_{\xn}}_1 \nonumber \\
&\overset{}{=}  \frac{(1-\varepsilon)}{(1+\eta)} \sum_{\xn \in \Txqc}  \frac{P_{X}^n(\xn)}{(1-\varepsilon)} 
\norm{\rhotilde_{\xn} - \calW^{\tensor n}_{\xn}}_1 \leq \epsilon \nonumber.
\end{align}
for all sufficiently large $n$ and all sufficiently small $\eta,\delta >0 $, where the last inequality follows from \eqref{eq:differencerhotildeandrhohat}. 
This completes the proof of Proposition \ref{prop:qcerrorcovering}.

\subsection{Proof of Proposition \ref{prop:clssubPMF}}\label{app:proof:prop:clssubPMF}
Recall, the definition of $E_{M|\Xn}(m|x^n)$,\text{ for } $\xn \in \Tx$:
\begin{align*}
   E_{M|X^n}(m|x^n) =  &\sum_{\hat{x}^n \in \Txhat}  \frac{1}{2^{nR}}\frac{(1-\varepsilon)}{(1+\eta)}\frac{\prevTC^n(x^n|\hat{x}^n)}{\px^n(x^n)} 
  \I_{\{x^n\in \Txcond\}}
  \I_{\{\hat{X}^n(m) = \hat{x}^n\}}.
\end{align*}

\noindent Let $D= 2^{n(H(X|\Xhat)-\delta_1)}$, where  $\delta_1(\delta) \searrow 0$ as $\delta \searrow 0$,
and will be specified in the sequel. Define a sequence of $2^{nR}$ IID random variables 
$\curly{Z_m(x^n)}_{m=1}^{2^{nR}}$, for all $\xn \in \Tx$,
\begin{align}\label{def:ZtZ}
    Z_m(x^n) & \deq \sum_{\xhat^n \in 
\Txhat} (1-\varepsilon) \prevTC^n(x^n|\hat{x}^n)  
\I_{\{x^n\in \Txcond\}} 
\I_{\{\hat{X}^n(m) = \hat{x}^n\}}.
\end{align}

\noindent We get the following bound on the expectation of the empirical average of $\{Z_m(x^n)\}_{ \in [2^{n{R}}]}$, for all sufficiently large $n$:
\begin{align}
    \EE\bigg[\frac{1}{2^{nR}}\sum_{m=1}^{2^{nR}} D Z_m(x^n) \bigg]
    &= \frac{D}{2^{nR}}\sum_{m=1}^{2^{nR}}  \sum_{\xhat^n \in 
\Txhat} (1-\varepsilon) \prevTC^n(x^n|\hat{x}^n)  
    \I_{\{x^n\in \Txcond\}} 
    \EE[\I_{\{\hat{X}^n(m) = \hat{x}^n\}}], \nonumber \\
    &\overset{}{=}  \frac{D}{2^{nR}}\sum_{m=1}^{2^{nR}}  \sum_{\xhat^n \in 
\Txhat} P^n_{\Xhat}(\xhat^n) \prevTC^n(x^n|\hat{x}^n)  
    \I_{\{x^n\in \Txcond\}},   
    \nonumber \\
    &\overset{a}{=} D  \sum_{\xhat^n \in \Txhat} P^n_{\Xhat}(\xhat^n) \prevTC^n(x^n|\hat{x}^n)  
    \I_{\{(x^n,\xhat^n) \in \Txxhat\}}
    \I_{\{x^n\in \Txcond\}}, 
    \nonumber \\
    &\overset{b}{\geq} 2^{n(H(X|\Xhat)-\delta_1)} 2^{-n(H(X,\Xhat) + 2\delta_1)} 2^{n(H(\Xhat|X)-\delta_1)} 
    = 2^{-n(I(X;\Xhat)+4\delta_1)}, \label{eq:app:boundexpecationZ}
\end{align}
 where in 
$(a)$ follows from the fact that if $x^n \in \Txcond$ and 
$\xhat^n \in \Txhat$, then 
$(x^n,\xhat^n) \in \Txxhat$, where $\hat{\delta} = \delta(|\calX|+|\hat{\calX}|)$, and $(b)$ follows from the properties of joint typical and conditional typical sequences and $\delta_1(\delta)$ is a function that follows from the characterization of the size of the typical set \cite{cover2006elements}.

\noindent Furthermore, observe that, for all sufficiently large $n$, we have
\begin{align} \label{eq:app:boundonZ}
    D Z_m(x^n) & \leq  2^{n(H(X|\Xhat)-\delta_1)} 2^{-n(H(X|\Xhat)-\delta_1)}(1-\varepsilon) \Bigg(\; \sum_{\xhat^n \in \Txhat}
    \I_{\{\hat{X}^n(m) = \hat{x}^n\}}\Bigg)  \leq   1, 
\end{align}
where the first inequality follows from the properties of joint typical sequences, i.e., if $(\xn,\xhat^n) \in \Txxhat$, then
$\prevTC(\xn|\xhat^n) \leq 2^{-n(H(X|\Xhat)-\delta_1)}$.
From \eqref{eq:app:boundexpecationZ} and \eqref{eq:app:boundonZ}, observe that $\{DZ_m(x^n)\}_{m}$ satisfies the  constraints of Lemma \ref{lem:chernoff} (stated below).
Thus, after applying Lemma (\ref{lem:chernoff}) to $\{DZ_m(x^n)\}_{m}$ for all $\eta \in (0,1)$ and $x^n \in \Tx$, we get
\begin{align}
    \!\!\Pr\left(  Z(x^n) \in 
 \Big[(1-\eta)\EE[ Z(x^n)],(1+\eta)\EE[ Z(x^n)]\Big]\right) \geq 1 \!\! - \!2\exp{\!\!-\frac{\eta^2 2^{n(R - I(X;\Xhat)-4\delta_1)} }{4 }\!\!}, \label{eq:app:chernoofZavg}
\end{align}
where $Z(x^n) \deq \frac{1}{2^{nR}}\sum_{m=1}^{2^{n{R}}} Z_m(x^n) $.
Moreover, using the definition of $Z_{m}(\xn)$ and $E_{M|\Xn}(m|\xn)$, we can simplify the above inequality as follows:
\begin{align}
    \Pr\left(\!\!(1+\eta)P_{X}^n(x^n) \sum_{m=1}^{2^{nR}}  E_{M|X^n}(m|x^n) \!\leq \!(1+\eta)\EE{[Z(x^n)]}\right) \!\! \geq 1 \!\! - \!2\exp{\!\!-\frac{\eta^2 2^{n(R - I(X;\Xhat)-4\delta_1)} }{4 }\!\!}. \label{eqn:app:Zchernoof} 
\end{align}
Now, observe the following bound on  $\EE[Z(x^n)]$
\begin{align}
    \frac{1}{P^n_X(x^n)} \EE[Z(x^n)] &\leq \frac{1}{P^n_X(x^n)} \sum_{\xhat^n} P^n_{\Xhat}(\xhat^n) \prevTC^n(x^n|\hat{x}^n)
    = 1. \nonumber
\end{align}
This simplifies the inequality $\eqref{eqn:app:Zchernoof}$ as, for all $\xn \in \Tx$,
\begin{align}
    \PP\left(\sum_{m =1}^{2^{nR}}  E_{M|X^n}(m|x^n) \leq 1\right)  \geq 1 - 2\exp{\bigg(-\frac{\eta^2 2^{n({R} - I(X,\Xhat)-4\delta_1)} }{4 }\bigg)}. \nonumber
\end{align}
Eventually, using the union bound, we get
\begin{align}
\Pr\left[\; \bigcap_{x^n\in \Tx}\round{\sum_{m =1}^{2^{n{R}}}  E_{M|X^n}(m|x^n) \leq 1}\right] &\geq 1-\!\!\!\sum_{x^n \in \Tx} \Pr\left(\sum_{m =1}^{2^{n{R}}}  E_{M|X^n}(m|x^n) > 1\right),\nonumber \\&
\geq 1 - 2|\Tx|\exp{\!\!-\frac{\eta^2 2^{n(R - I(X;\Xhat)-4\delta_1)} }{4 }\!\!} 
\label{eq:unionProb}.
\end{align}
Thus, if ${R} > I(X;\Xhat) + 4\delta_1$, the second term in the right hand side of (\ref{eq:unionProb}) decays exponentially to zero, and as a result, the probability of the above intersections goes to 1. 
This completes the proof of Proposition \ref{prop:clssubPMF}.

\begin{lemma} \label{lem:chernoff}
Let $\{Z_n\}_{n=1}^N$ be a sequence of N IID random variables bounded between zero and one, i.e., $ Z_n \in [0,1] \quad \forall n \in [N]$, and suppose $\EE\bigg[\frac{1}{N}\sum_{n=1}^{N}Z_n\bigg] = \mu$ be bounded below by a positive constant $\theta$ as $ \mu \geq \theta$ where $\theta \in (0,1)$, then for every $\eta \in (0,1/2)$ and $(1+\eta)\theta < 1$, we can bound the probability that the ensemble average of the sequence $\{Z_n\}_{n=1}^{N}$ lies in $(1\pm \eta)\mu$ as
\begin{align}
    \PP\bigg(\frac{1}{N}\sum_{n=1}^{N}Z_n \in & [(1-\eta)\mu,(1+\eta)\mu] \bigg)  \geq 1 - 2\exp{\bigg(-\frac{N\eta^2 \theta}{4}\bigg)}.
\end{align}
\end{lemma}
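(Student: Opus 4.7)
The plan is to prove Lemma \ref{lem:chernoff} via a standard Chernoff-bound (exponential Markov) argument applied separately to the upper and lower deviation events, followed by a union bound. Writing $S_N \deq \sum_{n=1}^N Z_n$, I would split
\[
\PP\left(\absolute{S_N - N\mu} > \eta N\mu \right) \leq \PP\left(S_N > (1+\eta)N\mu\right) + \PP\left(S_N < (1-\eta)N\mu\right),
\]
and bound each tail individually.

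For the upper tail, the plan is to exploit the fact that the $Z_n$ are IID with $Z_n \in [0,1]$. The key intermediate step is the MGF estimate $\EE[e^{tZ_n}] \leq 1 + \mu(e^t-1) \leq e^{\mu(e^t-1)}$, valid for all $t \geq 0$, which follows from convexity of $z \mapsto e^{tz}$ on $[0,1]$ together with $\EE[Z_n]=\mu$ and the elementary inequality $1+x \leq e^x$. Applying the exponential Markov inequality and optimizing over $t > 0$ (the optimum is $t=\ln(1+\eta)$) yields the multiplicative Chernoff form
\[
\PP\left(S_N \geq (1+\eta)N\mu\right) \leq \exp\!\left(-N\mu\left[(1+\eta)\ln(1+\eta)-\eta\right]\right).
\]
A symmetric argument with $t<0$ (or equivalently applied to $-Z_n$) gives the analogous bound for the lower tail, with exponent $-N\mu[(1-\eta)\ln(1-\eta)+\eta]$.

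The remaining step is to weaken these two exponents into a common, clean form. For $\eta \in (0,1/2)$, both $(1+\eta)\ln(1+\eta)-\eta$ and $(1-\eta)\ln(1-\eta)+\eta$ are bounded below by $\eta^2/4$; this is the standard textbook slackening obtained by Taylor-expanding around $\eta=0$ and checking the inequality is preserved on $(0,1/2)$. Combining this with the hypothesis $\mu \geq \theta$, each tail is at most $\exp(-N\theta\eta^2/4)$, and the union bound then yields the claimed $1-2\exp(-N\eta^2\theta/4)$ lower bound on the probability of the two-sided concentration event.

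I do not anticipate a real obstacle here — the lemma is a routine multiplicative Chernoff bound and every ingredient is classical. The only mildly delicate point is justifying the constant $1/4$ in the exponent uniformly for $\eta \in (0,1/2)$ on \emph{both} tails simultaneously; a slightly cleaner route that avoids the separate analysis of the two tails is to apply Hoeffding's inequality directly, but since the bound is stated in the multiplicative form $(1\pm \eta)\mu$ with a dependence on $\mu$ (not just on the support length $1$), the Chernoff route above is the more natural match and is what I would write up.
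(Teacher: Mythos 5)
Your proof is correct, but it takes a deliberately different route from the one the paper actually uses. The paper's proof is a one-line citation: the lemma is stated as an immediate scalar specialization of the Operator Chernoff Bound of Ahlswede and Winter \cite{ahlswede2002strong}, whose two-sided form with constant $1/4$ in the exponent for $\eta \in (0,1/2)$ matches the statement verbatim (this is in fact where the lemma's specific constant $1/4$ and the restriction $\eta<1/2$ come from). You instead reconstruct the scalar result from first principles via the exponential Markov argument, the convexity-based MGF bound $\EE[e^{tZ}]\le e^{\mu(e^t-1)}$, optimization at $t=\ln(1\pm\eta)$, and then the standard slackenings $(1+\eta)\ln(1+\eta)-\eta\ge\eta^2/3$ and $(1-\eta)\ln(1-\eta)+\eta\ge\eta^2/2$, both of which dominate $\eta^2/4$ on $(0,1/2)$; combined with $\mu\ge\theta$ and a union bound this gives the claim. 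What each approach buys: the citation is shorter and is consistent with the fact that elsewhere in the paper (e.g.\ in Winter's measurement compression) the genuinely operator-valued version of this bound is what is needed, so pointing to the same reference is economical; your argument is self-contained and makes the origin of the $\theta\eta^2/4$ exponent transparent without importing the heavier matrix-concentration machinery for what is, at this point, a purely scalar claim. One small remark: your proof never uses the hypothesis $(1+\eta)\theta<1$; neither does the paper's --- it is there only so that $(1+\eta)\mu$ can sit inside $[0,1]$ when $\mu$ is close to $\theta$, and the bound is trivially true if $(1+\eta)\mu>1$ anyway, so its absence from your argument is not a gap.
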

\begin{proof}
Follows from the Operator Chernoff Bound \cite{ahlswede2002strong}.
\end{proof}
\subsection{Proof of Proposition \ref{prop:clserrornotcovering}}\label{app:proof:prop:clserrornotcovering}
Fix an $\epsilon>0$.
Recall, the definition of $Z_m(\xn)$ for $\xn \in \Tx$ from Appendix \ref{app:proof:prop:clssubPMF},
\begin{align}
    Z_m(x^n) & = \sum_{\xhat^n \in 
\Txhat} (1-\varepsilon) \prevTC^n(x^n|\hat{x}^n)  
\I_{\{x^n\in \Txcond\}} 
\I_{\{\hat{X}^n(m) = \hat{x}^n\}}. \nonumber
\end{align}

\noindent We begin by using the lower bound from (\ref{eq:app:chernoofZavg}) given in Appendix \ref{app:proof:prop:clssubPMF}. 
We have for all sufficiently large $n$:
\begin{align}
\sum_{m =1}^{2^{n{R}}}  E_{M|X^n}(m|x^n)
& =  \left(\frac{1}{1+\eta}\right)\frac{1}{P^{n}_X(x^n)}  \frac{1}{2^{n{R}}}\sum_{m=1}^{2^{n{R}}} Z_m(x^n)\nonumber \\
& \stackrel{w.h.p} \geq  \left(\frac{1-\eta}{1+\eta}\right)\frac{1}{P^{n}_X(x^n)}\EE[Z(x^n)]  \geq \left(\frac{1-\eta}{1+\eta}\right)(1-\epsilon),\label{eq:appxDA_result1}
\end{align}
where the first inequality uses the lower bound  from \eqref{eq:app:chernoofZavg}, which holds true with probability greater than $1-\tau$, where $ \tau \deq 2\exp{\bigg(-\frac{\eta^2 2^{n({R} - I(X,W)-4\delta_1)} }{4}\bigg)} $, and the second inequality uses the following bound on  $\EE[Z(x^n)]$, for all sufficiently large $n$: 
\begin{align}\label{eq:app:lowerboundonZ}
   \frac{1}{\px^n(\xn)} \EE[Z(x^n)] &=  \frac{1}{\px^n(\xn)} \sum_{\xhat^n \in \Txhat} P^n_{\Xhat}(\xhat^n) \prevTC^n(x^n|\hat{x}^n)  
    \I_{\{x^n\in \Txcond\}} \geq (1-\epsilon).
\end{align} 
Using \eqref{eq:app:lowerboundonZ}, we get, with probability greater than $(1-|\Tx|\tau)$ and for all sufficiently large $n$,
\begin{align}
   \widetilde{\zeta} & \leq \!\! \sum_{\xn \in \Tx} 
\px^n(\xn) \Bigg(1 - \round{\frac{1-\eta}{1+\eta}}(1-\epsilon)\Bigg)  \leq 
\frac{2\eta + {\epsilon}(1-\eta)}{1+\eta}.  \nonumber
\end{align}

\noindent Noting that  $\widetilde{\zeta}\cdot\Ipmf\leq 1,$ and using the above result, we have, for all sufficiently large $n$,
\begin{align}\label{eq:SI_ExpecSTilde_Small}
\EE\left[\widetilde{\zeta}\cdot\Ipmf \right] \leq \frac{2\eta + \epsilon(1-\eta)}{1+\eta} + \tau.
\end{align}
Therefore, if $R>I(X;\Xhat) + 4\delta_1$, then $\tau \rightarrow 0$
for all sufficiently large $n$. 
Hence, $\EE\left[\widetilde{\zeta}\cdot\Ipmf\right]$ can be made smaller than $\epsilon$ for all sufficiently large $n$ and sufficiently small $\eta$.
This completes the proof of Proposition \ref{prop:clserrornotcovering}.
\subsection{Proof of Proposition \ref{prop:clserrorcovering}}\label{app:proof:prop:clserrorcovering}
 We begin by writing the PMF $P_M(m)$ for $m \in[2^{nR}]$ and under the condition that $\I_{\curly{\mbox{\normalfont sPMF}}} = 1$.
\begin{align}
    P_M(m) &= \sum_{x^n \in \Tx}\px^n(x^n) E_{M|X^n}(m|x^n),\nonumber \\ 
    &= \sum_{x^n \in \Tx} \sum_{\hat{x}^n \in \Txhat}\!\! \frac{1}{2^{nR}}\frac{(1-\varepsilon)}{(1+\eta)}\prevTC^n(x^n|\hat{x}^n)
  \I_{\{x^n\in \Txcond\}}
  \I_{\{\hat{X}^n(m) = \hat{x}^n\}}.
\end{align}
We now argue that the error term $\zeta$, averaged over the random codebook, can be made arbitrarily small by the following inequalities:
\begin{align*}
&\EE_{\codebook}\sq{\I_{\curly{\mbox{\normalfont sPMF}}} \zeta} \\
    &\overset{}{\leq}  \EE\sq{ \I_{\curly{\mbox{\normalfont sPMF}}} \sum_{ m\in [2^{nR}]}\sum_{\substack{\xhat^n \\ \xn\in\Tx }}\absolute{\px^n(x^n) E_{M|X^n}(m|x^n) -  P_{M}(m)\prevTC^n(x^n|\hat{x}^n)} \I_{\{\hat{X}^n(m) = \hat{x}^n\}}} \\
    &\leq \sum_{m \in [2^{nR}]}\sum_{\substack{\xn\in\Tx \\ \xhat^n }}\codeDistribution(\Xhat^n(m)=\xhat^n)
    \Bigg|\frac{1}{2^{nR}}\frac{(1-\varepsilon)}{(1+\eta)}\prevTC^n(x^n|\hat{x}^n) 
  \I_{\{\hat{x}^n\in \Txhat\}} 
  \I_{\{x^n\in \Txcond\}} \\
  &\hspace{0.7in} -  \round{\sum_{\texttt{x}^n \in \Tx}  \frac{1}{2^{nR}}\frac{(1-\varepsilon)}{(1+\eta)}\prevTC^n(\texttt{x}^n|\hat{x}^n)
  \I_{\{\hat{x}^n\in \Txhat\}} 
  \I_{\{\texttt{x}^n\in \Txcond\}}}\prevTC^n(x^n|\hat{x}^n) \Bigg| \\
  &=\!\!\!\sum_{m \in [2^{nR}]}\sum_{\substack{x^n \in \Tx \\ \hat{x}^n\in \Txhat}} \!\!\!
  \frac{1}{2^{nR}}\frac{1}{(1+\eta)}
    P^n_{\Xhat}(\xhat^n)
  \prevTC^n(x^n|\hat{x}^n) 
  \Bigg|
  \I_{\{x^n\in \Txcond\}} - \!\!\!\!\!\!\sum_{\texttt{x}^n \in \Txcond} \!\!\!\prevTC^n(\texttt{x}^n|\hat{x}^n)\Bigg|\\
  \end{align*}
  \begin{align*}
  &\overset{a} {\leq} \sum_{m \in [2^{nR}]}\sum_{\substack{\hat{x}^n\in \Txhat}} \!\!\!
  \frac{1}{2^{nR}}\frac{1}{(1+\eta)}
    P^n_{\Xhat}(\xhat^n)
   \;  2 \!\!\!\! \sum_{\substack{x^n  \in \Txcond  \\ \texttt{x}^n \not \in \Txcond }}
\prevTC^n(x^n|\hat{x}^n)\prevTC^n(\texttt{x}^n|\hat{x}^n)\\
  &\overset{}{\leq}\sum_{\substack{m \in [2^{nR}]\\ \hat{x}^n\in \Txhat}} \!\!\!
  \frac{1}{2^{nR}}\frac{1}{(1+\eta)}
  P^n_{\Xhat}(\xhat^n)
  \; 2\!\!\!\!\sum_{\substack{\texttt{x}^n \not \in \Txcond}}\prevTC^n(\texttt{x}^n|\hat{x}^n)\\
  &\overset{b}{\leq}\sum_{m \in [2^{nR}]} \frac{1}{2^{nR}} \frac{(1-\varepsilon)}{(1+\eta)} 
 \round{\; \sum_{\hat{x}^n\in \Txhat}  \frac{1}{(1-\varepsilon)} P^n_{\Xhat}(\xhat^n)}
   2\epsilon
\leq  2\epsilon,
\end{align*}
for all sufficiently large $n$ and all $\eta,\delta>0$, where $(a)$ follows by splitting the summation over $x^n \in \Tx$ as summation over $\{x^n \in \Txcond\} \cap \{\xn \in \Tx\}$ and $\{x^n \not \in \Txcond\} \cap \{\xn \in \Tx\} $ 
and $(b)$ follows from the standard conditional typicality argument.
This completes the proof of Proposition \ref{prop:clserrorcovering}.
\subsection{Proof of Lemma \ref{lem:averageSingleletter}}\label{app:lem:proof:nlettergeqavg}
 Consider the following inequalities:
\begin{align}
    \|P_{X^n\hat{X}^n}(x^n,\hat{x}^n) &-
    P_{\hat{X}^n}(\hat{x}^n) \prevTC^n(x^n|\hat{x}^n)
    \|_{\normalfont\text{TV}} \nonumber \\
    &= \frac{1}{2}\sum_{x^n,  \hat{x}^n} \Big(\sum_i \frac{1}{n}\Big) \Big|P_{X^n\hat{X}^n}(x^n,\hat{x}^n) -
    P_{\hat{X}^n}(\hat{x}^n)\prevTC^n(x^n|\hat{x}^n)\Big|\nonumber \\
    &= \frac{1}{2}\sum_{i , x^n,  \hat{x}^n} \frac{1}{n} \Big|P_{X_i\hat{X}_i}(x_i,\hat{x}_i) P_{X^{n\backslash i}\hat{X}^{n\backslash i}|X_i\hat{X}_i}(x^{n\backslash i},\hat{x}^{n\backslash i}|x_i,\hat{x}_i) \nonumber\\
    & \hspace{1.5in}- 
    P_{\hat{X}_i}(\hat{x}_i)\prevTC(x_i|\hat{x}_i) 
    P_{\hat{X}^{n\backslash i}|\hat{X}_i}(\hat{x}^{n\backslash i}|\hat{x}_i) \Pi_{j\neq i} \prevTC(x_j|\hat{x}_j)\Big| \nonumber \\ 
    &\overset{}{\geq}
    \frac{1}{2} \sum_{i, x_i, \hat{x}_i} \frac{1}{n} \Big|P_{X_i\hat{X}_i}(x_i,\hat{x}_i) - 
    P_{\hat{X}_i}(\hat{x}_i)\prevTC(x_i|\hat{x}_i)
    \Big|
    \nonumber \\ 
    &\overset{}{\geq}
    \frac{1}{2} \sum_{x, \hat{x}} \Big|\sum_i  \frac{1}{n} P_{X_i\hat{X}_i}(x,\hat{x}) - \sum_i  \frac{1}{n}
    P_{\hat{X}_i}(\hat{x})\prevTC(x|\hat{x}) 
    \Big|
    \nonumber \\ 
    &= \|P_{X_Q\hat{X}_Q} - P_{\hat{X}_Q}\prevTC\|_{TV} \geq\|P_X - \sum_{\xhat} P_{\hat{X}_Q}(\xhat)\prevTC(\cdot|\xhat) \|_{\normalfont \text{TV}} , \nonumber
\end{align}
where the above inequalities are the consequence of triangle inequality and the fact that $P_{X_Q} = P_X$.

\bibliographystyle{IEEEtran}
\bibliography{references}

\end{document}